	\documentclass[
		paper=a4,11pt,american,pagesize,%
		abstract=true,headinclude=true,headlines=0,footinclude=true,footlines=-5,twoside=semi,%
		DIV=12,%
	]{scrartcl}
	\def\docclass{koma}
	\def\version{arxiv}
	\def\draftmode{false} % hide debug info & notes-to-self/todo etc.
\usepackage[utf8]{inputenc}
\makeatletter
\usepackage{xifthen}

\newcommand\iflipics[2]{\ifthenelse{\equal{\docclass}{lipics}}{#1}{#2}}
\newcommand\ifkoma[2]{\ifthenelse{\equal{\docclass}{koma}}{#1}{#2}}
\newcommand\ifieee[2]{\ifthenelse{\equal{\docclass}{ieee}}{#1}{#2}}
\newcommand\ifsiam[2]{\ifthenelse{\equal{\docclass}{siam}}{#1}{#2}}
\newcommand\ifsiamsingle[2]{\ifthenelse{\equal{\docclass}{siam-single}}{#1}{#2}}
\newcommand\ifmysiam[2]{\ifthenelse{\equal{\docclass}{my-siam}}{#1}{#2}}
\newcommand\ifacm[2]{\ifthenelse{\equal{\docclass}{acm}}{#1}{#2}}
\newcommand\ifdcc[2]{\ifthenelse{\equal{\docclass}{dcc}}{#1}{#2}}
\newcommand\ifspringerjournal[2]{\ifthenelse{\equal{\docclass}{springer-journal}}{#1}{#2}}
\newcommand\iflncs[2]{\ifthenelse{\equal{\docclass}{lncs}}{#1}{#2}}
\ifthenelse{ \equal{\docclass}{lipics} \OR \equal{\docclass}{koma} \OR \equal{\docclass}{ieee} \OR \equal{\docclass}{siam} \OR \equal{\docclass}{my-siam} \OR \equal{\docclass}{acm} \OR \equal{\docclass}{dcc} \OR \equal{\docclass}{springer-journal} \OR \equal{\docclass}{lncs} }{
	\PackageInfo{paper}{Building paper with docclass = \docclass} % all good
}{
	\PackageWarning{paper}{docclass = "\docclass", but must be one of "lipics", "koma", "ieee", "siam", "siam-single", "my-siam", "acm", "dcc", "springer-journal", "lncs"}
}

\newcommand\ifmanuscript[2]{\ifthenelse{\equal{\version}{manuscript}}{#1}{#2}}
\newcommand\ifarxiv[2]{\ifthenelse{\equal{\version}{arxiv}}{#1}{#2}}
\newcommand\ifsubmission[2]{\ifthenelse{\equal{\version}{submission}}{#1}{#2}}
\newcommand\ifproceedings[2]{\ifthenelse{\equal{\version}{proceedings}}{#1}{#2}}
\ifthenelse{ 
	\equal{\version}{manuscript} 
	\OR \equal{\version}{arxiv} 
	\OR \equal{\version}{submission} 
	\OR \equal{\version}{proceedings} 
}{
	\PackageInfo{paper}{Building paper version = \version} % all good
}{
	\PackageWarning{paper}{version = "\version", but must be one of "manuscript", "arxiv", "submission", "proceedings"}
}

\newcommand\ifdraft[2]{\ifthenelse{\equal{\draftmode}{true}}{#1}{#2}}
\ifthenelse{ \equal{\draftmode}{true} \OR \equal{\draftmode}{false} }{
	\PackageInfo{paper}{Building paper with draftmode = \draftmode} % all good
}{
	\PackageWarning{paper}{draftmode = "\draftmode", but must be "true" or "false"}
}

\usepackage[T1]{fontenc}
\ifsiam{
	\usepackage{lmodern}
	\usepackage{slantsc}
}{}
\ifsiamsingle{
	\usepackage{lmodern}
	\usepackage{slantsc}
}{}
\ifmysiam{
	\usepackage{lmodern}
	\usepackage{slantsc}
}{}
\ifkoma{
	\usepackage{lmodern}
	\usepackage{slantsc} % Use EC for bold SC
}{}
\iflipics{
	\usepackage{lmodern}
	\usepackage{slantsc}
}{}
\ifdcc{
	\usepackage{lmodern}
	\usepackage{slantsc}
}{}
\ifspringerjournal{
	\usepackage{lmodern}
	\usepackage{slantsc}
	\usepackage{xcolor}
}{}
\iflncs{
	\usepackage{lmodern}
	\usepackage{slantsc}
	\usepackage{xcolor}
}{}

\usepackage{babel}
\input{ushyphex.tex} % American English hyphenation exceptions

\usepackage{array,multicol,multirow}
\ifieee{
	\usepackage[cmex10]{amsmath,mathtools}
	\usepackage{amsfonts,amssymb}
}{}
\ifkoma{
	\usepackage{amsmath,amsfonts,amssymb,mathtools}
}{}
\iflipics{
	\usepackage{amsmath,amsfonts,amssymb,mathtools}
}{}
\ifsiam{
	\usepackage{amsmath,amsfonts,amssymb,mathtools}
}{}
\ifsiamsingle{
	\usepackage{amsmath,amsfonts,amssymb,mathtools}
}{}
\ifmysiam{
	\usepackage{amsmath,amsfonts,amssymb,mathtools}
}{}
\ifacm{
	\usepackage{mathtools}
}{}
\ifdcc{
	\usepackage{amsmath,amsfonts,amssymb,mathtools}
}{}
\ifspringerjournal{
	\usepackage{amsmath,amsfonts,amssymb,mathtools}
}{}
\iflncs{
	\usepackage{amsmath,amsfonts,amssymb,mathtools}
}{}

\usepackage{mleftright}\mleftright % redefine left and right to not add space around braces
\usepackage{relsize,xspace,booktabs,adjustbox,needspace,pbox,relsize,makecell}
\ifieee{
		
		\usepackage{enumitem}
}{
	\ifspringerjournal{
		\usepackage{enumitem}
		\setlist{topsep=\medskipamount}
	}{
		\usepackage{enumitem}
	}
}
\usepackage{graphicx}

\ifacm{}{
	\usepackage{colonequals}
}

\usepackage{wref}

\ifsiam{
	\usepackage{ltexpprt}
}{}
\ifsiamsingle{
	\usepackage{ltexpprt}
}{}

\newdimen\makeboxdimen
\newcommand\makeboxlike[3][l]{%
\setbox0=\hbox{#2}%
\global\makeboxdimen=\wd0%
\setbox1=\hbox{\makebox[\makeboxdimen][#1]{%
\makebox[0pt][#1]{#3}%
}}%
\ht1=\ht0%
\dp1=\dp0%
\box1%	
}

\newcommand\like[3][c]{%
	\mathchoice{%dis
		\makeboxlike[#1]{%
			\ensuremath{\displaystyle\relax#2}%
		}{%
			\ensuremath{\displaystyle\relax#3}%
		}%
	}{%txt
		\makeboxlike[#1]{%
			\ensuremath{\textstyle\relax#2}%
		}{%
			\ensuremath{\textstyle\relax#3}%
		}%
	}{%sc
		\makeboxlike[#1]{%
			\ensuremath{\scriptstyle\relax#2}%
		}{%
			\ensuremath{\scriptstyle\relax#3}%
		}%
	}{%scsc
		\makeboxlike[#1]{%
			\ensuremath{\scriptscriptstyle\relax#2}%
		}{%
			\ensuremath{\scriptscriptstyle\relax#3}%
		}%
	}
}

\newcommand\plaincenter[1]{%
	\mbox{}\hfill#1\hfill\mbox{}%
}

\ifthenelse{\equal{\docclass}{koma} \OR \equal{\docclass}{my-siam}}{
	\setlength\parindent{1.5em}
	\usepackage[headsepline]{scrlayer-scrpage}
	\pagestyle{scrheadings}
	\clearscrheadfoot
	\AtBeginDocument{%
		\automark[section]{}%
	}
	\ohead{\pagemark}
	\rehead{\mytitle}
	\lohead{\headmark}
	\addtokomafont{caption}{\sffamily\small}
	\addtokomafont{captionlabel}{\sffamily\textbf}
	\setcapmargin{2em}
}{}
\ifmysiam{
	\setcapmargin{1em}
	\setcapindent{0em}
}{}
\ifmysiam{
	\setlength\parskip{0pt}
	\RedeclareSectionCommand[
		beforeskip=-1.25\baselineskip,
		afterskip=0.75\baselineskip,
	]{section}
	\RedeclareSectionCommand[
		beforeskip=-1\baselineskip,
		afterskip=-1.5em,
	]{subsection}
	\RedeclareSectionCommand[
		beforeskip=-1\baselineskip,
		afterskip=-1.5em,
	]{subsubsection}
	\RedeclareSectionCommand[
		beforeskip=-.25\baselineskip,
		indent=1.5em,
		afterskip=-1em,
	]{paragraph}
}{}
\ifdcc{
	\ifproceedings{}{
		\pagestyle{plain}
		\setlength{\footskip}{5ex}
	}
}{}

\AtBeginDocument{%
	\let\mytitle\@title%
}

\ifsiam{
	\usepackage[bibtex-alpha]{url-doi-arxiv}
}{}
\ifsiamsingle{
	\usepackage[bibtex-alpha]{url-doi-arxiv}
}{}
\ifmysiam{
	\usepackage[bibtex-alpha]{url-doi-arxiv}
}{}
\ifkoma{
	\usepackage[bibtex-alpha]{url-doi-arxiv}
}{}
\iflipics{
	\usepackage[bibtex]{url-doi-arxiv}
}{}
\ifdcc{
	\usepackage[bibtex-alpha]{url-doi-arxiv}
}{}
\ifspringerjournal{
	\usepackage[bibtex-alpha]{url-doi-arxiv}
}{}
\iflncs{
	\usepackage[bibtex-alpha]{url-doi-arxiv}
}{}

\let\oldthebibliography\thebibliography
\renewcommand\thebibliography[1]{%
	\oldthebibliography{#1}%
	\pdfbookmark[1]{References}{}%
}

\usepackage{lscape} % sideways figures
\usepackage{rotating}

\ifkoma{
	\usepackage{float}
	\floatstyle{plain}
	\usepackage{newfloat}
	\DeclareFloatingEnvironment[%
			name=Algorithm,%
			placement=thb,%
		]{algorithm}
}{}
\iflipics{
	\usepackage{newfloat}
	\DeclareFloatingEnvironment[%
			name=Algorithm,%
			placement=thb,%
		]{algorithm}
}{}
\ifmysiam{
	\usepackage{newfloat}
	\DeclareFloatingEnvironment[%
			name=Algorithm,%
			placement=thb,%
		]{algorithm}
}

\ifdcc{
	\usepackage{float}
	\usepackage[font={small},labelfont=bf]{caption}
}{}
\ifmysiam{
	\setcounter{topnumber}{3}
	\setcounter{bottomnumber}{3}
	\setcounter{totalnumber}{3}     % 2 may work better
	\setcounter{dbltopnumber}{3}    % for 2-column pages
}{}
\ifsiam{
	\setcounter{topnumber}{3}
	\setcounter{bottomnumber}{3}
	\setcounter{totalnumber}{3}     % 2 may work better
	\setcounter{dbltopnumber}{3}    % for 2-column pages
}{}
\ifsiamsingle{
	\setcounter{topnumber}{3}
	\setcounter{bottomnumber}{3}
	\setcounter{totalnumber}{3}     % 2 may work better
}{}

\usepackage{dcolumn}

\usepackage{wclrscode}

\usepackage{textcomp} % needed for upquote
\usepackage{listings}

\usepackage{tikz}

\usetikzlibrary{positioning,arrows.meta,fit}
\usetikzlibrary{backgrounds,calc,trees,graphs}
\usetikzlibrary{shapes.geometric,shapes.misc}
\usetikzlibrary{decorations,decorations.pathreplacing}

\pgfdeclarelayer{background}
\pgfsetlayers{background,main}

\usetikzlibrary{external}
\tikzsetexternalprefix{pics/externalized/}
\tikzset{
	external/system call={%
		lualatex \tikzexternalcheckshellescape -halt-on-error %
			-interaction=batchmode -jobname "\image" "\texsource"%
	},
}
\tikzset{external/export=false} % Only export manually marked pictures, enable for costly pictures
\iflipics{
	\newtheorem{fact}[theorem]{Fact}

	\newenvironment{proofof}[1]{%
		\begin{proof}[{{Proof of #1{}}}]%
	}{%
		\end{proof}%
	}
}{}
\ifacm{
	\AtEndPreamble{
		\theoremstyle{acmdefinition}
		\newtheorem{remark}[theorem]{Remark}
		\newtheorem{fact}[theorem]{Fact}
	}
	
	\newenvironment{proofof}[1]{%
		\begin{proof}[{{Proof of #1{}}}]%
	}{%
		\end{proof}%
	}
}{}
\ifsiam{
	\newtheorem{remark}{Remark}
	\newenvironment{proofof}[1]{%
		\begin{proof}[{{#1{}}}]%
	}{%
		\end{proof}%
	}
}{}
\ifsiamsingle{
	\newtheorem{remark}{Remark}
	\newenvironment{proofof}[1]{%
		\begin{proof}[{{#1{}}}]%
	}{%
		\end{proof}%
	}
}{}
\iflncs{
	\spnewtheorem{fact}[theorem]{Fact}{\itshape}{}
	\let\orig@endproof\endproof
	\def\endproof{\qed\orig@endproof}
	\newenvironment{proofof}[1]{%
		\begin{proof}[{{#1{}}}]%
	}{%
		\end{proof}%
	}
}{}
\ifthenelse{%
		\equal{\docclass}{lipics} \OR \equal{\docclass}{siam} \OR 
		\equal{\docclass}{siam-single} \OR \equal{\docclass}{acm} \OR
		\equal{\docclass}{lncs}%
}{}{
	\usepackage[amsmath,hyperref,thmmarks]{ntheorem}
	
	\theoremheaderfont{\sffamily\upshape\bfseries}
	\theorembodyfont{\slshape}
	\theoremseparator{:}
	\newtheoremstyle{proofstyle}%
	  {\item[\theorem@headerfont\hskip\labelsep ##1\theorem@separator]}%
	  {\item[\theorem@headerfont\hskip\labelsep ##3\theorem@separator]}
	
	\theorempreskip{\topsep} % new version of ntheorem

	\theoremsymbol{\adjustbox{scale=.8}{$\triangleleft\mkern-1mu$}}
	
	\newtheorem{theorem}{Theorem}[section]
	
	\theoremstyle{plain}
	\theorempreskip{\topsep}

	\newtheorem{lemma}[theorem]{Lemma}
	
	\newtheorem{corollary}[theorem]{Corollary}

	\theoremstyle{plain}
	\theorembodyfont{\upshape}

	\newtheorem{fact}[theorem]{Fact}

	\theoremsymbol{\raisebox{-.25ex}{$\Box$}}
	\qedsymbol{\raisebox{-.25ex}{$\Box$}}
	
	\theoremstyle{proofstyle}
	\theoremheaderfont{\sffamily\slshape}
	\newtheorem{proof}{Proof}

}

\iflipics{
		\newenvironment{thmenumerate}[2][]{%
			\begin{enumerate}[
				label={\textsf{\textbf{\color{darkgray}{\makebox[\widthof{(a)}][c]{\textup{(\alph*)}}}}}},
				ref={\ref{#2}\kern.1em--\kern.1em(\alph*)},
				itemsep=0pt,
				topsep=.5ex,
				leftmargin=1.75em,
				#1
			]%
		}{%
			\end{enumerate}%
		}
}{
	\ifspringerjournal{
		\newenvironment{thmenumerate}[2][]{%
			\begin{enumerate}[
				label={\makebox[\widthof{(a)}][c]{\textup{(\alph*)}}},
				ref={\ref{#2}\kern.1em--\kern.1em(\alph*)},
				itemsep=0pt,
				topsep=\smallskipamount,
				leftmargin=1.75em,
				#1
			]%
		}{%
			\end{enumerate}%
		}
	}{
		
	}
}

\newcommand*\aka{\mbox{a.\hspace{.2ex}k.\hspace{.2ex}a.}\xspace}

\newcommand\Oh{O}

\usepackage{fixmath}

\newcommand{\ESymbol}{\mathbb{E}}

\newcommand{\ProbSymbol}{\ensuremath{\mathbb{P}}}

\DeclarePairedDelimiterXPP\Prob[1]{\ProbSymbol}[]{}{%
	#1%
}
\DeclarePairedDelimiterXPP\E[1]{\ESymbol}[]{}{%
	#1%
}
\DeclarePairedDelimiterXPP\Eover[2]{\ESymbol_{#1}}[]{}{%
	#2%
}
\DeclarePairedDelimiterXPP\ProbIn[2]{\ProbSymbol_{#1}}[]{}{%
	#2%
}
\providecommand{\Prob}{} % hack for syntax highlighting ...
\providecommand{\ProbIn}{} % hack for syntax highlighting ...
\providecommand{\E}{} % hack for syntax highlighting ...
\providecommand{\Eover}{} % hack for syntax highlighting ...

\newcommand{\surroundedmath}[3]{% #1=mathrel/mathbin/etc #2=spacing #3=symbol
	\mathchoice{%display
		#1{#2{#3}#2}%
	}{%text
		#1{#3}%
	}{%script
		#1{#3}%
	}{%scriptsript
		#1{#3}%
	}%
}
\newcommand\rel[1]{\surroundedmath{\mathrel}{\:}{#1}}
\newcommand\wrel[1]{\surroundedmath{\mathrel}{\;}{#1}}
\newcommand\wwrel[1]{\surroundedmath{\mathrel}{\;\;}{#1}}

\iflipics{}{
	\makeatletter
	\let\oldalign\align
	\let\endoldalign\endalign
	\renewenvironment{align}{%
		\begingroup%
		\let\oldhalign\halign
		\def\halign{%
			\let\oldbreak\\%
			\def\nonnumberbreak{\nonumber\oldbreak*}%
			\def\\{%
				\@ifstar{\nonnumberbreak}{\oldbreak}%
			}%
			\oldhalign%
		}
		\oldalign%
	}{%
		\endoldalign%
		\endgroup%
	}
}
\newcommand*\numberthis[1][]{\stepcounter{equation}\tag{\theequation}}

\allowdisplaybreaks[3]

\newcommand\splitaftercomma[1]{%
  \begingroup
  \begingroup\lccode`~=`, \lowercase{\endgroup
    \edef~{\mathchar\the\mathcode`, \penalty0 \noexpand\hspace{0pt plus .25em}}%
  }\mathcode`,="8000 #1%
  \endgroup
}

\def\mydots{\xleaders\hbox to.5em{\hfill.\hfill}\hfill}
\newlength\tmpLenNotations

\ifdraft{%
	\iflipics{
		\usepackage{lineno} 
	}{
		\usepackage[switch]{lineno} 
	}
	\linenumbers
	\overfullrule=6mm
	
	\usepackage[color,notref,notcite]{showkeys}
	\definecolor{refkey}{gray}{.99}
	\colorlet{labelkey}{green!60!black!60}
	
	\usepackage[inline,nolabel]{showlabels}

	\showlabels{cite}
	\showlabels{citealt}
	\showlabels{citealp}
	\showlabels{citet}
	\showlabels{Citet}
	\showlabels{citep}
	\showlabels{citeauthor}
	\showlabels{Citeauthor}
	\showlabels{citefullauthor}
	\showlabels{citeyear}
	\showlabels{citeyearpar}
	\showlabels{wref}
	\showlabels{wpref}
	\showlabels{wtpref}
	\showlabels{wildref}
	\showlabels{wildpageref}
	\showlabels{wildtpageref}
}{}

\iflipics{
	\ifmanuscript{\hideLIPIcs}{}
	\ifarxiv{\hideLIPIcs}{}
	\ifsubmission{}{\nolinenumbers}
}{}

\iflncs{
    \ifsubmission{
        \usepackage[switch]{lineno}
        \linenumbers
    }{}
}{}

\ifdraft{}{%
	\usepackage{microtype}
}

\hypersetup{
	final,
	unicode=true, 
	bookmarks=true,
	bookmarksnumbered=true,
	bookmarksdepth=2,
	bookmarksopen=true,
	breaklinks=true,
	hidelinks,
}

\newsavebox\tmpbox

\ifdcc{
	\renewcommand\paragraph{\@startsection{paragraph}{4}{\parindent}%\z@}%
	                                      {\smallskipamount}%{3.25ex \@plus1ex \@minus.2ex}%
	                                      {-1em}%
	                                      {\normalfont\normalsize\bfseries}}
}{}
\iflipics{
	\let\oldparagraph\paragraph
	\renewcommand\paragraph[1]{%
		\oldparagraph*{#1}
	}
}{
	\let\oldparagraph\paragraph
	\renewcommand\paragraph[1]{%
		\oldparagraph{#1.}
	}
}

\ifmysiam{
	\let\oldsubsection\subsection
	\renewcommand\subsection[1]{%
		\oldsubsection{#1.}%
	}
	\let\oldsubsubsection\subsubsection
	\renewcommand\subsubsection[1]{%
		\oldsubsubsection{#1.}%
	}
}{}
\ifsiam{
	\let\oldsubsection\subsection
	\renewcommand\subsection[1]{%
		\oldsubsection{#1.}%
	}
	\let\oldsubsubsection\subsubsection
	\renewcommand\subsubsection[1]{%
		\oldsubsubsection{#1.}%
	}
}{}
\ifsiamsingle{
	\let\oldsubsection\subsection
	\renewcommand\subsection[1]{%
		\oldsubsection{#1.}%
	}
	\let\oldsubsubsection\subsubsection
	\renewcommand\subsubsection[1]{%
		\oldsubsubsection{#1.}%
	}
}{}

\let\epsilon\varepsilon

\def\myacknowledgements{}
\ifkoma{
	\newcommand\acknowledgements[1]{\def\myacknowledgements{\paragraph{Acknowledgements}#1}}
}{}
\ifieee{
	\newcommand\acknowledgements[1]{\def\myacknowledgements{\section*{Acknowledgement}#1}}
}{}
\ifsiam{
	\newcommand\acknowledgements[1]{\def\myacknowledgements{\section*{Acknowledgement}#1}}
}{}
\ifsiamsingle{
	\newcommand\acknowledgements[1]{\def\myacknowledgements{\section*{Acknowledgement}#1}}
}{}
\ifmysiam{
	\newcommand\acknowledgements[1]{\def\myacknowledgements{\section*{Acknowledgement}#1}}
}{}
\ifdcc{
	\newcommand\acknowledgements[1]{\def\myacknowledgements{\section*{Acknowledgement}#1}}
}{}
\ifacm{
	\newcommand\acknowledgements[1]{\def\myacknowledgements{
		\section*{Acknowledgement}#1%
	}}
}{}
\ifspringerjournal{
	\newcommand\acknowledgements[1]{\def\myacknowledgements{\bmhead{Acknowledgements}#1}}
}{}
\iflncs{
	\newcommand\acknowledgements[1]{\def\myacknowledgements{%
		\begin{credits}
		\subsubsection{\ackname} #1
		\end{credits}
	}}
}{}

\setlist[description]{font=\boldmath}

\makeatother

\usepackage{hyperref}
\usepackage{orcidlink}
\ifacm{
		\title[Short Title]{My Long Paper Title}
}{}
\ifspringerjournal{
	\title[Short Title]{My Long Paper Title}
}{
	\title{Partition-based Simple Heaps}
}

\ifthenelse{\NOT \equal{\docclass}{lipics} \AND \NOT \equal{\docclass}{acm} \AND \NOT \equal{\docclass}{springer-journal} \AND \NOT \equal{\docclass}{lncs} }{
	\newcommand\email[1]{\texttt{#1}}
	\author{%
		Gerth Stølting Brodal%
		\footnote{%
			Aarhus University, Denmark,
		    \email{\{gerth,rysgaard\}@cs.au.dk}
		}
		\,\orcidlink{0000-0001-9054-915X}
		\and
		John Iacono%
		\footnote{%
			Université libre de Bruxelles, Ixelles, Belgium, 
			\email{john.iacono@ulb.be}
		}
		\,\orcidlink{0000-0001-8885-8172}
		\and
		Casper Moldrup Rysgaard\footnotemark[1]
		\,\orcidlink{0000-0002-3989-123X}%
		\and 
		Sebastian Wild%
		\footnote{%
			University of Marburg, Germany, 
			\email{wild\,@\,informatik.uni-marburg.de};
			University of Liverpool, UK
		}
		\,\orcidlink{0000-0002-6061-9177}%
	}
	
	\date{\small\today}
}{}

\acknowledgements{%
	We thank our reviewers for their helpful comments. 
	The second author's work is supported by the Fonds de la Recherche Scientifique --- FNRS.
	The last author is partly supported by EPSRC grant EP/X039447/1.
}

\DeclarePairedDelimiter{\lrParens}{(}{)} %mathtools
\newcommand{\lrSize}[1]{\left| #1 \right|}

\renewcommand{\Oh}[2][*]{%
    \mathcal{O}\ifthenelse{\equal{#1}{*}}{\lrParens*{#2}}{\lrParens[#1]{#2}}%
}
\newcommand{\OhTheta}[2][*]{%
    \Theta\ifthenelse{\equal{#1}{*}}{\lrParens*{#2}}{\lrParens[#1]{#2}}%
}
\newcommand{\OhMega}[2][*]{%
    \Omega\ifthenelse{\equal{#1}{*}}{\lrParens*{#2}}{\lrParens[#1]{#2}}%
}

\let\phi\varphi
\newcommand{\after}[1]{\like{#1}{\overline{#1}}}

\newcommand{\Int}[1]{S_{#1}}
\newcommand{\IntAfter}[1]{\after{S}_{#1}}
\newcommand{\outBase}[2][*]{%
    \mathbf{s}%
    \ifthenelse{\equal{#1}{*}}{\lrParens*{#2}}{\lrParens[#1]{#2}}%
}
\newcommand{\out}[2][*]{%
    \outBase[#1]{\Int{#2}}%
}

\newcommand{\INSERT}{\textrm{\textsc{Insert}}\xspace}
\newcommand{\DECREASEKEY}{\textrm{\textsc{Decrease-Key}}\xspace}
\newcommand{\DELETEMIN}{\textrm{\textsc{Delete-Min}}\xspace}
\newcommand{\INSERTplain}{Insert\xspace}
\newcommand{\DECREASEKEYplain}{Decrease-Key\xspace}
\newcommand{\DELETEMINplain}{Delete-Min\xspace}
\newcommand{\maxzero}[1]{\max\left\{0,\,#1\right\}}
\newcommand{\newkey}{\id{key}}
\newcommand{\Sinv}{S}

\newcommand{\ARXIV}[1]{\ifproceedings{}{#1}}
\newcommand{\LATIN}[1]{\ifproceedings{#1}{}}

\newcommand\CSHeaps{LP\xspace}
\newcommand\CSHeapsLong{lazy partition heap\xspace}

\newcommand\Pot[1]{\phi_{#1}}
\newcommand\PotAfter[1]{\after{\phi}_{#1}}

\begin{document}

\ifacm{}{\maketitle} %

\begin{abstract}
    We introduce a new family of priority-queue data structures: \emph{partition-based simple heaps}. 
    The structures consist of $\Oh{\lg n}$ doubly-linked lists; order is enforced among data in different lists, but the individual lists are unordered. Our structures have amortized $\Oh{\lg n}$ time extract-min and amortized $\Oh{\lg \lg n}$ time insert and decrease-key. 
    
    The structures require nothing beyond binary search over $\Oh{\lg n}$ elements, as well as binary partitions and concatenations of linked lists in natural ways as the linked lists get too big or small. We present three different ways that these lists can be maintained in order to obtain
    the stated amortized running times.
\end{abstract}

\ifacm{%
	\maketitle%
}{}

\ifarxiv{\bigskip}{}

\section{Introduction}
\label{sec:introduction}

The \emph{priority queue} is a fundamental comparison-based abstract data type used in numerous efficient algorithms, such as discrete event simulation, Dijkstra's shortest path algorithm and Prim's minimum spanning tree algorithm.
A priority queue (\aka \emph{heap}) stores a set of elements and has to implement the following operations, where we assume each element has an associated key from a comparison-based universe:

\begin{itemize}

\item $\INSERT(e)$: Add an element $e$ to the the heap and return a pointer $\id{ptr}$ to it.

\item $\DELETEMIN()$: Remove and return the element in heap with smallest key.

\item $\DECREASEKEY(\id{ptr}, \newkey)$: Change the key of the element pointed to by \id{ptr} in the heap  to a smaller key $\newkey$.

\end{itemize}

A standard undergraduate computer-science curriculum discusses binary heaps~\cite{Williams64}, 
which realize all operations in $\Oh{\lg n}$ time%
\footnote{Here and throughout we let $\lg$ denote the binary logarithm.}
when the data structure currently contains $n$ elements (as would a balanced binary search tree).  However, much faster updates for priority queues are possible:
The most efficient heaps such as (strict) Fibonacci heaps~\cite{FredmanTarjan1987,DBLP:journals/talg/BrodalLT25} achieve \emph{constant} (amortized) time for \INSERT and \DECREASEKEY and $\Oh{\lg n}$ time for \DELETEMIN.

While optimal in theory, these \emph{efficient priority queues} have some downsides.
First, their structure and analysis are less intuitive to teach and implement.
Second, the constant factors in their running time and space usage are substantially higher than for binary heaps;
not least because they inherently rely on pointer-based representations of trees, with the ensuing poor locality of reference.

The desire to simplify efficient priority queues and to make them more efficient in practice has resulted in a number of alternative data structures, which we review briefly in Section~\ref{sec:related-work}. 
They all retain somewhat complicated structures and invariants, are nontrivial to analyze, or require $\OhMega{\lg n}$ time for updates.

In this paper, we present simple priority-queue implementations solely based on linear data structures,
showing that simple quicksort-style partitioning of elements into unsorted sets suffices for amortized $\Oh{\lg\lg n}$ time {\INSERT} and {\DECREASEKEY} operations and amortized $\Oh{\lg n}$ time {\DELETEMIN} operations~-- provided it is governed by a carefully chosen \emph{``pivot-forgetting rule''} to keep the number of sets in $\Oh{\lg n}$.
We present several such rules, demonstrating a rich design space. 
This brings us much closer to the efficient priority queues above, but with entirely elementary techniques and simple designs.
As a case in point, for the arguably simplest rule (``\CSHeapsLong{}s''), we give a fully self-contained description including its amortized analysis suitable for teaching in undergraduate algorithms classes in Sections~\ref{sec:partition-based-heaps} and~\ref{sec:llm-heaps}.

\subsection{Previous Work}
\label{sec:related-work}

See \cite{DBLP:conf/birthday/Brodal13} for a full survey of priority queues. 
We begin the history of priority queues with fast \DECREASEKEY\ which begins with the Fibonacci heap \cite{FredmanTarjan1987}. They support constant-amortized time \DECREASEKEY\ and \INSERT, and $\Oh{\lg n}$ amortized-time \DELETEMIN. 
 Since then a variety of other heaps with fast \DECREASEKEY\ have been introduced:

\begin{itemize}

\item The pairing heap \cite{DBLP:journals/algorithmica/FredmanSST86}, which is a self-adjusting structure modeled after splay trees, and which we do not know the asymptotic amortized running time of \DECREASEKEY. 
Only $\Oh{2^{\sqrt{\lg \lg n}}}$ \cite{DBLP:conf/focs/Pettie05} and $\Omega(\lg \lg n)$ \cite{DBLP:journals/jacm/Fredman99} are known. A multipass variant has been shown to have an $\Oh{\lg \lg n \cdot \lg \lg \lg n}$ amortized-time \DECREASEKEY~\cite{DBLP:conf/soda/SinnamonT23}.

\item The strict Fibonacci heap, which has worst-case running times~\cite{DBLP:journals/talg/BrodalLT25}.

\item Chan's quake heaps, invented to be a simpler alternative to Fibonacci heap with $\Oh{1}$ amortized time \DECREASEKEY~\cite{DBLP:conf/birthday/Chan13}.

\item Rank-pairing heaps \cite{DBLP:conf/esa/HaeuplerST09} which are another alternative with $\Oh{1}$ amortized time \DECREASEKEY, but where the structure is closer to the pairing heap.

\item Violation heaps~\cite{DBLP:journals/dmaa/Elmasry10}, 
hollow heaps~\cite{DBLP:journals/talg/HansenKTZ17}, and Fibonacci heaps revisited~\cite{DBLP:journals/corr/KaplanTZ14} are other variants with with $\Oh{1}$ amortized-time \DECREASEKEY.

\item Slim and smooth heaps \cite{DBLP:conf/soda/SinnamonT23a} have $\Oh{\lg \lg n}$ \DECREASEKEY, but are in the pointer-based model with constant indegree.

\end{itemize}

\subsubsection{Lower bounds}

There are two lower bounds for \DECREASEKEY in heaps, one due to Fredman \cite{DBLP:journals/jacm/Fredman99} and the other due to Iacono and \"{O}zkan \cite{DBLP:conf/icalp/IaconoO14}.
Both of these bounds only apply to heap-based priority queues that follow two different particular models, each of which includes pairing heaps, and neither of which applies to our new structures. The Fredman lower bound focuses on the tradeoff between augmented data and the decrease-key time. The Iacono and \"{O}zkan bound focuses on structures that are pointer-based.
Both of these results point to the fact that $\Oh{\lg \lg n}$ is arguably a natural time for heaps, but both only apply to (different) classes of generalized heaps which are quite specific for their adversary arguments to work.

All of the above points to $\Oh{\lg \lg n}$ as the right running time for \DECREASEKEY in a pointer model structure with constant indegree. Slim and smooth heaps achieve this, and pairing heaps are conjectured to. All of the other heaps require either RAM-model tricks or non-constant indegree. For example, Fibonacci heaps need random access into an array of size $\Oh{\lg n}$ in order to combine heaps with the same rank (which is an integer with logarithmic range). One could of course replace this array with a balanced binary search tree, but this would introduce an $\Oh{\log \log n}$ overhead. In~\cite{FredmanTarjan1987}, it is shown how this array can be removed, albeit by using nodes of linear indegree.
 
 \subsubsection{Cache-oblivious inspiration}

We point to the works of Chowdhury and Ramachandran  \cite{DBLP:journals/talg/ChowdhuryR18} and Brodal, Fagerberg, Meyer, and Zeh~\cite{DBLP:conf/swat/BrodalFMZ04} as sources of inspiration as they are devoid of treeology. Both of these, which were designed for the cache-oblivious model, use two sequences of arrays of increasing size; superficially it looks like two copies of our data structure where items are inserted into one set, percolate down, at some point move to the other set, and then percolate up. Being designed for the cache-oblivious model, the lists are stored in arrays. 

\subsubsection{Lazy search trees}
Also related are lazy search trees~\cite{SandlundWild20,SandlundZhang22,RysgaardWild2025}, a data structure that smoothly interpolates 
between priority queues and binary search trees; depending on what queries are used, \INSERT\ can be as fast as in the best priority queues, but arbitrary sorted dictionary queries are supported, as well.
Our structures can be seen as a one-sided instance of the lower two tiers of the original lazy search trees with slightly modified operations.

\subsubsection{Quickheaps}

Other related work takes a more application-focused view of making priority queue implementations fast in practice. While asymptotic performance matters, constant-factor overhead and locality of reference can, for all reasonable input sizes, dwarf the distinction between logarithmic and sublogarithmic complexities.
Quickheaps~\cite{ParedesNavarro2006,NavarroParedes2010} and their ``stronger'' refinements~\cite{NavarroParedesPobleteSanders2011} keep data in a single array and conceptually apply successive quickselect invocations for queries, where past partitioning steps are remembered and skipped.

This works well if updates come in random order; for robustness against adversarial inputs,
stronger quickheaps take inspiration from schemes for balancing binary search trees to ensure good amortized performance in quickheaps: either using randomization or using a form of weight-balanced binary search tree.
Being designed for the much harder problem of keeping an entire binary search tree in shape, the resulting concatenation rules for quickheaps are complicated.
We point out that not only the concatenation rule itself, but also the amortized analysis in~\cite{NavarroParedesPobleteSanders2011} is substantially different from those in this paper.
They also cite favorable performance in the external-memory model as a benefit of quickheaps (in any variant).
Our heaps can also have these same favorable properties.
Note that the authors of quickheaps excluded pointer-based operations, in particular \DECREASEKEY, from their consideration of quickheaps in external memory.

\section{Partition-based Heap Framework}
\label{sec:partition-based-heaps}

Here we describe the generic framework of partition-based heaps common to all three of our data structures in Sections~\ref{sec:llm-heaps}--\ref{sec:exponential}.
We assume that the elements have keys from a totally ordered universe and that elements are ordered with respect to their keys.
If ever duplicate keys are inserted into a partition-based heap, 
standard techniques are used so that all ties are broken consistently.

Partition-based heaps maintain the elements currently stored in the heap
as a partition of $sets$ $S_1, S_2, \ldots S_{\ell}$, for some $\ell$ that is $\Oh{\lg n}$. 
The number of elements currently stored in the heap is $n=|S_1|+\cdots+|S_\ell|$.
While the sets themselves are unordered, if $i<j$, all elements in $S_i$ are less than all elements of $S_j$

Each set $S_i$ is stored as a \emph{doubly-linked list} as constant-time concatenation of sets is vital to all of our structures. For each set, the size $|S_i|$ is explicitly maintained.
As we store all elements in linked lists, we support \emph{stable pointers} to any element inserted.

In addition to the sets and their sizes, 
a partition-based heap stores \emph{pivots} $p_2 \le \cdots \le p_\ell$ which are elements such that $S_i \subseteq [p_i,p_{i+1})$ for $i=1,\ldots,\ell$;
for notational convenience, 
assume $p_1$ and $p_{\ell+1}$ are pivots with keys $-\infty$ and $+\infty$, respectively.
Pivots may be either elements currently or previously in the heap.
See Figure~\ref{fig:partition-heap} for an illustration.

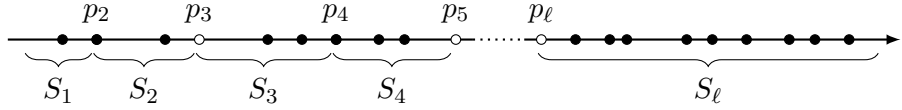
\begin{figure}[t]
    \centering
    \begin{tikzpicture}[scale=0.45]
        \draw[thick, solid] (-0.6,0)--(13,0);
        \draw[thick, dotted] (13,0)--(14.5,0);
        \draw[thick, solid,-latex] (14.5,0)--(25.5,0);
        \foreach \i/\p in {2/2, 3/5, 4/9, 5/12.5, \ell/15} {
            \filldraw[color=black,fill=white,label={above=x}] (\p,0) circle (4pt);
            \node[above] at (\p, 0.2) {$p_\i$};
        }
        \foreach \i/\p in {1,2,4,7,8,9,10.25,11,16,17,17.5,19.25,20,21,22.25,23,24} {
            \filldraw[fill=black] (\p,0) circle (4pt);
        }
        \foreach \i/\l/\r/\s in {1/0/2/1, 2/2/5/2, 3/5/9/3, 4/9/12.5/4, 5/15/25/\ell} {
            \draw [decorate,decoration={brace,amplitude=5pt}] (\r-0.15,-0.4) -- (\l-0.09,-0.4) node[midway,below,below=2mm] {$S_\s$};
        }
    \end{tikzpicture}
     \caption{Illustration of our partition-based heap framework. 
       The pivots $p_2\ldots,p_\ell$ partition the $n$ elements in the heap into sets $\ell$ sets $S_1, S_2, \ldots, S_\ell$.
       Black dots are the $n$ elements in the heap. Pivots can be elements in the heap (black) or elements previously removed from the heap (white). 
       }
     \label{fig:partition-heap}
\end{figure}

To implement each priority queue operation all of our partition-based heaps have the following steps in common:

\begin{itemize}

\item $\INSERT(e)$: 
    Search for $e$ among the $\Oh{\lg n}$ pivots in $\Oh{\lg \lg n}$ time to find the set $S_i$ with $p_i \le e < p_{i+1}$. Insert $e$ into $S_i$ and increment the size $|S_i|$. Return a pointer $\id{ptr}$ to the new node  containing $e$ in the linked list for $S_i$.

\item $\DELETEMIN()$:
    Remove the smallest element from the first nonempty set~$S_i$, and decrement the size~$|S_i|$.

\item $\DECREASEKEY(ptr,\newkey)$: 
    Let $e$ denote the element in the linked-list node $ptr$ points to. Remove this node from its linked list. Binary search on the pivots in  $\Oh{\lg \lg n}$ to find the set $S_i$ that the node was removed from and decrement the size $|S_i|$. Decrease the key of $e$ to $\newkey$, and re-insert the node as in \INSERT, using a second search among the pivots to discover which set to insert it into, insert it at the end of the linked list for the set, and increment the set size.

\end{itemize}

These descriptions are not enough to maintain the crucial invariant that $\ell=\Oh{\log n}$. This is where the three heaps in Sections~\ref{sec:llm-heaps}--\ref{sec:exponential} differ; each has its own invariants towards this goal, and each augments the basic implementations above with restructuring operations which maintain said invariants. All structures use the technique of combining sets in constant time by concatenating linked-lists and splitting sets in linear time using a linear time selection algorithm to maintain their invariants. Further, all structures use amortized analysis to argue for the running times.

At a very high level, the philosophy of each structure is at follows: 
In \emph{lazy partition heaps} (LP heaps) in \wref{sec:llm-heaps}, the sets are allowed to become arbitrarily large, and there is no lower bound on the number of sets. A bound on the size of a set compared to the sum of all previous sets is maintained.
In \emph{Fibonacci heaps: the next generation } (FH:TNG) in \wref{sec:Fibonacci} and the heaps in \wref{sec:exponential}, sets are allowed to be empty and there are strict upper bounds on their size. In FH:TNG, there are lower bounds on non-empty set sizes, where both the upper and lower bounds are given by the Fibonacci numbers, which naturally is amiable to concatenating and splitting while maintaining the size invariants. The heaps in \wref{sec:exponential} do not maintain a lower bound on the set sizes which results in simpler restructuring algorithms.

\section{Lazy Partition Heaps}
\label{sec:llm-heaps}

To keep the number of sets $\ell$ in $\Oh{\lg n}$ at all times,
\emph{\CSHeapsLong{}s} (\CSHeaps heaps) use the following pivot-forgetting rule, which is adapted from lazy search trees~\cite{SandlundWild20}:

\begin{quote}\textsl{A pivot $p$ shall be abandoned if the two sets that it separates together contain fewer elements than there are \emph{smaller} elements in the heap (\wref{fig:M}).}
\end{quote}

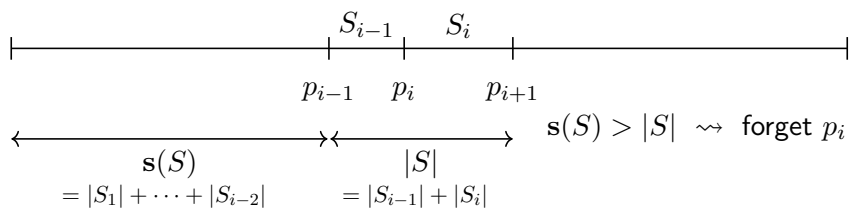
\begin{figure}[tbh]
\plaincenter{
	\begin{tikzpicture}[xscale=11,yscale=3,semithick]
	\sffamily
	
		\draw (0,0) -- (1,0) ;
		\foreach \x/\l in {0.38/{$p_{i-1}$},0.47/{$p_i$},0.6/{$p_{i+1}$},0/{},1/{}} {
			\draw (\x,.05) -- (\x,-0.05) node[below] {\strut\l} ;
		}
        \node at (0.425, 0.1) {$\Int{i-1}$};
        \node at (0.535, 0.1) {$\Int{i}$};
		\draw[<->,shorten >=.5pt] (0,-.4) -- node[below] 
			{$\out{}$} node[below=14pt,scale=.8]
			{${}\mathrel{\llap{=}} \lrSize{\Int{1}}+\cdots+\lrSize{\Int{i-2}}$} 
			++(0.38,0) ;
		\draw[<->,shorten <=.5pt] (.38,-.4) -- node[below] 
			{$\lrSize{\Int{}}$} node[below=14pt,scale=.8]
			{${}\mathrel{\llap{=}} \lrSize{\Int{i-1}}+\lrSize{\Int{i}}$} 
			++(0.22,0) ;
		\node at (0.82,-0.35) {$\out{} > \lrSize{\Int{}} $ \;$\leadsto$\; forget $p_i$} ;
	\end{tikzpicture}
}
\iflncs{\vspace*{-2ex}}{}
\caption{
	The pivot-forgetting rule for sets in \CSHeaps heaps.
	Here, the lists for sets $\Int{i-1}$ and $\Int{i}$ would be concatenated (thus forgetting $p_i$)
	since there are more elements smaller than $p_{i-1}$ than between $p_{i-1}$ and~$p_{i+1}$.
}
\label{fig:M}
\end{figure}

\noindent
In \CSHeaps heaps, every \DELETEMIN operation partitions $\Int{1}$ around its median.
Since no stringent bound on $\lrSize{\Int{1}}$ is guaranteed, we need to amortize the occasional high costs for \DELETEMIN.
For that, sets $\Int{j}$, $j=1,\ldots,\ell$, accumulate potential when they have grown ``too large'',
namely $\Pot{j} \coloneq \maxzero{\lrSize{\Int{j}}-\out{j}}$, where $\out{j} \coloneq \lrSize{\Int{1}}+\cdots+\lrSize{\Int{j-1}}$ denotes the number of elements smaller than all elements in $\Int{j}$ (their ``safety buffer'' from the minimum).

This in particular includes the 
set $\Int{1}$, whose potential $\Pot{1} = |\Int{1}|$ can thus pay for the scanning and partitioning of $S_1$ upon the next \DELETEMIN.

Apart from the sets, we store a (static) sorted array $T$ 
of the current sets $\Int{1},\Int{2},\ldots,\Int{\ell}$ (in this order).
(Whenever the collection of sets changes because of a {\DELETEMIN}, we can afford to rebuild~$T$ from scratch.)
For simplicity of the pseudocode, we let each set $\Int{i}$ store the corresponding pivot $p_i$ as $\attrib{\Int{i}}{p}$.
We point out that $\out{j}$ is only needed during the pivot-forgetting phase and is computed on demand there.

\subsection{Operations}

Pseudo code for the operations on a \CSHeaps heap is shown in \wref{fig:LP-heaps-pseudo-code}.
Throughout the lifetime of an \CSHeaps heap, we maintain the following invariant.

\begin{description}
\item[{Invariant (\Sinv):}]
	The number of set is at most $\ell \le 2\lg n + 1$. 
\end{description}

If the invariant is at risk of violation---which is only possible upon \DELETEMIN---we apply the following concatenation rule (cf.\ \wref{fig:M}).
Recall that $\out{j} \coloneq \lrSize{\Int{1}} + \cdots + \lrSize{\Int{j-1}}$ is the number of \emph{\underline{s}maller elements} (smaller than $p_{j}$, that is):

\begin{description}
\item[Concatenation Rule (C):]
	If \( \lrSize{\Int{j}} + \lrSize{\Int{j+1}} < \out{j}\), concatenate $\Int{j}$ and $\Int{j+1}$.
\end{description}

\begin{figure}

\begin{codebox}
	\Procname{\proc{\CSHeaps-Heap-Insert}($e$)}
	\li $\id{ptr}\gets \proc{New-Node}(e)$ \Comment obtain pointer to element
	\li $\Int{}\gets \attrib{T}{\proc{Find}}(\attrib{e}{key})$ \Comment binary search the pivots
    \li $\attrib{\Int{}}{\proc{Append}}(\id{ptr})$ \Comment insert element
	\li \Return $\id{ptr}$
    \setcounter{codelinenumber}{10}
\end{codebox}
\bigskip
\begin{codebox}
	\Procname{\proc{\CSHeaps-Heap-Decrease-Key}($\id{ptr}$, $\newkey$)}
    \li $e \gets {*}\id{ptr}$ \Comment obtain element from pointer
    \li $\Int{} \gets \attrib{T}{\proc{Find}}(\attrib{e}{key})$
    \li $\attrib{\Int{}}{\proc{Remove}(\id{ptr})}$ \Comment remove element from current set
        \li $\attrib{e}{key} \gets \newkey$ \Comment update key
    \li $\Int{} \gets \attrib{T}{\proc{Find}}(\attrib{e}{key})$
    \li $\attrib{\Int{}}{\proc{Append}}(\id{ptr})$ \Comment insert element in new set
    \setcounter{codelinenumber}{10}
\end{codebox}
\bigskip
\begin{codebox}
	\Procname{\proc{\CSHeaps-Heap-Delete-Min}()}
    \li $\id{min} \gets \attrib{\Int{1}}{\DELETEMIN()}$ \Comment scans $\Int{1}$
    \li $p \gets \proc{Median}(\Int{1})$ \Comment (larger) median (deterministic linear-time selection~\cite{BlumFloydPrattRivestTarjan1973})
    \li ${\IntAfter{1}}, {\IntAfter{2}} \gets \proc{Partition}(\Int{1}, p)$
    \li $\attrib{{\IntAfter{2}}}{p} \gets p$
	\li $\mathcal S \gets [{\IntAfter{1}},{\IntAfter{2}},\Int{2},\ldots,\Int{\ell}]$ \Comment doubly-linked list of sets
    \li $\proc{Forget-Pivots}(\mathcal S)$ \Comment modifies $\mathcal S$
    \li Rebuild $T$ from $\mathcal S$
	\li \Return $\id{min}$
    \setcounter{codelinenumber}{10}
\end{codebox}
\bigskip
\begin{codebox}
    \Procname{\proc{Forget-Pivots}($\mathcal S$)}
	\zi \Comment Assume $\mathcal S=[\Int{1},\ldots, \Int{\ell}]$
    \li $\attrib{\mathcal S}{\proc{Remove-All-Empty-Sets}}()$
    \li $A \gets \Int{1}$
    \li $\outBase{A} \gets 0$
	\li \While $\attrib{\mathcal S}{\proc{Has-Next}}(A)$
	\Do
        \li $B \gets \attrib{\mathcal S}{\proc{Next}}(A)$
		\li \If $\lrSize{A} + \lrSize{B} < \outBase{A}$
		\Then
			\li Concatenate $B$ into $A$
			\li $\attrib{\mathcal S}{\proc{Remove}}(B)$
        \li \Else
            \li $\outBase{B} \gets \outBase{A} + \lrSize{A}$
            \li $A \gets B$
		\EndIf
	\EndWhile \label{li:long-line}
\end{codebox}

\caption{\CSHeaps heaps implementation.}
\label{fig:LP-heaps-pseudo-code}
   
\end{figure}

Concatenation here simply means 
we concatenate the linked lists of elements for the two sets and ``forgets'' the pivot formerly separating the two sets. This operation takes constant time.
\proc{Forget-Pivots} computes $\out{j}$ and applies rule~(C) to $\mathcal S$ successively.
Clearly, \proc{Forget-Pivots} uses $\Oh{\ell}$ time if the (initial) number of sets is~$\ell$. 
Note that \proc{Forget-Pivots} also removes all empty sets present, thereby ensuring that $\lrSize{\Int{j}} \ge 1$ for all $1 \leq j \leq \ell$.
After \proc{Forget-Pivots} has concatenated all pairs of sets that satisfy~(C),
we obtain a bound on the number of sets from the following lemma.

\begin{lemma}[(C) implies (\Sinv)]
\label{lem:logarithmic_set_count}
    Suppose after applying rule (C), we have sets $\Int{1},\ldots, \Int{\ell}$, with $\lrSize{\Int{j}} \ge 1$ for all $1 \leq j \leq \ell$.
    Then $\ell \le 2 \lg n + 1$, for $n$ the total size of the sets.
\end{lemma}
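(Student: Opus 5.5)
After rule (C) has been applied, no adjacent pair satisfies the rule. So for every $j$ with $1\le j<\ell$ we have
\[
\lrSize{\Int{j}}+\lrSize{\Int{j+1}} \ge \out{j}.
\]
Adding $\out{j}$ to both sides turns this into $\out{j+2} \ge 2\out{j}$, since $\out{j+2} = \out{j}+\lrSize{\Int{j}}+\lrSize{\Int{j+1}}$. The plan is to use this doubling of the prefix sums every two steps to bound $\ell$ logarithmically in $n$. Here $\out{\ell+1}$ denotes $n$, the total size.

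First I establish a base value. Every set is nonempty, so $\out{2}=\lrSize{\Int{1}}\ge 1$ and $\out{3}\ge 2$. In general I set $\out{j}\ge 1$ for $j\ge 2$. Applying the doubling inequality repeatedly for $k\ge 0$ gives
\[
\out{2+2k}\ge 2^{k} \quad\text{and}\quad \out{3+2k}\ge 2^{k+1}.
\]
Strictly speaking $\out{3}\ge 2$ comes from $\out{3}=\out{2}+\lrSize{\Int{2}}\ge 2$ rather than from doubling. More simply, $\out{j}\ge 2^{\lfloor (j-2)/2\rfloor}$ for all $2\le j\le \ell+1$.

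The doubling step from $j$ to $j+2$ requires $j+1\le \ell$, which means $j+2\le \ell+1$. So the bound holds all the way up to $\out{\ell+1}=n$. Taking $j=\ell+1$ yields $n\ge 2^{\lfloor(\ell-1)/2\rfloor}\ge 2^{(\ell-2)/2}$, hence $\ell\le 2\lg n+2$.

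This is off by one from the claim, so the main step is tightening the base. I use the odd/even split explicitly:

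\begin{itemize}
\item If $\ell+1$ is odd, then $n=\out{\ell+1}\ge 2^{(\ell+1-3)/2+1}=2^{\ell/2}$.
\item If $\ell+1$ is even, then $n\ge 2^{(\ell-1)/2}$.
\end{itemize}

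Either way $\ell\le 2\lg n+1$, with the small cases $\ell\le 1$ trivial since $n\ge 1$.

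The only delicate point is this bookkeeping of indices and base cases near the ends; the core doubling argument is immediate from the negation of (C).
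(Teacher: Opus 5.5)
Your proof is correct and follows essentially the same route as the paper. Both arguments turn the negation of (C) into doubling every two steps, $\out{j+2}\ge 2\,\out{j}$; the paper writes this as $\Sigma(j)\ge 2\,\Sigma(j-2)$ with $\Sigma(j)=\lrSize{\Int{1}}+\cdots+\lrSize{\Int{j}}=\out{j+1}$, and inducts from two base cases. The paper avoids your parity case split by proving the single bound $\Sigma(j)\ge 2^{(j-1)/2}$, whose base case $\Sigma(2)\ge 2\ge 2^{1/2}$ makes $n=\Sigma(\ell)\ge 2^{(\ell-1)/2}$ give $\ell\le 2\lg n+1$ in one step.
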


\begin{proof}
    We first show that prefixes of sets grow exponentially: for all $1 \leq j \leq \ell$, we have
    $\Sigma(j)\coloneq \lrSize{\Int{1}}+\cdots+\lrSize{\Int{j}} \rel\ge 2^{(j-1)/2}$ $(*)$.
    
    The proof is by induction on $j$. The claims holds for $1 \leq j \leq 2$, since $\lrSize{\Int{i}} \ge 1$ for $1 \leq i \leq \ell$. In the inductive step for $j \ge 3$, the inductive hypothesis implies $\Sigma(j-2) \ge 2^{(j-3)/2}$.  
    By concatenation rule (C), we have $\lrSize{\Int{j-1}} + \lrSize{\Int{j}} \ge \out{j-1}$, and by definition $\out{j-1} = \Sigma(j-2)$. Therefore $\Sigma(j) = \Sigma(j-2) + \lrSize{\Int{j-1}} + \lrSize{\Int{j}} \ge 2 \cdot \Sigma(j-2) \ge 2 \cdot 2^{(j-3)/2} = 2^{(j-1)/2}$, which proves $(*)$.

    Since $(*)$ holds for any prefix of sets, it does so in particular for $j=\ell$. 
    Then, $\Sigma(\ell) = n$ (all elements), and we get
    $\ell \le 2 \lg n + 1$ as claimed.
\end{proof}

It follows from \wref{lem:logarithmic_set_count} that \proc{Forget-Pivots} ensures (\Sinv).
This completes the description of the data structure.

\subsection{Potential}

For analyzing the amortized cost of operations, 
we use the following potential:
\[ 
	\Phi \wrel\coloneq \sum_{j=1}^{\ell} \Pot{j} 
		\qquad\text{with}\qquad
	\Pot{j} \wrel\coloneq \beta \cdot \maxzero{ \lrSize{\Int{j}} - \out{j} } \; . 
\]
(The sum is over all sets currently in the data structure.)
Here $\beta\ge 2$ is a fixed scaling parameter (chosen depending on the constant factor in the linear cost of partitioning).

\subsection{Amortized Cost of Insert}

Upon an \INSERT operation,
the set $\Int{j}$ containing $e$ is located using binary search in $T$ in $\Oh{\lg\lg n}$ time.
Adding $e$ to the linked list of $\Int{j}$ takes constant time.

In $\Phi$, $\out{j'}$ increases by $1$ for all $j'>j$; this never increases the value of $\Phi$. 
The size of $\Int{j}$ grows by $1$ which increases $\Phi$ by at most~$\beta = \Oh{1}$.
\INSERT thus costs $\Oh{\lg \lg n}$ (both amortized and worst case).

\subsection{Amortized Cost of Decrease-Key}

When $\DECREASEKEY(\mathit{ptr}, \newkey)$ is performed, the element $e$ at pointer $\mathit{ptr}$ has its key decreased to $\newkey$. Re-inserting $e$ costs $\Oh{\lg \lg n}$ (both amortized and worst case).
This changes the size of at most two sets, with one increasing and the other decreasing, which increases the potential by at most $\beta$.
It may also change $\out{j'}$ for some sets $\Int{j'}$,
but since we move $e$ from $\Int{i}$ to an $\Int{j}$ with $1 \leq j \leq i$
when we make the key of the element smaller,
$\out{j}$ can only \emph{increase}, 
which does not increase the potential.
In total, \DECREASEKEY uses $\Oh{\lg \lg n + \beta} = \Oh{\lg \lg n} $ time (both amortized and worst case).

\DECREASEKEY may leave a set $\Int{i}$ empty, where $2\leq i \leq \ell$, however, these may be removed upon a later \proc{Forget-Pivots}, without an increase in time or potential, as the number of sets remains bounded, maintaining invariant (\Sinv).

\subsection{Amortized Cost of Delete-Min}

A \DELETEMIN scans $\Int{1}$ to find the minimum element $e$ and removes it;
moreover, we split $\Int{1}$ by partitioning it around its median,
resulting in two new sets ${\IntAfter{1}}$ and~${\IntAfter{2}}$.
For the partition, we use deterministic selection. 
Partitioning $\Int{1}$ has overall actual cost of $\Oh{\lrSize{\Int{1}}}$.

Finally, we establish invariant (\Sinv) by calling \proc{Forget-Pivots}.
By the invariant~(\Sinv) on the number of sets, and since partitioning set $\Int{1}$ creates two new sets, \proc{Forget-Pivots} uses amortized $\Oh[\big]{\lg n}$ actual time.

It remains to analyze the change in potential.
Note first that the concatenation rule (C) is chosen precisely so that when concatenating $\Int{j+1}$ into $\Int{j}$, the contributions $\Pot{j+1}$ and $\Pot{j}$, both before and after the concatenation, are all zero, so concatenating does not change the potential at all.
More specifically, assume that $\lrSize{\Int{j}}~+~\lrSize{\Int{j+1}}~<~\out{j}$, and a concatenation of $\Int{j}$ and $\Int{j + 1}$ is performed to obtain the new set ${\IntAfter{j}}$, by the concatenation rule (C). Then the potential is decreasing by $\Pot{j}$ and $\Pot{j+1}$ and increasing by the new potential ${\PotAfter{j}}$. By definition these potentials are
\[\begin{alignedat}{3}
    \Pot{j} 
    	&\wwrel= \beta \cdot \maxzero{ \lrSize{\Int{j}} - \out{j} } 
    	&&\wwrel= 0 
    \\
    \Pot{j+1}
    	&\wwrel= \beta \cdot \maxzero{ \lrSize{\Int{j+1}} - \lrParens{\out{j} + \lrSize{\Int{j}}} } 
    	&&\wwrel= 0 
    \\
    {\PotAfter{j}}
    	&\wwrel= \beta \cdot \maxzero{ \lrParens{ \lrSize{\Int{j}} + \lrSize{\Int{j+1}}} - \out{j} } 
    	&&\wwrel= 0 \; .
\end{alignedat}\]
The change in potential $\Delta \Phi$ is therefore $0$ upon a concatenation.

\begin{figure}[tbh]
	\centering
	\begin{tikzpicture}[every node/.style={font=\small},semithick,xscale=.77]
        \draw[|-|] (3.05, 0) -- node[below=0.55, anchor=south] {${\IntAfter{1}}$} (3.7, 0);
        \draw[|-|] (3.8, 0) -- node[below=0.55, anchor=south] {${\IntAfter{2}}$} (4.45, 0);
        \draw[|-|] (4.55, 0)  -- node[below=0.55, anchor=south] {$\Int{2}$} (5.6, 0);
        \draw[|-|] (5.7, 0)   -- node[below=0.55, anchor=south] {$\Int{3}$} (8.05, 0);
        \draw[|-|] (8.15, 0)  -- node[below=0.55, anchor=south] {$\Int{4}$} (12.95, 0);
        \draw[|-|] (13.05, 0) -- node[below=0.55, anchor=south] {$\Int{5}$} (18, 0);
        
        \draw[decorate,decoration={brace,amplitude=8pt}] (3.05, 0.25) -- (4.45, 0.25) node[above, midway, yshift=7pt] {\footnotesize $\Int{1}$};
        
        \draw[dashed] (3, 0.2) -- (3, -.5) node[pos=1, fill=white, yshift=-3.2pt,inner sep=2pt] {min};
	\end{tikzpicture}
	\iflncs{\vspace*{-2ex}}{}
    
	\caption{Sketch of the sets of an \CSHeaps heap and the partition of $\Int{1}$ into ${\IntAfter{1}}$ and ${\IntAfter{2}}$ in \DELETEMIN.}
	\label{fig:partition}
\end{figure}

It remains to analyze the influence on the potential of partitioning the 
set $\Int{1}$ into the new sets ${\IntAfter{1}}$ and ${\IntAfter{2}}$.
The situation after partitioning $\Int{1}$ is illustrated in \wref{fig:partition}.
For the set $\Int{1}$, we lose $\Pot{1} = \beta \lrSize{\Int 1}$ and gain ${\PotAfter{1}} + {\PotAfter{2}} = \beta\lrSize{{\IntAfter{1}}}$ from ${\IntAfter{1}}$ and~${\IntAfter{2}}$, as we choose the larger median s.\,t.\ $\mathbf{s}\lrParens{{\IntAfter{2}}} = \lrSize{{\IntAfter{1}}} \ge \lrSize{{\IntAfter{2}}}$, i.e., ${\PotAfter{2}}=0$.
All other sets lose one smaller element, so we gain $\le \ell\cdot \beta$ in potential.
By construction, $\lrSize{{\IntAfter{1}}} \le \frac12 \lrSize{\Int{1}}$ (recall that we delete an element from $\Int{1}$ before partitioning into ${\IntAfter{1}}$ and ${\IntAfter{2}}$),
so $\Delta\Phi \le - \frac\beta2 \lrSize{\Int{1}} + \ell \beta$.
For a sufficiently large $\beta \ge 2$, the total amortized cost is thus
\[
	\Oh{\Delta\Phi + \lrSize{\Int{1}} + \lg n}
	\wwrel=
	\Oh{\lg n} \; .
\]

\subsection{Extensions}

Here we discuss a few further properties of \CSHeaps heaps;
they are not needed to understand (or teach) \CSHeaps heaps.

\subsubsection{Further Operations}

We can support \proc{\CSHeaps-Heap-Find-Min} in $\Oh{1}$ time, by maintaining the minimum upon \proc{\CSHeaps-Heap-\INSERT} and \proc{\CSHeaps-Heap-\DECREASEKEY} in constant time, and by scanning ${\IntAfter{1}}$ after a \proc{\CSHeaps-Heap-\DELETEMIN}, which maintains the amortized time.
Note that $|S_1|$ can be arbitary large, i.e., we cannot afford scanning $S_1$ for each \proc{\CSHeaps-Heap-Find-Min} operation.

We can also support an \proc{\CSHeaps-Heap-\emph{Increase}\/-Key} operation or a general \proc{\CSHeaps-Heap-Delete}(\id{ptr}) operation, at total amortized cost $\Oh{\beta \lg n + \lg \lg n} = \Oh{\lg n}$.
For both operations $S_1$ might become empty, and further 
for general \proc{\CSHeaps-Heap-Delete}(\id{ptr}), $n$ decreases which may violate invariant (\Sinv), and therefore \proc{Forget-Pivots} is needed to re-establish (\Sinv). This also adds $\Oh{\lg n}$ to the cost.

An operation \proc{\CSHeaps-Heap-Build} from a given collection of $n$ elements can be supported in $\Oh{n}$ time.
For that, we simply declare all elements in $\Int{1}$.
The actual cost (if elements are given in a linked list) can even be constant, however we have to pay
for the increase in potential of $\beta n$, so total amortized costs are $\Oh{n}$.

The \proc{\CSHeaps-Heap-Meld} operation of taking the union of two heaps is not easily supported efficiently.
One can trivially create a new \CSHeaps heap from the concatenation of the two input heaps using \proc{\CSHeaps-Heap-Build}, at linear amortized cost in the total number of elements.

\subsubsection{Partitioning}

We can replace the deterministic linear-time selection~\cite{BlumFloydPrattRivestTarjan1973} by randomized selection~\cite{Hoare1961,FloydRivest1975}, leading to the same expected amortized time. 

More importantly, we can also simply run a \emph{single round} of (random) partitioning:
Upon the partition, the potential decreases by $\min \left\{ \lrSize{{\IntAfter{1}}} ,\; \lrSize{{\IntAfter{2}}} \right\}$. By choosing the pivot not as the median, but at random between the elements of~$\Int{1}$, the decrease is therefore \emph{expected} linear in $\lrSize{\Int{1}}$, and the amortized time of \DELETEMIN remains \emph{expected} $\Oh{\lg n}$.

\subsubsection{\CSHeaps Quickheaps}

The \emph{quickheaps layout} (described in more detail in \wref{sec:related-work}) offers an enticing alternative to the linked lists for each set: 
At the expense of more data movement during insertions and deletions, 
we store all sets contiguously in a single array, with the pivots separating them.
The only extra space is a static array $T$ with the $O(\lg n)$ pivot positions.

Concatenations of adjacent sets can still be done in constant time 
(by simply demoting a pivot to an ordinary element),
and partitioning can now be done in place, which improves practical overhead in running time. 
While we can also still binary search the pivots to \emph{find} a set to insert an element into,
insertions or deletions must now be implemented with up to one swap per set, leading to overall $\Theta(\lg n)$ running time for all operations.

\subsubsection{Pointer-Machine Version}

\CSHeaps heaps already stay within the pointer-machine model of computation with bounded indegree nodes except for the index $T$.
By replacing the sorted array and binary search by any balanced binary search tree,
then \CSHeaps heaps is entirely in the model.

\section{Fibonacci Heaps: The Next Generation}
\label{sec:Fibonacci}

\DeclarePairedDelimiter{\nint}\lfloor\rceil

\newcommand{\jafter}[1]{\after{#1}}
\newcommand{\jbefore}[1]{#1}

\newcommand{\downsum}[1]{\uparrow_{#1}}
\newcommand{\potdown}[1]{\phi^{\uparrow}_{#1}}
\newcommand{\potsize}[1]{\phi^{\text{size}}_{#1}}
\newcommand{\potset}[1]{\phi^{\text{nonempty}}_{#1}}
\newcommand{\pot}[1]{\phi_{#1}}
\newcommand{\downsuma}[1]{\jafter{\uparrow}_{#1}}
\newcommand{\potdowna}[1]{\jafter{\phi}^{\uparrow}_{#1}}
\newcommand{\potsizea}[1]{\jafter{\phi}^{\text{size}}_{#1}}
\newcommand{\potseta}[1]{\jafter{\phi}^{\text{nonempty}}_{#1}}
\newcommand{\pota}[1]{\jbefore{\phi}_{#1}}
\newcommand{\downsumb}[1]{\jbefore{\uparrow}_{#1}}
\newcommand{\potdownb}[1]{\jbefore{\phi}^{\uparrow}_{#1}}
\newcommand{\potsizeb}[1]{\jbefore{\phi}^{\text{size}}_{#1}}
\newcommand{\potsetb}[1]{\jbefore{}{\phi}^{\text{nonempty}}_{#1}}
\newcommand{\potb}[1]{\jbefore{\phi}_{#1}}

\newcommand{\figba}[4]{
\begin{figure}[tbp]
\begin{center}
 \includegraphics[width=\linewidth]{#1}\\
 {\Huge $\downarrow$}\\
 \includegraphics[width=\linewidth]{#2}
\end{center}
\caption{#3}
\label{#4}
\end{figure}
}

The next generation Fibonacci heap (FH:TNG)\footnote{The name is meant to reflect that fact that in the original Fibonacci heaps and the the structure here, Fibonacci numbers play crucial, and entirely different, roles.} follows our framework for partition-based simple heaps with but a few distinctive features: Sets can either be present or absent. If a set is present, its size is required to be between the $i$th Fibonacci number, denoted as $F_i$ ($F_1\coloneqq F_2 \coloneqq 1$; $F_i\coloneqq F_{i-1}+F_{i-2}$), and the $i+3$rd, $F_{i+3}$, with the one exception that the first set, $S_3$, has no minimum size. Finally we require that there are at most eight empty sets in a row (\emph{the consecutive empty} invariant) and at most two non-empty sets in a row (\emph{the consecutive nonempty} invariant) to allow room for splits and concatenations.%

\begin{figure}
\includegraphics[width=\linewidth]{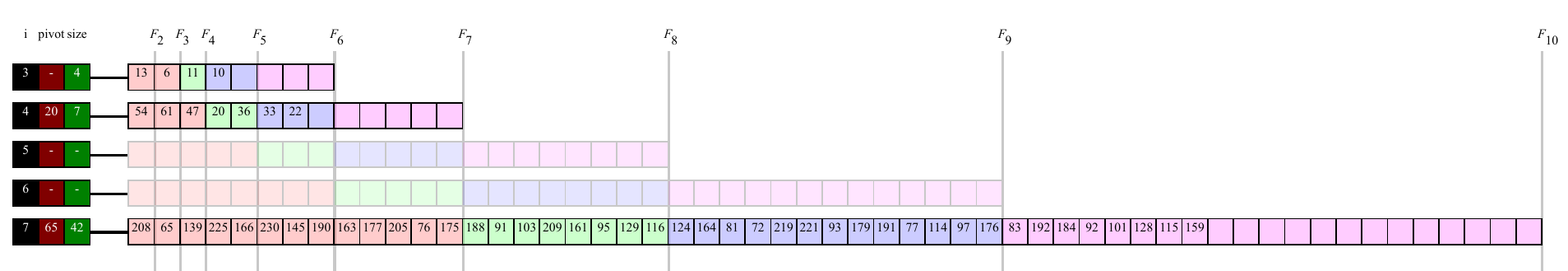}
\caption{Illustration of a next generation Fibonacci heap. While each set is stored as a linked list, this illustration uses colors to show the relationship between each list's size, the Fibonacci numbers, and the size potential function.}
\label{fig:fhtng}
\end{figure}

The implementation of the operations follows our basic framework. The only difference is what to do when the invariants are violated.
We restore them as follows using \emph{invariant-restoring operations}:
If a set $S_i$ has $F_{i+3}$ elements we call it \emph{full} and there are two options: If the set below, $S_{i+1}$ is empty, $S_i$ is simply moved to $S_{i+1}$. We call this a \emph{overflow-down operation}.
If $S_{i+1}$ is not empty, $S_{i+2}$ must be empty by the consecutive nonempty invariant. We concatenate all $F_{i+3}$ elements of $S_i$ to the set $S_{i+1}$ and then remove the $F_{i+3}$ largest elements of the resultant set using linear-time selection and place them in $S_{i+2}$; this leaves the size of $S_{i+1}$ unchanged but the contents will certainly change. We call this an \emph{overflow-thru operation}.
If a set $S_i$ has reached its minimum size $F_i$, it is \emph{underfull}, and we execute either an \emph{underflow-up operation}  or  an \emph{underflow-thru operation} depending on whether $S_{i-1}$ is empty. These are symmetric to the two overflow operations.
If the consecutive-nonempty invariant is violated, there are three nonempty sets in a row with an empty list below. We concatenate the top two of them and place the resultant set in the empty slot below, we call this a \emph{merge-down operation}. 

If the consecutive-empty invariant is violated, there are nine empty sets in a row with a non-empty set below. We take the nonempty set below and split using linear-time selection into two set slots immediately above. We call this a \emph{split-up}, and this takes linear time in the sets involved.
Merge-down takes constant time as it is combining two linked lists, whereas the split-up takes linear time in the size of the sets to perform a selection and traverse the set to split it in two.

The asymmetry of the linear-time split-up with the constant time merge-down is crucial and is deeply connected to the fact that we allow keys to be decreased but not increased! The intuition is that repeated {\DECREASEKEY} operations  can force inexpensive merge-downs to be performed as everyone is slowly moved to bigger sets. For example, this can happen by repeatedly \DECREASEKEY-ing the maximum item to make it the minimum.
 However, split-up operations are only triggered by the more expensive {\DELETEMIN} operations and thus their more expensive cost can be paid for.
 
\ARXIV{Here we present the potential function used in the analysis.}
\LATIN{Here we present the potential function used in the analysis, leaving the straightforward but detailed proofs to the full version.} 
The potential of the data structure is the sum of three components, each of which is defined for each set in the structure:

\begin{description}
\item[Nonempty potential.] 
Define $\potset{i}$ to be 1 if $S_i$ is nonempty and 0 if $S_i$ is empty.  This is used to pay for the merge-down operation where the number of nonempty sets decreases by one at unit cost.

\item[Size potential.] 
Gives potential as $S_i$ nears being full ($F_{i+3}$ elements) or underfull ($F_i$ elements) and 0 potential when its size is in the middle of its allowable range (from $F_{i+1}$ to $F_{i+2}$ elements).  This is the classic potential function used in array doubling and is used to pay for the underflow and overflow operations. Figure~\ref{fig:fhtng} is color-coded to highlight these three ranges.
    \[ \potsize{i}\coloneqq
    \begin{cases}
    0 & \text{if $S_i$ is empty}\\
        F_{i+1}-|S_{i}| & \text{if } F_{i} \leq |S_i| < F_{i+1}  
        \text{ (green in the figure)}
\\        0 & \text{if } F_{i+1} \leq |S_i|\leq F_{i+2}
        \text{ (blue in the figure)}
\\
        |S_i|-F_{i+2} & \text{if } F_{i+2} <|S_i| \leq F_{i+3}    
                \text{ (purple in the figure)}
 \end{cases}\]

\item[Up potential.]
This gives potential to nonempty sets that have few items above. This is used to pay for the split-up operation. The intuition is that this is a mechanism that allows the \DELETEMIN\ operation to pay unit cost to each of $\Theta(\lg n)$ sets, which the sets save up in the case most items above have been removed and a split-up needs to happen.
Let $\downsum{i}$ be defined to be $\sum_{j=1}^{i-1}|S_j|$. Define the up potential of set $S_i$, $\potdown{i}$, as

\[ \potdown{i} \coloneqq 
\begin{cases}
0 & \text{if $i\leq 2$ or $S_i$ is empty}\\
    \maxzero{F_{i-3}-\downsum{i}}
& \text{otherwise}
\end{cases} \]
\end{description}

\ARXIV{ %
The potential of the data structure, $\Phi$, is the sum of the up, size, and set potentials of every set; we refer to this sum as $\pot{i}$: $\Phi \coloneqq \sum_{i=0}^k (\overbrace{\potdown{i}+\potsize{i}+\potset{i}}^{\pot{i}})$.

\paragraph{Facts} 

We now present a series of facts which relating to these potentials.

\begin{fact}\label{f1}
If $S_i$ is nonempty and at least of one of $S_{i-1}$, $S_{i-2}$ $S_{i-3}$ is nonempty then $\potdown{i}=0$.
\end{fact}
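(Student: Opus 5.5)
Fact~\ref{f1} follows directly from the definition of the up potential together with the size invariant on nonempty sets.

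If $i\le 2$ then $\potdown{i}=0$ by definition, so assume $i\ge 3$ with $S_i$ nonempty. Then $\potdown{i}=\maxzero{F_{i-3}-\downsum{i}}$, and it suffices to show $\downsum{i}\ge F_{i-3}$. Since $\downsum{i}=\sum_{j=1}^{i-1}|S_j|$ is a sum of nonnegative terms, it is at least $|S_k|$ for any single $k\in\{i-1,i-2,i-3\}$.

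Choose such a $k$ with $S_k$ nonempty, as the hypothesis allows. If $S_k$ is not the first set, the size invariant gives $|S_k|\ge F_k$. Because $k\ge i-3$ and the Fibonacci numbers are nondecreasing, $F_k\ge F_{i-3}$. Hence $\downsum{i}\ge |S_k|\ge F_k\ge F_{i-3}$, and the maximum with zero vanishes.

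The only real obstacle is the exception for the first set, which has no minimum size. Its index is small (the paper indexes it as $S_3$), so I will handle it separately. If the nonempty witness $S_k$ is that first set, then $k\le 3$, so $i\le k+3\le 6$ and $F_{i-3}\le F_3=2$. A nonempty set contributes at least $1$ to $\downsum{i}$, so this covers $F_{i-3}\le 1$. The boundary case $i=6$ with the witness at index $3$ needs $F_3=2$ elements, which requires an extra argument. I would check it in two ways: first, whether any other set strictly between index $3$ and $i$ is nonempty, which would make that set the witness; second, whether the paper's convention that sets start at index $3$ (so $\downsum{i}$ effectively counts from the first present set) makes the bound hold by the indexing offset. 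If neither works, I would note that the fact is used only through inequalities that tolerate an additive $O(1)$, or that the first-set exception makes $\potdown{}$ at most $1$ there.

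Apart from this edge case the proof is one line: monotonicity of the Fibonacci numbers applied to the lower bound $F_k$ on a nearby nonempty set.
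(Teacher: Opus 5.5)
Your core argument is exactly the paper's proof: choose a nonempty witness $S_k$ with $k\in\{i-1,i-2,i-3\}$ and use ${\uparrow}_i\ge |S_k|\ge F_k\ge F_{i-3}$. The paper asserts that any nonempty set among $S_{i-1},S_{i-2},S_{i-3}$ has at least $F_{i-3}$ items and never mentions the exception for $S_3$. The edge case you isolated is therefore not a hole in your reasoning. It is a genuine counterexample to the statement, and to the paper's one-line proof, so neither of your two proposed checks can succeed.

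Take $S_3=\{x\}$, $S_4=S_5=\emptyset$, and $S_6$ nonempty with, say, $16$ elements, with $S_6$ the last set. Every FH:TNG invariant holds: $|S_6|\in[F_6,F_9]$, only two consecutive empty sets, and no three consecutive nonempty sets. Yet ${\uparrow}_6=|S_1|+\dots+|S_5|=|S_3|=1<2=F_3$, so $\phi^{\uparrow}_6=1\neq 0$. There is no other witness strictly between $S_3$ and $S_6$. The indexing offset does not help either, because $S_1,S_2$ contribute nothing.

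The fix is to turn your fallback into the statement. Under the hypothesis of Fact~\ref{f1}, $\phi^{\uparrow}_i=0$, except when $i=6$ and $S_3$ is the only nonempty set among $S_3,S_4,S_5$. In that case only $\phi^{\uparrow}_6\le F_3-1=1$ holds. Your case analysis already proves exactly this: a witness $k\ge 4$ gives $F_k\ge F_{i-3}$, and a witness $S_3$ with $i\le 5$ needs only $F_{i-3}=1$.

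It is also worth noting where this matters. The configuration above is exactly what a merge-down with $i=5$ produces when $|S_3|=1$, since it merges $S_4,S_5$ into $S_6$. That is the only place where the paper applies Fact~\ref{f1} with the witness at distance $3$. There the amortized cost of the merge-down is at most $1$ rather than $0$. This is an additive constant on an operation involving constant-size sets, which the paper's blanket exception for such operations absorbs, so the final bounds are unaffected. State and prove the amended version rather than leaving the case open.
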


\begin{proof}
	If any of $S_{i-1}$, $S_{i-2}$ and $S_{i-3}$ are nonempty they will have least $F_{i-3}$ 
	items. Thus  $\downsum{i}\geq F_{i-3}$ and thus $\maxzero{F_{i-3}-\downsum{i}} =0$.
\end{proof}

\begin{fact}\label{f2}
	$\downsum{i} \leq F_{i+4}-1$
\end{fact}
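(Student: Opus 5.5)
We bound $\downsum{i}=\sum_{j=1}^{i-1}|S_j|$ by summing the maximum allowed sizes of the sets above $S_i$. By the size invariant, every present set $S_j$ has at most $F_{j+3}$ elements. The exception for the first set concerns only its minimum size, so its maximum is still $F_{j+3}$. A full set is restored immediately by an overflow operation, so between operations we may take $|S_j|\le F_{j+3}-1$. Empty sets contribute $0$, and indices below the first set (slots $S_1,S_2$) contribute nothing, so the bound $|S_j|\le F_{j+3}$ holds trivially for them.

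The plan is therefore to show
\[
\downsum{i}\le \sum_{j=1}^{i-1} |S_j| \le \sum_{j=1}^{i-1} F_{j+3},
\]
and then evaluate the Fibonacci sum with the identity $\sum_{k=1}^{m}F_k=F_{m+2}-1$, which follows by a one-line induction. With $m=i+2$ this gives
\[
\sum_{j=1}^{i-1}F_{j+3}=\sum_{k=4}^{i+2}F_k=(F_{i+4}-1)-(F_1+F_2+F_3)=F_{i+4}-5\le F_{i+4}-1 .
\]
This is slightly stronger than the claim, so the slack absorbs any boundary issues. For example, even allowing sets to be exactly full ($|S_j|=F_{j+3}$) momentarily during restructuring, the bound still holds.

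Alternatively, the consecutive-nonempty invariant roughly halves the count of nonempty sets and would give a sharper bound. It is not needed here, since the crude sum already suffices.

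The only delicate point is confirming exactly which size bound holds at the moments Fact~\ref{f2} is applied. I expect to argue that every set, including the first, satisfies $|S_j|\le F_{j+3}$ whenever the potential is evaluated, including transient states inside invariant-restoring operations. The additive slack of $4$ in the computation above covers small transient excesses, such as a single element inserted into an already-full set before the overflow is triggered.
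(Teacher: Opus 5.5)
Your argument is correct and matches the paper's proof: bound each $|S_j|$ by its maximum size $F_{j+3}$, sum over $j<i$, and apply $\sum_{k=1}^{m}F_k=F_{m+2}-1$; your evaluation in fact gives the slightly sharper bound $F_{i+4}-5$. The discussion of transient excesses and the ``slack of $4$'' is unnecessary, since the size invariant $|S_j|\le F_{j+3}$ is all the paper uses. It would not work anyway, because that slack is a single global amount and cannot absorb excesses in several sets.
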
 

\begin{proof}
	$\downsum{i}=\sum_{j=1}^{i-1}|S_j| \leq \sum_{j=1}^{i-1} F_{i+3} 
	= \sum_{j=4}^{i+2} F_{i} 
	\leq  F_{i+4}-1 $ 
	via the classic identity that $\sum_{i=1}^{n} F_i= F_{i+2}-1$
\end{proof}

\begin{fact}\label{f3}
If an operation redistributes items among $S_{i}, S_{i+1}, \ldots S_j$ without insertion or deletion,  	$\potdown{\ell}$ does not change outside of $\ell \in [i,j]$.
\end{fact}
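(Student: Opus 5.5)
The plan is to unfold the definition $\potdown{\ell} = \maxzero{F_{\ell-3}-\downsum{\ell}}$ for $\ell$ not greater than $2$ and $S_\ell$ nonempty, and $0$ otherwise. This value depends on only three things: the index $\ell$, whether $S_\ell$ is empty, and the quantity $\downsum{\ell}=\sum_{m=1}^{\ell-1}|S_m|$. It therefore suffices to show that, for every $\ell\notin[i,j]$, neither the emptiness of $S_\ell$ nor $\downsum{\ell}$ is changed by an operation that only moves items among $S_i,\ldots,S_j$ and neither inserts nor deletes any.

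For emptiness, I would note that sets outside $[i,j]$ are untouched by such an operation, so $|S_\ell|$ itself is unchanged for $\ell\notin[i,j]$. In particular $S_\ell$ is empty after the operation if and only if it was empty before.

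For $\downsum{\ell}$, I would split into two cases.
\begin{itemize}
\item If $\ell<i$, then every index $m\le \ell-1$ lies outside $[i,j]$. Each summand $|S_m|$ is unchanged, so $\downsum{\ell}$ is unchanged.
\item If $\ell>j$, the sum $\downsum{\ell}$ covers all of $S_i,\ldots,S_j$ together with other sets that are untouched. Since items are only redistributed within the block, $\sum_{m=i}^{j}|S_m|$ is preserved, and the remaining summands are individually preserved, so again $\downsum{\ell}$ is unchanged.
\end{itemize}
Combining these, $\potdown{\ell}$ is the same before and after the operation for all $\ell\notin[i,j]$.

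The only step needing care is the case $\ell>j$: we must use conservation of the total block size rather than of the individual set sizes. This is exactly where the hypothesis ``without insertion or deletion'' enters. Note also that the order of elements among the sets does not matter, since $\downsum{\ell}$ counts only sizes.
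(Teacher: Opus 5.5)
Your proof is correct and follows the paper's argument: $\phi^{\uparrow}_{\ell}$ depends only on $\ell$, on whether $S_\ell$ is empty, and on $\uparrow_{\ell}=\sum_{m=1}^{\ell-1}|S_m|$, and $\uparrow_{\ell}$ is unchanged because the block $S_i,\ldots,S_j$ lies entirely outside the sum when $\ell<i$ and entirely inside it (with its total size conserved) when $\ell>j$. One harmless slip: the formula $\max\{0,\,F_{\ell-3}-\uparrow_{\ell}\}$ applies when $\ell>2$ (not $\ell\le 2$) and $S_\ell$ is nonempty, which does not affect your reasoning.
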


\begin{proof}
Recall that	$\potdown{\ell}$ is a function of whether $S_\ell$ is empty as well as $\downsum{\ell}$, which is defined to be $\sum_{m=1}^{\ell-1}|S_m|$ and does not change if the sets being redistributed are either entirely inside or entirely outside of this sum.
\end{proof}

\begin{fact}\label{f6}
For $i=O(1)$, $\phi_i=O(1)$.
\end{fact}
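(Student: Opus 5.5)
We bound each of the three summands of $\pot{i}=\potdown{i}+\potsize{i}+\potset{i}$ separately by a quantity depending only on $i$. Since $i=O(1)$, every Fibonacci number $F_{i+c}$ for a fixed constant $c$ is itself $O(1)$. So it suffices to show that each summand is at most some $F_{i+c}$ or an absolute constant.

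\textbf{Nonempty potential.} By definition $\potset{i}\in\{0,1\}$, so it is trivially $O(1)$.

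\textbf{Size potential.} We split into the cases of the definition of $\potsize{i}$.
\begin{itemize}
\item If $S_i$ is empty, or $F_{i+1}\le|S_i|\le F_{i+2}$, then $\potsize{i}=0$.
\item In the green range, $\potsize{i}=F_{i+1}-|S_i|\le F_{i+1}$.
\item In the purple range, $\potsize{i}=|S_i|-F_{i+2}\le F_{i+3}-F_{i+2}=F_{i+1}$, using the upper invariant $|S_i|\le F_{i+3}$.
\end{itemize}
Hence $\potsize{i}\le F_{i+1}$ in all cases.

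The first set $S_3$ has no minimum size. If $|S_3|<F_3$, the case definition does not literally cover it. We read it as belonging to the green branch, which gives $\potsize{3}\le F_4=O(1)$. In any case $0\le|S_3|\le F_6$ keeps any reasonable value bounded by a constant.

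\textbf{Up potential.} Either $\potdown{i}=0$, or $\potdown{i}=\maxzero{F_{i-3}-\downsum{i}}$. Since $\downsum{i}\ge0$, this is at most $F_{i-3}$, and for $i\le2$ it is $0$ by definition.

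Summing the three bounds gives $\pot{i}\le F_{i+1}+F_{i-3}+1$, which is $O(1)$ for $i=O(1)$. Note that we never use Fact~\ref{f2} or the other facts; only the size invariants and nonnegativity of $\downsum{i}$ are needed.

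The only step that needs any care is the size-potential case analysis. One must check that the upper invariant $|S_i|\le F_{i+3}$ always holds between operations, including transiently when the potential is evaluated, and one must handle the exceptional first set. Everything else is immediate from the definitions.
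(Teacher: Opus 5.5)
The paper states Fact~\ref{f6} without proof and treats it as immediate from the definitions. Your component-wise bound $\phi_i \le F_{i+1} + F_{i-3} + 1$ is the routine verification the authors leave implicit, and it is correct: the nonempty potential is at most $1$, the size potential is at most $F_{i+1}$ by the upper invariant $|S_i|\le F_{i+3}$, and the up potential is at most $F_{i-3}$ since $\uparrow_i \ge 0$. Reading the uncovered case $1\le |S_3| < F_3$ as the green branch (and, implicitly, taking $F_0=0$ in the up potential of $S_3$) fills small gaps in the paper's definitions; it is not a weakness of your argument.
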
 
The next fact, follows from the intuition that set $S_i$ can be split into $S_{i-1}	$ and $S_{i-2}$ or an inverse concatenation performed without gaining size potential. This follows from the properties of Fibonacci numbers, where a set of a certain fullness can be split into two with the same fullness and thus the same potential.

\begin{fact}\label{f7}
The elements of a nonempty set $S_i$ can be distributed to previously empty sets $S_{i-1}	$ and $S_{i-2}$ so that the set potential remains unchanged. Inversely, the elements of nonempty sets $S_{i-1}$ and $S_{i-2}$ can be concatenated into the previously empty set $S_i$ without increasing potential.
\end{fact}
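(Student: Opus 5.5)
Fact~\ref{f7} concerns the \emph{size} potential $\potsize{}$ (together with the nonempty potential, where relevant). The plan is to prove it by a direct case analysis on which of the three ranges $|S_i|$ falls in, using only $F_{i+1}=F_i+F_{i-1}$ and its shifts $F_{i+2}=F_{i+1}+F_i$, $F_{i+3}=F_{i+2}+F_{i+1}$.

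\textbf{Splitting $S_i$.} Let $m=|S_i|$ with $F_i\le m\le F_{i+3}$. We write $m=a+b$ with $a=|S_{i-1}|$ and $b=|S_{i-2}|$. The intended splits are:
\begin{itemize}
\item If $F_{i+1}\le m\le F_{i+2}$ (the blue range, potential $0$), take $a\in[F_i,F_{i+1}]$ and $b\in[F_{i-1},F_i]$. These are exactly the blue ranges for indices $i-1$ and $i-2$, and their sums cover $[F_{i+1},F_{i+2}]$. So both new sets have potential $0$.
\item If $F_{i+2}<m\le F_{i+3}$ (the purple range, potential $m-F_{i+2}$), put $a=F_{i+1}+t$ and $b=F_i$, where $t=m-F_{i+2}\le F_{i+1}-F_{i}$. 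Then $S_{i-1}$ has potential $a-F_{i+1}=t$ and $S_{i-2}$ has potential $0$, so the total is unchanged. We also need $a\le F_{i+2}$, which holds since $t\le F_{i+1}-F_i\le F_i$, hence $a\le F_{i+1}+F_i=F_{i+2}$.
\item If $F_i\le m<F_{i+1}$ (the green range, potential $F_{i+1}-m$), put $b=F_{i-1}$ and $a=m-F_{i-1}$. Then $a\in[F_{i-2},F_i)$, and its potential is $F_i-a=F_{i+1}-m$. Here one must check that $a\ge F_{i-1}$, the minimum size for index $i-1$. When $a<F_{i-1}$ this fails, so instead split as $a=F_{i-1}+u$ and $b=F_{i-2}+v$ and balance the deficits, so that $(F_i-a)+(F_{i-1}-b)=F_{i+1}-m$. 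This works exactly when $m\ge F_{i-1}+F_{i-2}=F_i$, which is given.
\end{itemize}
The nonempty potential changes from $1$ to $2$. The claim ``unchanged'' should therefore be read for the size potential, which is what the surrounding text says. I would note this explicitly.

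\textbf{Concatenating $S_{i-1}$ and $S_{i-2}$ into $S_i$.} Let $a,b$ lie in their allowed ranges, so $m=a+b\in[F_{i-1}+F_{i-2},F_{i+2}+F_{i+1}]=[F_i,F_{i+3}]$, which is legal for $S_i$. The potential $g_j(x)=\max\{0,F_{j+1}-x,x-F_{j+2}\}$ is a convex, $1$-Lipschitz ``bathtub'' function whose zero interval for index $i$ is the Minkowski sum of the zero intervals for $i-1$ and $i-2$. Hence
\[
g_i(a+b)\le g_{i-1}(a)+g_{i-2}(b).
\]
To see this, consider the distance of $a$ and $b$ from their intervals. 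If $a$ exceeds its interval by $t_a$ and $b$ by $t_b$, then $a+b$ exceeds the sum interval by at most $t_a+t_b$. Deficits behave symmetrically, and a mixed excess and deficit partially cancel. This gives the non-increase. The nonempty potential drops by $1$, which only helps.

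The main obstacle is the green case of the split, because the lower bound $F_{i-1}$ on $|S_{i-1}|$ may force the deficit to be spread over both sets. The key check is that the total deficit $F_{i+1}-m$ never exceeds the available slack $(F_i-F_{i-1})+(F_{i-1}-F_{i-2})=F_i-F_{i-2}=F_{i-1}$. Since $m\ge F_i$ gives $F_{i+1}-m\le F_{i-1}$, this holds.
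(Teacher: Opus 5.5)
There is a genuine gap in the purple case of your split; everything else is correct.

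\textbf{The gap.} Since $m\le F_{i+3}$, the excess is $t=m-F_{i+2}\le F_{i+3}-F_{i+2}=F_{i+1}$. It is not bounded by $F_{i+1}-F_i=F_{i-1}$ as you claim. You pin $b=F_i$, so $a=F_{i+1}+t$. This exceeds the cap $F_{i+2}$ of $S_{i-1}$ as soon as $t>F_i$. For instance, $i=5$ and $m=F_8=21$ give $t=8$ and $a=16>F_7=13$. Then $S_{i-1}$ violates its size bound, and its size potential is not even defined there.

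\textbf{The repair.} Mirror what you already did in the green fallback and spread the excess over both sets. Take $a=F_{i+1}+t_1\in[F_{i+1},F_{i+2}]$ and $b=F_i+t_2\in[F_i,F_{i+1}]$ with $t_1+t_2=t$. This is possible because the available room is $F_i+F_{i-1}=F_{i+1}\ge t$. The new size potentials are then $t_1+t_2=m-F_{i+2}$, as required. So the obstacle you flagged for the green case is equally present in the purple case. There the capacity check is $t\le F_{i+1}$, which follows from $m\le F_{i+3}$. With this fix, each of your three cases puts both new sets in the same colour range as $S_i$. That is exactly the paper's ``proportional'' split.

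\textbf{What is fine.} Your concatenation argument is correct and is essentially the paper's. The after-potential $\max\{0,F_{i+1}-a-b,\,a+b-F_{i+2}\}$ is bounded term by term by the sum of the two before-potentials. Your Minkowski-sum/distance formulation says this cleanly, and it avoids the index slips in the paper's displayed inequalities. Your blue case and your green fallback are also correct. Your reading that ``unchanged'' refers to the size potential, with the nonempty potential going up by one, matches how the paper uses this fact in the split-up lemma.
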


\begin{proof}
In general, if $S_i$ has between $F_{i+j}$ and $F_{i+j+1}$ elements, for $j\in {0,1,2}$ we can proportionally distribute them so that $S_{i-1}$ has between $F_{i+j-1}$ and $F_{i+j}$ elements and 
 $S_{i-2}$ has between $F_{i+j-2}$ and $F_{i+j-1}$ elements; this follows naturally from the definition of Fibonacci numbers. In any such distribution, for the case where $j=0$ the size potential was and remains, $F_{i-1}-|S_i|$, for $j=1$, 0, and for $j=2$, $|S_i|-F_{i+2}$. 

For the inverse operation of merging, set $j$ so that $S_i$ will have between $F_{i+j}$ and $F_{i+j+1}$ elements. If $j=1$ then this holds trivially as the potential will become zero. For the cases where $j=0$ and $j=2$ a little math confirms the intuition:

\[
\underbrace{\potsize{i}}_{\text{after}} = |S_{i}|-F_{i+1}
 = 
\overbrace{|S_{i-1}|-F_{i}+|S_{i-2}|-F_{i-1} \leq \underbrace{\potsize{i-1}+\potsize{i-2}}_{\text{before}}}^
{\substack{|S_{i-1}|-F_{i}\leq \potsize{i-1}\\|S_{i-2}|-F_{i-1}\leq \potsize{i-2}}}
\]
\[
\underbrace{\potsize{i}}_{\text{after}}  = F_{i+3}-|S_{i}|
 = 
\overbrace{F_{i+2}-|S_{i-1}|+F_{i+1}-|S_{i-2}| \leq \underbrace{\potsize{i-1}+\potsize{i-2}}_{\text{before}}}^
{\substack{F_{i+2}-|S_{i-1}|\leq \potsize{i-1}\\F_{i+1}-|S_{i-2}|\leq \potsize{i-2}}}
\]
\end{proof}

	\begin{sidewaysfigure}
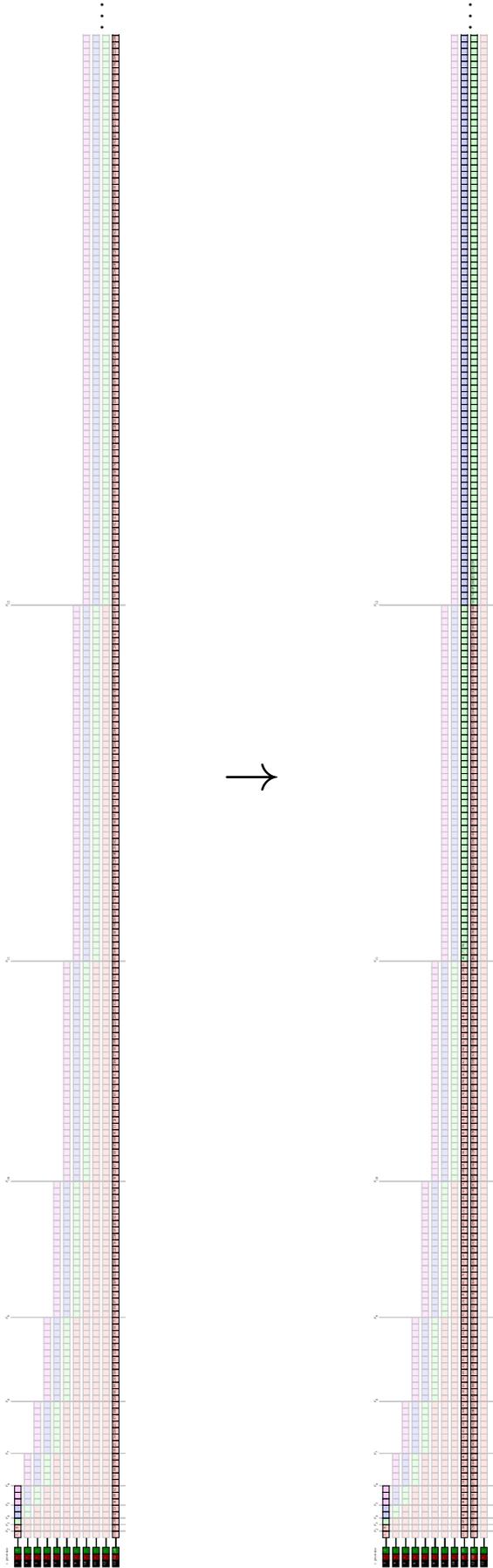

\begin{center}
 \hspace*{-1em}\includegraphics[trim=0cm 0cm 400cm 0cm,clip,width=\linewidth]{figs/su1.pdf}\,\raisebox{2ex}{\dots}\\
\ \vspace{1cm}\ 
 \\
 {\Huge $\downarrow$}\\
\ \vspace{1cm}\ 
 \\
 \hspace*{-1em}\includegraphics[trim=0cm 0cm 400cm 0cm,clip,width=\linewidth]{figs/su2.pdf}\,\raisebox{2ex}{\dots}
\end{center}
\caption{Split-up operation. There are 9 consecutive empty sets in a row, $S_2 \ldots S_{10}$. The set $S_{11}$ is then proportionally split into sets $S_{9}$ and $S_{10}$. In all cases the nonempty sets remain in the same color range, in this case green, which represents one of the three cases of the size potential. Empty cells continue off the page.}\label{f:su}
\end{sidewaysfigure}

\subsection{Split-up} \label{ss:sd}

See Figure~\ref{f:su}.
Split-up is called when the invariant that there are at most 8 consecutive empty sets in a row is violated, that is, there are 9 consecutive empty sets. Specifically, it is called on set $i$ when $S_i$ and $S_{i-10}$ are nonempty and $S_{i-1} \ldots S_{i-9}$ are empty. This requires $i\geq 11$.

The moving of items from $S_i$ to $S_{i-1}$ and $S_{i-2}$ should be done in a way proportional their sizes, that is in respect of the golden ratio. Very specifically, if there are between $F_l$ and $F_{l-1}$ elements in $S_i$, these should be moved so there are between $F_{l-1}$ and $F_{l-2}$ elements in $S_{i-1}$ and $F_{l-2}$ and $F_{l-3}$ elements in $S_{i-3}$; this is of course possible due to the definition of the Fibonacci numbers.

The new pivots are set to the minimum in each set via linear scan.

\begin{lemma}
	The amortized cost of the split-up operation is at most 0 when called on $S_i$, $i\geq 11$.
\end{lemma}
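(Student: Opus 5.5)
The plan is to compare the actual cost of the split-up with the change in the three potential components. The goal is to show that the up potential of $S_i$ alone, which is large at this moment, pays for the linear-time selection. Actual cost is $\Oh{|S_i|}$ for the selection and scans, and $|S_i| \le F_{i+3}$. By scaling the potential by a suitable constant (as with $\beta$ for \CSHeaps heaps), it suffices to show that the potential drops by at least a constant times $F_{i+3}$, equivalently by $\Omega(F_{i-3})$ since $F_{i+3}/F_{i-3}$ is bounded by a constant.

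First, Fact~\ref{f3} localizes the changes. The split-up redistributes the items of $S_i$ among $S_{i-2},S_{i-1},S_i$ only, so $\potdown{\ell}$ is unchanged for $\ell\notin[i-2,i]$. The size and nonempty potentials of other sets are also untouched.

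Next, account for the three sets in turn.

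\emph{Nonempty potential.} One nonempty set ($S_i$) is replaced by two ($S_{i-1},S_{i-2}$), so this component increases by exactly $1$.

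\emph{Size potential.} By Fact~\ref{f7}, proportional distribution leaves the size potential unchanged.

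\emph{Up potential.} Before the split, $S_{i-1},\ldots,S_{i-9}$ are empty. Every set $S_j$ with $j\le i-10$ has at most $F_{j+3}\le F_{i-7}$ elements, so Fact~\ref{f2} gives $\downsum{i}\le F_{i-6}-1$. The precise bound requires indexing care, but it is of the form $F_{i-c}$ with $c>3$. Hence $\potdown{i}\ge F_{i-3}-F_{i-6}=\Omega(F_{i-3})$ before the operation. Afterward $S_i$ is empty, so $\potdown{i}=0$. The new set $S_{i-1}$ has $S_{i-2}$ nonempty directly above it, so $\potdown{i-1}=0$ by Fact~\ref{f1}. The new set $S_{i-2}$ has $\downsum{i-2}=\downsum{i}$ unchanged, and its up potential $\maxzero{F_{i-5}-\downsum{i}}\le F_{i-5}$ is new.

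Summing, the potential drops by at least $F_{i-3}-F_{i-6}-F_{i-5}-1 = F_{i-4}-F_{i-6}-1 = F_{i-5}-1$, which is positive and $\Theta(F_{i+3})$ for $i\ge 11$. With the scaling constant chosen large enough relative to the selection's constant, the amortized cost is at most $0$.

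The main obstacle I expect is the up-potential bookkeeping. Getting a clean upper bound on $\downsum{i}$ from the nine empty sets via Fact~\ref{f2} with correct Fibonacci offsets is delicate, and so is confirming that the new up potential of $S_{i-2}$ is strictly dominated by the released $\potdown{i}$. If the offsets turn out tighter than sketched, I would fall back to charging with a larger multiplicative weight on the up potential relative to the other components, which Fact~\ref{f7} and the $+1$ nonempty term tolerate.
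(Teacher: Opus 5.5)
Your route is the paper's: you localize the change with Fact~\ref{f3}, use Fact~\ref{f1} for the new $S_{i-1}$, bound the new up potential of $S_{i-2}$ by $F_{i-5}$, and use Fact~\ref{f7} for the size potential. You charge $+1$ for the nonempty potential and pay with the released $\potdown{i}$. The paper fixes the nominal cost of a split-up at $F_{i-6}$, which absorbs the bounded ratio $F_{i+3}/F_{i-6}<77$ into the time unit. Your rescaling of $\Phi$ is the same device, and it does not hurt the other lemmas, since every other restoring operation has $\Delta\Phi\le 0$.

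\textbf{The gap: the bound $\downsum{i}\le F_{i-6}-1$ is false.} It does not follow from $|S_j|\le F_{j+3}\le F_{i-7}$ for $j\le i-10$. Summing those per-set bounds gives about $F_{i-5}$, not $F_{i-6}$. For example, take $S_{i-10},S_{i-11}$ and $S_{i-13},S_{i-14}$ nonempty and close to full, which both consecutive-set invariants allow. Then $\downsum{i}$ is close to $F_{i-6}+F_{i-9}>F_{i-6}$. The true bound is the paper's: since $S_{i-9},\dots,S_{i-1}$ are empty, $\downsum{i}=\downsum{i-9}\le F_{i-5}-1$ by Fact~\ref{f2}.

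\textbf{The conclusion survives with the corrected bound.} It gives $\potdown{i}\ge F_{i-3}-F_{i-5}+1=F_{i-4}+1$ before the split. Your accounting then yields a net drop of at least $(F_{i-4}+1)-F_{i-5}-1=F_{i-6}$, not $F_{i-5}-1$. Since $F_{i-6}\ge 5$ and $F_{i+3}<77F_{i-6}$ for $i\ge 11$, this is still $\Theta(F_{i+3})$. It is exactly the paper's accounting: up potential $\le -F_{i-6}-1$ and nonempty $+1$, against a nominal cost of $F_{i-6}$.

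\textbf{Your criterion for the offset is too weak.} You said any bound of the form $F_{i-c}$ with $c>3$ would do, but with $c=4$ your bookkeeping fails. The released amount $F_{i-3}-F_{i-4}=F_{i-5}$ is eaten almost entirely by the bound $F_{i-5}$ on the new up potential of $S_{i-2}$. Reweighting the up potential by a constant cannot then produce a $\Theta(F_{i+3})$ drop, so your fallback would not help.

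\textbf{A more robust way to do this step.} Write $D$ for $\downsum{i}$, which equals the post-split $\downsum{i-2}$. Because $D<F_{i-5}$, both maxima are attained by their second arguments. The up potential therefore changes by exactly $(F_{i-5}-D)-(F_{i-3}-D)=-F_{i-4}$, independent of $D$ and sharper than both your bound and the paper's.
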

\begin{proof} 
The nominal cost of the split-up operation is defined to be $F_{i-6}$ as it takes linear time in the sets involved. To compute the change in potential we look at each of the three components of the potential function. 

\begin{description}
\item[Change in $\potdown{i}$:]
This is $\sum_{i=0}^k \potdownb{i}-\sum_{i=0}^k \potdowna{i}$. However, this can be simplified using Fact~\ref{f3} and recalling that empty sets have zero up potential to
$ \potdowna{i-1}+\potdowna{i-2}-\potdown{i}$.
We can now bound each of these quantities separately. First,
$\potdowna{i-1}=0$ by Fact~\ref{f1}.
Next, for $\potdowna{i-2}$, this is $\maxzero{F_{i-5}-\downsuma{i-2}} \leq F_{i-5}$.

Finally, for $-\potdownb{i}$: we use the fact that ${S}_{i-1}\ldots {S}_{i-9}$ are empty to bound $\downsum{i}$:
\[\begin{alignedat}{3}
    -\potdownb{i}
    &=-\maxzero{ F_{i-3}-\downsumb{i} }
    &\text{By definition}
    \\
    &=-\maxzero{ F_{i-3}-\downsumb{i-9} }
    &\text{$\downsumb{i}=\downsumb{i-9}$}
    \\
    & < -\maxzero{ F_{i-3}-F_{i-5}-1 }
    & \text{\ $\downsumb{i-9}<F_{i-5}-1$ by Fact~\ref{f2}}
    \\
    & = -F_{i-4}+1 &i \geq 11 
\end{alignedat}\]
 Thus, bringing it together gives:
 \[ \sum_{i=0}^k \potdownb{i}-\sum_{i=0}^k \potdowna{i}
= \potdowna{i-1}+\potdowna{i-2}-\potdownb{i}\leq 
0 +F_{i-5}-F_{i-4}+1=-F_{i-6}+1\]
    \item[Change in $\potsize{i}$:] 0. 
    \item[Change in $\potset{i}$:] +1, as there is now one more nonempty set.
\end{description} 
In summary,
by definition, the amortized cost is the nominal cost plus the change in potential which is at most, when $i\geq 11$: 
$
F_{i-6}-F_{i-6}-1+0+1 = 0 $
\end{proof}

\subsection{Merge-down} \label{ss:mu}

 \figba{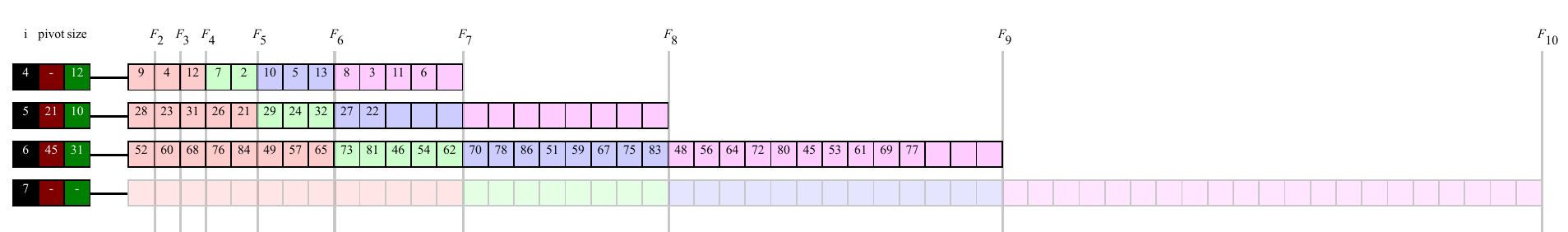}{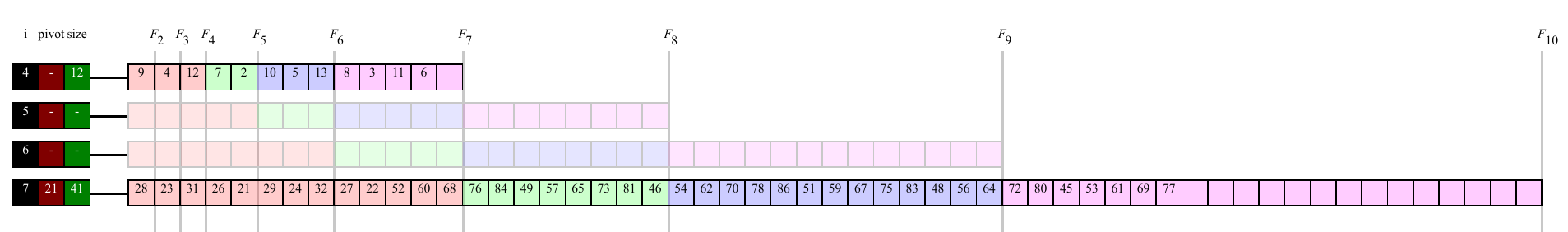}{A merge-down operation performed when there are three non-empty sets in a row, here $S_4, S_5, S_6$, and the set below $S_7$ is empty. sets $S_5$ and $S_6$ are concatenated into the empty slot of set $S_7$.}{f:md}

See Figure~\ref{f:md}. The merge-down operation is called when the invariant that there are most two non-empty consecutive sets is violated. That is, there are three non-empty consecutive sets $S_i$, $S_{i-1}$ and $S_{i-2}$ with $S_{i+1}$ being empty (or non-existent).
The sets $S_i$ and $S_{i-1}$ are concatenated to form the new set $\jafter{S}_{i+1}$; this is where we use fact that the sets are stored as doubly-linked sets and not as arrays so that this operation takes constant time.
The pivot $\overline{p}_{i-2}$ is set to $p_{i-1}$.

\begin{lemma}
	The amortized cost of the merge-down operation is at most 0.
\end{lemma}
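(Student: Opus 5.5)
The plan mirrors the proof for split-up: assign merge-down a nominal cost of $1$, since it concatenates two linked lists in constant time, and show that the potential $\Phi$ drops by at least $1$. I will treat the three components in turn. Setup: before the operation, $S_{i-2},S_{i-1},S_i$ are nonempty and $S_{i+1}$ is empty. Afterwards $\jafter{S}_{i-1}$ and $\jafter{S}_i$ are empty and $\jafter{S}_{i+1}=S_{i-1}\cup S_i$, while $S_{i-2}$ is unchanged.

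\textbf{Nonempty potential.} Two nonempty sets are replaced by one, so $\sum\potset{}$ decreases by exactly $1$. This decrease alone pays for the nominal cost, so it suffices to show that the other two components do not increase.

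\textbf{Size potential.} This is the inverse half of Fact~\ref{f7}, applied with index $i+1$ in place of $i$. The elements of the nonempty sets $S_i$ and $S_{i-1}$ are concatenated into the previously empty set $S_{i+1}$, so $\potsize{i+1}$ after is at most $\potsize{i}+\potsize{i-1}$ before. The other sets keep their sizes.

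The main obstacle is that Fact~\ref{f7} assumes the merged size lands in the admissible range $[F_{i+1},F_{i+4}]$ for $S_{i+1}$. That range follows from $|S_i|\ge F_i$ and $|S_{i-1}|\ge F_{i-1}$ on the lower side, and from the upper bounds $F_{i+3}$ and $F_{i+2}$ on the upper side. The exception is the first set, which has no minimum size. Even if $S_{i-1}$ is the first set, $|S_i|\ge F_i$ gives enough, up to one boundary case. I would check that case directly using the inequalities in the proof of Fact~\ref{f7}. If the merged size hits exactly $F_{i+4}$, the set becomes full, which triggers a subsequent overflow operation that is accounted for separately.

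\textbf{Up potential.} By Fact~\ref{f3}, only the sets with indices $i-1,i,i+1$ can change, because the redistribution stays within $S_{i-1},\dots,S_{i+1}$. After the operation, $\jafter{S}_{i-1}$ and $\jafter{S}_i$ are empty, so their up potential is $0$, which is a non-increase. The set $\jafter{S}_{i+1}$ is nonempty and $S_{i-2}$, which is among its three predecessors, is nonempty. Fact~\ref{f1} therefore gives $\potdowna{i+1}=0$. So the up potential does not increase.

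Summing the three components gives $\Delta\Phi\le -1$. The amortized cost is then at most $1-1=0$.
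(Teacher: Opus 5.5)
Your proof is correct and follows the paper's argument: you take nominal cost $1$, a drop of exactly $1$ in nonempty potential, no increase in size potential by the inverse half of Fact~\ref{f7} at index $i+1$, and no increase in up potential by Facts~\ref{f3} and~\ref{f1}; the paper additionally notes that $\potdownb{i}$ and $\potdownb{i-1}$ are already $0$ by Fact~\ref{f1}, so that change is exactly $0$. The first-set case you set aside cannot occur, because $S_{i-2}$ is nonempty and precedes $S_{i-1}$ and $S_i$, so neither merged set is the first set and the range $[F_{i+1},F_{i+4}]$ follows directly from the bounds you already listed.
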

\begin{proof}
The nominal cost of the operation is 1,  To compute the change in potential we look at each of the three components of the potential function: 
    \begin{description}
\item{\bfseries Change in $\potdown{i}$:} 0.
By Fact~\ref{f3} this is $\potdowna{i+1}-\potdownb{i}-\potdownb{i-1}$.
By Fact~\ref{f1}, since $\jbefore{S}_{i-2}=\jafter{S}_{i-2}$ is nonempty, all three of these are zero. 
    \item{\bfseries Change of $\potsize{i}$:} $\leq 0$ by Fact~\ref{f7}.
    \item{\bfseries Change of $\potset{i}$:} $-1$ as the number of non-empty sets decreases by one.
\end{description}
In summary, by definition, the amortized cost is the nominal cost plus the change in potential: 
$1+0+0-1 =0 $
\end{proof}

\subsection{Overflow-down} \label{ss:ou}

 \figba{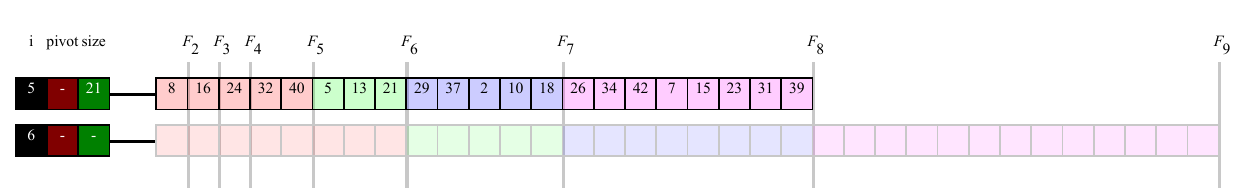}{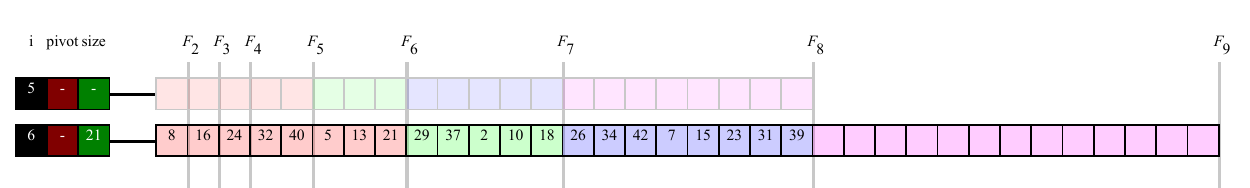}{An overflow-down operation performed on $S_5$ which has $F_8=21$ elements when $S_6$ is empty.}{f:od}

See Figure~\ref{f:od}. The overflow-down operation is called when a set $S_i$ is full, and thus has $F_{i+3}$ items, and the set below, $S_{i+1}$, is empty. The set is moved from $S_i$ to $S_{i+1}$ in constant time.

\begin{lemma}
	The amortized cost of the overflow-down on set $S_i$ is at most 0 when $i\geq 3$
\end{lemma}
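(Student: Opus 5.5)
Following the pattern of the split-up and merge-down lemmas, I take the nominal cost of an overflow-down to be $1$, since moving a linked list from slot $S_i$ to slot $S_{i+1}$ takes constant time. I then bound the change in each of the three potential components separately and check that their sum is at most $-1$.

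\textbf{Nonempty potential.} Before the move, $S_i$ is nonempty and $S_{i+1}$ is empty; afterwards the reverse holds. The number of nonempty sets is unchanged, so the change is $0$.

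\textbf{Size potential.} Before the move, $S_i$ is full with $|S_i|=F_{i+3}$, so it lies in the purple range and $\potsize{i}=F_{i+3}-F_{i+2}=F_{i+1}$. After the move the same $F_{i+3}$ elements sit in slot $i+1$, where $F_{i+1}\le F_{i+3}\le F_{i+1+2}$, so they are in the blue range and $\potsizea{i+1}=0$. The change is therefore $-F_{i+1}$. This is at most $-1$, and in fact at most $-2$ once $i\ge 2$.

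\textbf{Up potential.} By Fact~\ref{f3}, only slots $i$ and $i+1$ can change. The new $\potdowna{i}$ is $0$ because $S_i$ is now empty. The old $\potdownb{i}$ is nonnegative, and dropping it only helps. The new value is $\potdowna{i+1}=\maxzero{F_{i-2}-\downsuma{i+1}}$. Since $\downsuma{i+1}=\downsumb{i}$, this is at most $F_{i-2}$. This is the step that needs care, because the increase in up potential must be absorbed by the drop in size potential. Using $i\ge 3$, so that $F_{i-2}$ is defined and small, I compare $F_{i-2}$ with $F_{i+1}$: the identity $F_{i+1}=F_{i-2}+2F_{i-1}$ gives $F_{i+1}-F_{i-2}=2F_{i-1}\ge 2$. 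Hence the up-potential increase is at most $F_{i-2}$, and the net change in size plus up potential is at most $-2F_{i-1}\le -2$.

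Adding the nominal cost $1$ to a total potential change of at most $-2$ gives an amortized cost of at most $-1\le 0$. The only delicate points are making sure the hypothesis $i\ge 3$ makes $F_{i-2}$ meaningful, and confirming that the old $\potdownb{i}$ need not be tracked: it is at least $0$ and vanishes when the set leaves slot $i$.
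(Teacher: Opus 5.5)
Your proof is correct and follows the paper's argument step for step: nominal cost $1$, no change in nonempty potential, a size-potential drop of $F_{i+1}$, and an up-potential increase of at most $F_{i-2}$, confined to slots $i$ and $i+1$ by Fact~\ref{f3}; the only cosmetic difference is that you finish with $F_{i+1}-F_{i-2}=2F_{i-1}$ where the paper uses $F_{i+1}-F_{i-2}\ge F_i$. One small slip: the zero-potential (blue) range for slot $i+1$ is $F_{i+2}\le |S_{i+1}|\le F_{i+3}$, not an interval starting at $F_{i+1}$, though this does not affect your conclusion that the moved set has size potential $0$.
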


\begin{proof}
The nominal cost of overflow up is 1, since it is a constant-time operation.
To compute the change in potential we look at each of the three components of the potential function: 

    \begin{description}
\item[Change in $\potdown{i}$:]
By Fact~\ref{f3} this is $\potdowna{i+1}-\potdownb{i}$ which is at most $\potdowna{i+1}$.
Using the definition of the up potential, 
$\potdowna{i+1}=\maxzero{ F_{i-2}-\downsumb{i+1} } \leq F_{i-2}$.
    \item[Change of $\potsize{}$:]
    As $|\jafter{S}_{i+1}|=F_{i+3}$, $\potsizea{i+1}=0$. Thus we need only consider $-\potsizeb{i}$:
$
 -(\overbrace{|\jbefore{S}_i|}^{F_{i+3}}-F_{i+2})
   = F_{i+2}-F_{i+3} 
    = -F_{i+1} $
    \item[Change of $\potset{i}$:] 0 as the number of nonempty sets does not change.
\end{description}
In summary, by definition, the amortized cost is the nominal cost plus the change in potential is 
$ 1+\overbrace{F_{i-2}-F_{i+1}}^{\leq -F_i}+0 \leq 0 $.
\end{proof}

\subsection{Overflow-thru}\label{ss:ot}

\figba{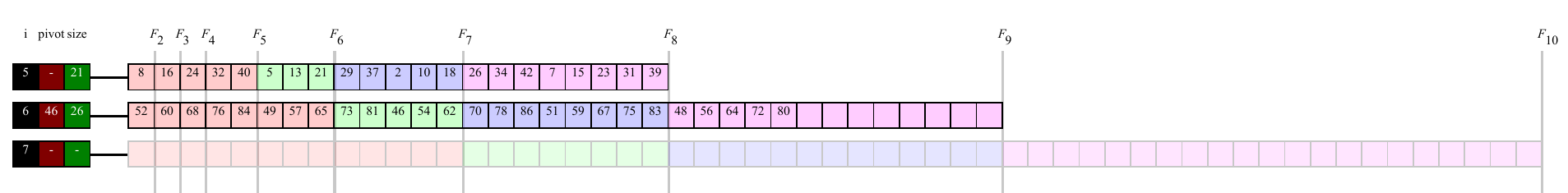}{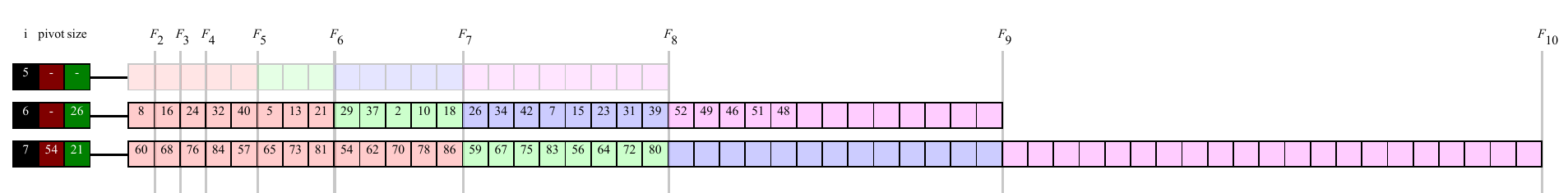}{An overflow-thru operation performed on $S_5$ which has $F_8=21$ elements when $S_6$ is nonempty.}{f:ot}

See Figure~\ref{f:ot}. The overflow-thru operation is called when a set $\jbefore{S}_i$ is full, and thus has $F_{i+3}$ items, and the set below, $\jbefore{S}_{i+1}$ is nonempty. 
As there can not be more than two non-empty consecutive sets by the consecutive nonempty invariant, this implies $\jbefore{S}_{i+2}$ is empty.

Recall that in this operation, $\jbefore{S}_i$ and $\jbefore{S}_{i+1}$ are combined, and the largest items are moved from the combined set to $\jafter{S}_{i+2}$ so that the size of $\jafter{S}_{i+1}$ is unchanged and thus $|\jafter{S}_{i+2}|=|\jbefore{S}_{i}|=F_{i+3}$. This movement requires a linear time median/order statistic algorithm and thus has nominal cost $F_{i-4}$ since $S_{i+2}$ is the largest set involved. 
The pivot $\jafter{p}_{i+1}$ and $\jafter{p}_{i+2}$ are set to the minimum in each set.

\begin{lemma}
	The amortized cost of overflow-thru is $\leq 0$, when $i\geq 5$.
\end{lemma}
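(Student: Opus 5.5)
The plan is to follow the template of the overflow-down lemma: charge the nominal cost $F_{i-4}$, then bound the change of each of the three potential components separately, and show that the drop in size potential from emptying the full set $\jbefore{S}_i$ pays for everything else. Since the operation only moves items among $S_i,S_{i+1},S_{i+2}$, Fact~\ref{f3} lets us restrict the up-potential bookkeeping to these three indices. The assumption $i\ge 5$ keeps all Fibonacci indices used below (down to $F_{i-4}$) positive, and puts us in the regime $i\ge 3$ where the up potential is actually defined by its second case.

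\textbf{Nonempty potential.} Before the operation, $\jbefore{S}_i$ and $\jbefore{S}_{i+1}$ are nonempty and $\jbefore{S}_{i+2}$ is empty. Afterwards, $\jafter{S}_i$ is empty while $\jafter{S}_{i+1}$ and $\jafter{S}_{i+2}$ are nonempty. So the count of nonempty sets, and hence the nonempty potential, is unchanged.

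\textbf{Size potential.} $|\jafter{S}_{i+1}|=|\jbefore{S}_{i+1}|$, so its size potential is unchanged. $\jbefore{S}_i$ was full with $F_{i+3}$ elements (purple range), so it loses $F_{i+3}-F_{i+2}=F_{i+1}$; afterwards it is empty and contributes $0$. The new set $\jafter{S}_{i+2}$ has $F_{i+3}=F_{(i+2)+1}$ elements, which lies in its middle (blue) range, so it contributes $0$. The net change is therefore $-F_{i+1}$.

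\textbf{Up potential.} By Fact~\ref{f3} the change is $\potdowna{i}+\potdowna{i+1}+\potdowna{i+2}-\potdownb{i}-\potdownb{i+1}-\potdownb{i+2}$, and I bound the six terms as follows.
\begin{itemize}
\item $\potdowna{i}=0$ because $\jafter{S}_i$ is empty, and $\potdownb{i+2}=0$ because $\jbefore{S}_{i+2}$ is empty.
\item $\potdownb{i+1}=0$ by Fact~\ref{f1}, since $\jbefore{S}_i$ is nonempty.
\item $\potdowna{i+2}=0$ by Fact~\ref{f1}, since $\jafter{S}_{i+1}$ is nonempty.
\item $-\potdownb{i}\le 0$, and we simply drop it.
\item The only term that can be positive is $\potdowna{i+1}=\maxzero{F_{i-2}-\downsuma{i+1}}\le F_{i-2}$.
\end{itemize}
This last term is the step I expect to need care. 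The point is that emptying $S_i$ can genuinely remove the "cushion" above $S_{i+1}$, so we cannot claim this term is zero via Fact~\ref{f1}. We can only bound it crudely, exactly as in the overflow-down proof.

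\textbf{Summing up.} The amortized cost is at most
\[
F_{i-4}+F_{i-2}-F_{i+1}+0 \;=\; F_{i-4}-F_i-F_{i-3},
\]
using $F_{i+1}-F_{i-2}=F_i+F_{i-1}-F_{i-2}=F_i+F_{i-3}$. This is clearly $\le 0$ since $F_{i-4}\le F_{i-3}$ for $i\ge 5$, which completes the argument.
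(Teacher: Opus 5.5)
Your proof is correct and follows the paper's argument essentially step for step. It charges nominal cost $F_{i-4}$, reduces the up-potential change via Facts~\ref{f3} and~\ref{f1} to the single term $\overline{\phi}^{\uparrow}_{i+1}\le F_{i-2}$, gets a size-potential drop of $F_{i+1}$ from emptying the full $S_i$, and has no change in the nonempty potential. Your bookkeeping is a bit more careful than the paper's: you list all six up-potential terms, you evaluate $\overline{\phi}^{\uparrow}_{i+1}$ with the post-operation value of $\uparrow_{i+1}$ (which does drop, since $S_i$ is emptied), and you verify the final inequality $F_{i-4}+F_{i-2}-F_{i+1}=F_{i-4}-F_i-F_{i-3}\le 0$ explicitly.
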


\begin{proof}
The nominal cost of the operation is $F_{i-4}$ as it takes linear time in the size of the sets involved.	To compute the change in potential we look at each of the three components of the potential function: 

\begin{description}
    \item[Change in $\potdown{}$:] 
    By Fact~\ref{f3} this is
    $\potdowna{i+2}+\potdowna{i+1}-\potdownb{i+1}-\potdownb{i}$.
    By Fact~\ref{f1}, $\potdownb{i+1}$ and $\potdowna{i+2}$ are both zero. Combined with the fact that $\potdownb{i+1}$ is nonnegative,
     the change in $\potdownb{}$ is at most  
$    \potdowna{i+1}=\maxzero{ F_{i-2}-\downsumb{i+1} } \leq F_{i-2}$.
    \item[Change in $\potsize{}$:] 
    This is $\potsizea{i+2}+\potsizea{i+1}-\potsizeb{i+1}-\potsizeb{i}$.
   As $|S_{i+2}|=F_{i+3}$, $\potdowna{i+2}= 0$ by definition. Since 
    $\potsizeb{i+1}=\potsizea{i+1}$, this means the change in $\potsizeb{}$ is $-\potsizeb{i}$. By definition 
    $-\potsizeb{i}=-(|\jbefore{S}_i|-F_{i+2})$, which since $|\jbefore{S}_i|=F_{i+3}$ is $-F_{i+1}$.
    \item[Change in $\potsetb{}$: ] 0, since the total number of nonempty sets does not change.
\end{description}
In summary, by definition, the amortized cost is the nominal cost plus the change in potential: 
$ F_{i-4}
+ F_{i-2}
-F_{i+1}
+0
 \leq 0 $
\end{proof}

\subsection{Underflow-up}\label{ss:ud}

\figba{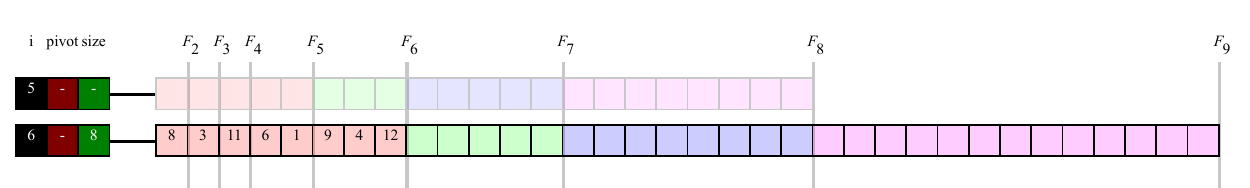}{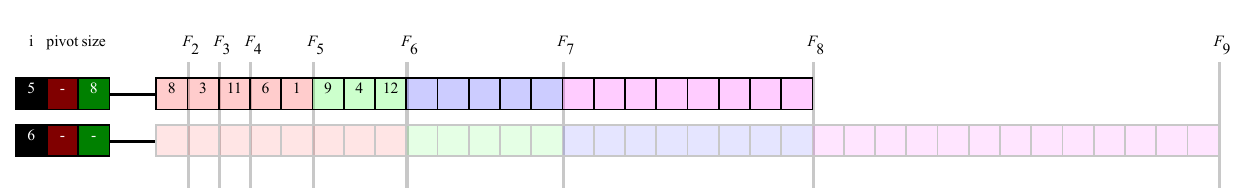}{An underflow-up operation performed on $S_6$ which has $F_6=8$ elements when $S_5$ is empty.}{f:uu}

See Figure~\ref{f:uu}. The underflow-up operation is called when a set $S_i$ is underfull, and thus has $F_{i}$ items, and the set above, $S_{i-1}$, is empty. In this case the set and the pivot is moved from $S_i$ to $S_{i-1}$ in constant time.

\begin{lemma}
	The amortized cost of the underflow-up operation on $S_i$ is at most 0 when $i\geq 6$ .
\end{lemma}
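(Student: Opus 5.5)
Following the template of the overflow-down proof, I would bound the amortized cost as the nominal cost, 1 for this constant-time move, plus the change in each of the three potential components. The target is to show that the drop in size potential outweighs both this unit cost and any increase in up potential.

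\textbf{Up potential.} Only $S_i$ and $S_{i-1}$ are affected, so by Fact~\ref{f3} the change is $\potdowna{i-1}-\potdownb{i}\le \potdowna{i-1}$.
\begin{itemize}
\item Moving the set leaves $\downsum{}$ unchanged for the new position: $\downsuma{i-1}=\downsumb{i-1}$, because $S_{i-1}$ was empty before.
\item If $i-1\le 2$, the definition gives $\potdowna{i-1}=0$.
\item Otherwise $\potdowna{i-1}=\maxzero{F_{i-4}-\downsumb{i-1}}\le F_{i-4}$.
\end{itemize}
So the up potential grows by at most $F_{i-4}$. I might refine this via Fact~\ref{f1}, but the crude bound should suffice.

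\textbf{Size potential.} This is the decisive step.
\begin{itemize}
\item Before the move, $|S_i|=F_i$ lies in the green range, so $\potsizeb{i}=F_{i+1}-F_i=F_{i-1}$.
\item After the move, the set sits at index $i-1$ with $F_i$ elements. For index $i-1$ the middle (blue) range is $[F_i,F_{i+1}]$, so $\potsizea{i-1}=0$.
\item The change is therefore $-F_{i-1}$.
\end{itemize}

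\textbf{Nonempty potential.} The number of nonempty sets does not change, so this component changes by $0$.

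\textbf{Conclusion.} The amortized cost is at most $1+F_{i-4}-F_{i-1}$. Since $F_{i-1}=F_{i-2}+F_{i-3}\ge F_{i-4}+F_{i-3}$, the cost is at most $1-F_{i-3}$. This is $\le 0$ whenever $F_{i-3}\ge 1$, i.e.\ $i\ge 4$, which is comfortably implied by the hypothesis $i\ge 6$.

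\textbf{Main obstacle.} The main point needing care is the indexing of the up-potential bound. The new set sits at index $i-1$, so the relevant Fibonacci number is $F_{(i-1)-3}=F_{i-4}$, and I must make sure $\downsum{}$ is evaluated correctly after the move. I must also handle the boundary case: if $i$ were small, $S_{i-1}$ could have index $\le 2$, or the special first set with no minimum size could be involved. The hypothesis $i\ge 6$ sidesteps both issues.
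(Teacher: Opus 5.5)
Your proposal is correct and follows the paper's proof essentially line for line. Both charge a nominal cost of $1$, bound the up-potential increase by $\potdowna{i-1}\le F_{i-4}$ via Fact~\ref{f3}, take a size-potential drop of $F_{i-1}$ (green range at index $i$ to blue range at index $i-1$) and no change in nonempty potential. Your explicit estimate $1+F_{i-4}-F_{i-1}\le 1-F_{i-3}\le 0$ simply spells out the final inequality that the paper asserts directly for $i\ge 6$.
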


\begin{proof}
	The nominal cost of the operation is 1, as it is a constant-time operation.
		To compute the change in potential we look at each of the three components of the potential function: 

\begin{description}
    \item[Change in $\potdown{}$:] 
    By Fact~\ref{f3} this is
    $\potdowna{i-1}-\potdown{i}$. We have $ \potdowna{i-1}-\potdown{i} \leq \potdowna{i-1} \leq F_{i-4}$ since $\potdowna{i-1}=\maxzero{ F_{i-4}-\downsuma{i-1} }$ by definition.
    
    \item[Change in $\potsize{}$:] This is $   	\potsizea{i-1}-\potsize{i}$. As $|\jafter{S}_{i-1}|=F_i$, $\potsizea{i-1}=0$ by definition. We can now bound $    -\potsize{i}$:
$    	 - (F_{i+1}-\overbrace{|\jbefore{S}_i|}^{F_i})
    \leq -	F_{i+1}+F_i
    = -F_{i-1}    
    $.
    \item[Change in $\potset{}$: $0$.] The number of nonempty sets does not change. 
\end{description}

In summary, by definition, the amortized cost is the nominal cost plus the change in potential: $1+F_{i-4}-F_{i-1}+0 <0$ as $i \geq 6$.

\end{proof}

\subsection{Underflow-thru}\label{ss:ut}

 \figba{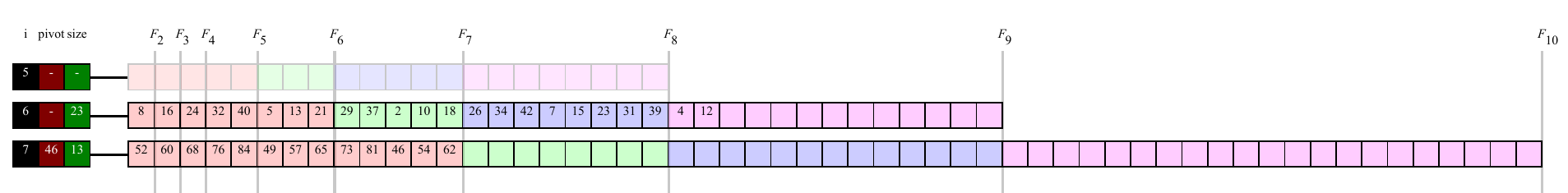}{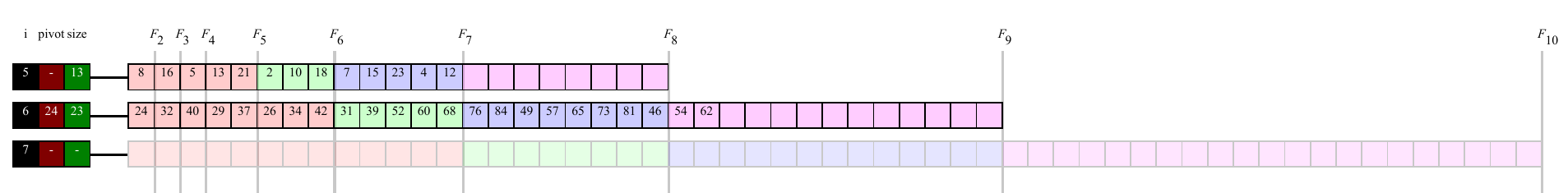}{An underflow-thru operation performed on $S_7$ which has $F_7=13$ elements when $S_5$ is nonempty.}{f:ut}

See Figure~\ref{f:ut}. The underflow-thru operation is called when a set $\jbefore{S}_i$ is underfull, and thus has $F_{i}$ items, and the set above, $\jbefore{S}_{i-1}$ is nonempty. 
As there can not be more than two non-empty consecutive sets by the consecutive nonempty invariant, this implies $\jbefore{S}_{i-2}$, is nonempty.

Recall that in this operation, $\jbefore{S}_i$ and $\jbefore{S}_{i-1}$ are combined, and the smallest items are moved from the combined set to $\jafter{S}_{i-2}$ so that the size of $S_{i-1}$ is unchanged, thus $|\jafter{S}_{i-1}|=|\jbefore{S}_{i-1}|$ and $|\jafter{S}_{i-2}|=|\jbefore{S}_{i}|=F_i$. This requires a linear time median/order statistic algorithm and thus has nominal cost $F_{i-6}$ since $S_{i}$ is the largest set involved. 
The pivots $\jafter{p}_{i-1}$ and $\jafter{p}_{i-2}$ are set to the minimum in each set.

\begin{lemma}
	The amortized cost of the underflow-thru operation is at most 0 when $i\geq 3$.
\end{lemma}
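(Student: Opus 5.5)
The proof will mirror the overflow-thru analysis, with the roles of "above" and "below" swapped. The amortized cost is the nominal cost $F_{i-6}$ plus the change in each of the three potential components. I will bound the up potential, size potential and nonempty potential separately and then add them up.

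\textbf{Nonempty potential.} Before the operation, $S_i$, $S_{i-1}$ and $S_{i-2}$ are all nonempty, and they remain nonempty afterwards. So the nonempty potential is unchanged, contributing $0$.

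\textbf{Up potential.} The operation only redistributes items among $S_{i-2},S_{i-1},S_i$. By Fact~\ref{f3}, the change is therefore
\[
\potdowna{i-2}+\potdowna{i-1}+\potdowna{i}-\potdownb{i-2}-\potdownb{i-1}-\potdownb{i}.
\]
Afterwards $\jafter{S}_{i-2}$ is nonempty, so Fact~\ref{f1} gives $\potdowna{i-1}=\potdowna{i}=0$. The negative terms are nonpositive, so the change is at most $\potdowna{i-2}$. The quantity $\downsum{i-2}$ is unchanged, since the sets above $S_{i-2}$ are untouched. Hence
\[
\potdowna{i-2}=\potdownb{i-2}\le \max\{0,F_{i-5}\},
\]
and the up-potential change is in fact $\le 0$. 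This is a slight gain over the overflow case. I should double-check the edge cases $i-2\le 2$, where the up potential is defined as $0$ anyway.

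\textbf{Size potential.} $|\jafter{S}_{i-1}|=|\jbefore{S}_{i-1}|$, so that term is unchanged. The set $\jafter{S}_i$ now has $|\jbefore{S}_{i-1}|$ items; this is the step I expect to be the main obstacle (see below). The set $\jafter{S}_{i-2}$ has $F_i$ items, which lies in the blue range $[F_{i-1},F_i]$ of $S_{i-2}$, so $\potsizea{i-2}=0$. We lose $\potsizeb{i}=F_{i+1}-F_i=F_{i-1}$, while $\potsizeb{i-2}\ge 0$ is also lost.

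\textbf{Main obstacle.} The new size of $\jafter{S}_i$ must be checked. The statement says the size of $S_{i-1}$ is kept unchanged, with the $F_i$ smallest items moved into $S_{i-2}$. The remaining items must fill $S_{i-1}$ and $S_i$: $S_{i-1}$ keeps its old count, so $S_i$ receives $F_i+|S_{i-1}|-|S_{i-1}|=F_i$ items again. Read this way, $S_i$ would remain underfull and nothing is gained. The consistent reading is that $S_{i-2}$ receives $F_i$ items and the sizes shift down by one slot: $S_{i-1}$ keeps its count, and $S_i$ gets the old $S_{i-1}$'s count. That count lies in $[F_{i-1},F_{i+2}]$, so its size potential in slot $i$ is at most its former size potential in slot $i-1$ plus an $O(F_{i-2})$ term.

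I would resolve this by working directly from the order-statistic description: the combined set has $F_i+|S_{i-1}|$ items, the $F_i$ smallest go to $S_{i-2}$, and the rest stay in $S_{i-1}$ and $S_i$ by the prescribed split. I would then verify, with the Fibonacci identities used in Fact~\ref{f7}, that the net size-potential change is at most $-F_{i-1}+F_{i-3}=-F_{i-2}$.

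\textbf{Conclusion.} Summing, the amortized cost is at most $F_{i-6}+0-F_{i-2}+0\le 0$. This holds for all $i\ge 3$, using $F_{i-6}\le F_{i-2}$ with small or negative indices interpreted as $0$.
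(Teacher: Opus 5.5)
Your decomposition into the three potential components, using Fact~\ref{f3} and Fact~\ref{f1}, matches the paper's. However, the argument has a genuine gap caused by misreading what the operation does.

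Because $S_{i-1}$ and $S_i$ are both nonempty, the consecutive-nonempty invariant forces $S_{i-2}$ to be \emph{empty} beforehand. The word ``nonempty'' in the operation's description is a slip; $|\overline{S}_{i-2}|=|S_i|=F_i$ only makes sense if $S_{i-2}$ was empty. Afterwards $S_i$ is \emph{empty}: of the $F_i+|S_{i-1}|$ combined items, $F_i$ go to $S_{i-2}$ and $|S_{i-1}|$ stay in $S_{i-1}$, which leaves $0$ for $S_i$. Your count $F_i+|S_{i-1}|-|S_{i-1}|=F_i$ forgets to subtract the $F_i$ items sent to $S_{i-2}$.

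That arithmetic error is why you think $S_i$ stays underfull, and it leads you to the ``shift'' reading. That reading is not the operation and does not even conserve the number of items. The size-potential bound $-F_{i-2}$ you announce for it is only promised, never derived. Since this is the term that must pay for everything, the proof as written is incomplete.

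With the correct picture the obstacle disappears.
\begin{itemize}
\item \textbf{Size potential.} $\overline{S}_i$ and the old $S_{i-2}$ are empty, so both have potential $0$. $\overline{S}_{i-2}$ has $F_i$ items, which lies in the blue range $[F_{i-1},F_i]$ of slot $i-2$, so its potential is $0$. $|S_{i-1}|$ is unchanged. Hence the change is exactly $-(F_{i+1}-F_i)=-F_{i-1}$.
\item \textbf{Nonempty potential.} It is unchanged, but because $S_{i-2}$ becomes nonempty while $S_i$ becomes empty, not because all three sets stay nonempty.
\item \textbf{Up potential.} Your claim that the change is $\le 0$ is false. $\phi^{\uparrow}_{i-2}$ was $0$ since $S_{i-2}$ was empty, and afterwards it can be as large as $F_{i-5}$. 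Meanwhile $\overline{\phi}^{\uparrow}_{i}=0$ since $\overline{S}_i$ is empty, and $\phi^{\uparrow}_{i}=\overline{\phi}^{\uparrow}_{i-1}=0$ by Fact~\ref{f1}. So the change is at most $F_{i-5}$.
\end{itemize}
The amortized cost is then at most $F_{i-6}+F_{i-5}-F_{i-1}=F_{i-4}-F_{i-1}<0$, which is exactly the paper's computation.
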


\begin{proof}
The nominal cost of the operation is $F_{i-6}$.	To compute the change in potential we look at each of the three components of the potential function:

\begin{description}
    \item[Change in $\potdown{}$:] 
    By Fact~\ref{f3} this is
    $\potdowna{i-1}+\potdowna{i-2}-\potdownb{i}-\potdownb{i-1}$. By Fact~\ref{f1}, $\potdownb{i}$ and $\potdowna{i-1}$ are both zero. Also, $-\potdownb{i-1}$ is at most 0. Thus the  change in $\potdownb{}$ is at most $\potdowna{i-2}$ which is $\maxzero{ F_{i-5}-\downsuma{i-2}} \leq F_{i-5}$.
    \item[Change in $\potsize{}$:] This is $\potsizea{i-2}+\potsizea{i-1}-\potsizeb{i-1}-\potsizeb{i}$.
    Note that while the contents of $S_{i-1}$ will change, the size will not and thus $\potsizeb{i-1}=\potsizea{i-1}$. Also note that by definition, the size of $\jafter{S}_{i-2}$ is $F_{i}$ and thus by definition $\potsizea{i-2}=0$. So, the change in $\potsize{}$ is $-\potsizeb{i}$:
\[
    	-\potsizeb{i}
    	-(F_{i+1}-\overbrace{|\jbefore{S}_i|}^{F_i})
    \leq -	(F_{i+1}-F_i)
    = -F_{i-1}    
\]
    \item[Change in $\potset{}$: $0$] The number of nonempty sets does not change. 
\end{description}
In summary, by definition, the amortized cost is the nominal cost plus the change in potential: 
$
F_{i-6}+F_{i-5}-F_{i-1}+0 <0
$.
\end{proof}

\subsection{\INSERTplain}\label{ss:ins}

In an \INSERT\ operation, a binary search is performed on the pivots to determine which set the newly inserted item is added to. 
If the item is smaller than the smallest pivot, it is added to the first set, $S_3$.
Denote this set as $j$.
The size of $j$ is updated and the pivot of $j$ is not updated as it is still valid.
 If this becomes overfull, a single overflow-down or overflow-thru is performed which eliminates the invariant violation on the set size but may cause too many empty sets in a row below $j$ and too many nonempty sets in a row above $j$. The former is fixed via split-up, and the latter via merge-down, which can cascade downwards. 

\begin{lemma}
	The amortized cost of the \INSERT\ operation is $O(\lg \lg n)$.
\end{lemma}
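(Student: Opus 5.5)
We split the cost of \INSERT into the plain insertion step and the invariant-restoring operations it may trigger. Each restoring operation is charged at its amortized cost, which the preceding lemmas bound by at most $0$ for sufficiently large set index.

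\textbf{The direct part.} The binary search over the pivots costs $O(\lg\lg n)$. This uses the fact that the number of set slots is $O(\lg n)$. That fact follows because a nonempty $S_i$ other than the first has at least $F_i$ elements, and at most eight consecutive slots are empty, so the largest index $k$ satisfies $F_{k-9}\le n$, i.e.\ $k=O(\lg n)$.

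Appending the element costs $O(1)$. The resulting potential change must then be bounded term by term:
\begin{itemize}
\item $\potset{}$ changes only if $e$ goes into a previously empty first set. This adds at most $1$.
\item $\potsize{j}$ grows by at most $1$, since it is piecewise linear with slope $\pm1$ in $|S_j|$.
\item $\potdown{i}$ for $i>j$ can only decrease, because each $\downsum{i}$ increases by one and the up potential is nonincreasing in $\downsum{i}$.
\item $\potdown{j}$ itself is unchanged unless $S_j$ was empty before. In that case $j$ is the first set $S_3$, where $\potdown{3}\le F_0=0$ and $i\le 2$ cases vanish. Handling this case is essentially Fact~\ref{f6}.
\end{itemize}
Hence the direct amortized cost is $O(\lg\lg n)$.

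\textbf{The triggered cascade.} If $S_j$ becomes full, one overflow-down or overflow-thru is executed. These have amortized cost $\le 0$ for $i\ge 3$ and $i\ge 5$ respectively. For the few small indices not covered, Fact~\ref{f6} bounds the cost plus potential change by $O(1)$. The overflow can break the consecutive-empty invariant below it and the consecutive-nonempty invariant around it. The fix-ups are split-ups (amortized cost $\le 0$ for $i\ge 11$, $O(1)$ otherwise) and merge-downs (amortized cost $\le 0$). Summing the amortized costs, the cascade contributes only $O(1)$ for the boundary cases plus nonpositive terms, as long as each operation is applied only in the situation its lemma assumes.

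\textbf{Main obstacle.} The delicate step is verifying that the cascade terminates and that every operation in it is invoked under the hypotheses of its lemma. For example, an overflow-thru requires $S_{i+2}$ to be empty, and a merge-down requires an empty slot below. It must also be checked that the cascade never re-triggers a size violation, and that small-index exceptions occur only $O(1)$ times per \INSERT.

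My plan for this step is to track the pattern of empty and nonempty slots. An overflow only shifts material downward by one or two slots without changing set sizes beyond the full set. A merge-down reduces the number of nonempty sets, and it produces a set in $S_{i+1}$ whose size lies in the allowed range by the Fibonacci recurrence, as in Fact~\ref{f7}. Using the number of nonempty sets as a progress measure for merge-downs, and noting that split-ups occur at most once per overflow, gives termination. Since all of these operations have amortized cost $\le 0$ except $O(1)$ boundary ones, the total amortized cost of \INSERT is $O(\lg\lg n)$.
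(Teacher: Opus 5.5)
Your proposal is correct and follows the paper's argument: the direct cost is the $O(\lg\lg n)$ binary search plus an $O(1)$ change in the up, size and nonempty potentials, and every triggered overflow, split-up and merge-down is charged at its previously established amortized cost of at most $0$, plus $O(1)$ for the constant-size boundary cases. You are somewhat more careful than the paper. You account for the nonempty potential when the element lands in a previously empty first set, where the paper asserts a change of $0$. You also flag termination of the cascade and whether each restoring operation is applied under its lemma's hypotheses, which the paper takes for granted and you only sketch.
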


\begin{proof}
We need not analyze any invariant-restoring operations as these have been shown to have amortized cost at most zero (except for those called on constant sized sets which can happen at most a constant number of times with a constant amortized cost).

The nominal cost of this operation is $O(\lg \lg n)$ due to the binary search on the pivots of the $\Theta(\lg n)$ sets.

To compute the change in potential we look at each of the three components of the potential function. 

\begin{description}
\item[Change in $\potdown{i}$:] As this is defined as $ \maxzero{F_{i-3}-\downsum{i}}$, and the $\downsum{i}$ are only incremented if they change, this potential increases by at most 0.
    \item[Change in $\potsize{i}$:] $\leq 1$. The set $j$ now has one more element and thus could gain one unit of size potential.
    \item[Change in $\potset{i}$:] 0, as the number of sets do not change.
\end{description} 

In summary, by definition, the amortized cost excluding any invariant-restoring operations is the nominal cost plus the change in potential which is at most: 
$
O(\lg \lg n) + 0 + 1 + 0 = O(\lg \lg n)$.
Including any calls to invariant-restoring operations, which have a total of $O(1)$ amortized cost, gives the total 
amortized cost of $O(\lg \lg n)$.
\end{proof}

\subsection{\DELETEMINplain}\label{ss:em}

In the \DELETEMIN\ operation, we find and remove the smallest element from the smallest non-empty set, which we denote as $S_j$. Due to the consecutive empty invariant, $j\leq 11$ and thus this takes constant time as $|S_j|\leq F_{14}=377$.
We update the size of the set $j$.
If the set becomes underfull, we perform a single underflow-up or underflow-thru, which can cause a cascading chain of merge-downs to be performed.

\begin{lemma}
	The amortized cost of the \DELETEMIN\ operation is $O(\lg n)$.
\end{lemma}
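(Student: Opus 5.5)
We follow the template of the \INSERT lemma. Every invariant-restoring operation triggered during \DELETEMIN (underflow-up, underflow-thru, the cascade of merge-downs, and any split-ups) has been shown to have amortized cost at most $0$. The only exceptions are operations on sets with constant index, which have $O(1)$ amortized cost by Fact~\ref{f6}, and there are only $O(1)$ such operations per call. It therefore remains to bound the nominal cost of locating and removing the minimum, plus the change in potential caused directly by the removal, before any restoring operation is performed.

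For the nominal cost: by the consecutive empty invariant, the first nonempty set is $S_j$ with $j\le 11$, so $|S_j|\le F_{14}$. Scanning $S_j$ for the minimum therefore costs $O(1)$, as does updating its size. If the minimum was the element used as a pivot, the pivot may still stay as is, since pivots may be elements no longer in the heap. Finding $S_j$ itself costs $O(1)$, because we scan at most $11$ slots.

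For the potential change, we examine the three components in turn.
\begin{itemize}
\item The nonempty potential changes by at most $0$; it decreases by one if $S_j$ becomes empty.
\item The size potential changes only for $S_j$, by at most $1$. If $S_j$ was in the green range, one fewer element raises $F_{j+1}-|S_j|$ by one. In the other ranges it does not increase. For $S_3$, which has no lower bound, the same bound holds by Fact~\ref{f6}.
\item The up potential is the main term. Removing one element from $S_j$ decreases $\downsum{i}$ by one for every $i>j$, so each nonempty $S_i$ with $i\ge 3$ may gain one unit of $\potdown{i}=\maxzero{F_{i-3}-\downsum{i}}$. This is the "unit payment to each of $\Theta(\lg n)$ sets" described in the intuition. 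Its total is bounded by the number of sets.
\end{itemize}

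The step that needs care is showing that the number of sets, i.e.\ the largest index $k$ of a nonempty set, is $O(\lg n)$. The plan is to take the last nonempty set $S_k$. By the consecutive empty invariant, at least one of $S_{k}, S_{k-1},\ldots,S_{k-8}$ with index $\ge 4$ is nonempty and not the exempt set $S_3$; the slot $S_k$ itself is such a set for large $k$. Since a nonempty $S_k$ with $k\ge 4$ has size at least $F_k$, we get $n\ge F_k\ge \varphi^{k-2}$, hence $k=O(\lg n)$. If all sets other than $S_3$ are empty, then $k=O(1)$ trivially.

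Thus the direct potential increase is at most $1+k=O(\lg n)$. Adding the $O(1)$ nominal cost and the nonpositive, or $O(1)$ for constant-index, amortized costs of the cascading restoring operations gives an amortized cost of $O(\lg n)$. One subtlety to check is that the cascade of merge-downs cannot trigger restoring operations whose amortized bound fails, e.g.\ an underflow-up at $i<6$. These occur only at constant indices, so they are covered by the $O(1)$ exception, exactly as in the \INSERT proof.
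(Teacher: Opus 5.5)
Your proposal is correct and follows the paper's proof essentially step for step. The paper uses the same ingredients: $O(1)$ nominal cost since $j\le 11$, the size potential rising by at most $1$, the nonempty potential not increasing, the up potential rising by at most one unit per set, and the invariant-restoring operations contributing at most $O(1)$ amortized. Your one addition is an explicit justification of the $O(\lg n)$ bound on the number of sets, via $n \ge F_k \ge \varphi^{k-2}$ for the last nonempty set $S_k$, which the paper takes for granted. The detour through the consecutive-empty invariant there is unnecessary, since $S_k$ is nonempty by definition.
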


\begin{proof}
We need not analyze any invariant-restoring suboperations as these have been shown to have amortized cost at most zero (except for those called on constant sized sets which can happen at most a constant number of times with a constant amortized cost).

The actual cost of this operation excluding the invariant-restoring sub-operations is $O(1)$.

To compute the change in potential we look at each of the three components of the potential function. 

\begin{description}

\item[Change in $\potdown{i}$:] As this is defined as $ \maxzero{F_{i-3}-\downsum{i}}$ for the nonempty sets (and 0 for the empty sets), and each of the $O(\lg n)$ $\downsum{i}$ will decrease by at most 1, the up potential increases by at most $O(\lg n)$.
    \item[Change in $\potsize{i}$:] $\leq 1$. The set $j$ now has one less element and thus could gain one unit of size potential.
    \item[Change in $\potset{i}$:] At most 0.
\end{description} 

In summary, by definition, the amortized cost is the actual cost plus the change in potential which is: 
$
\Oh{1} + \Oh{\lg n} + 1 + 0 = \Oh{\lg n} 
$.
Including any calls to invariant-restoring operations, which have a total of $\Oh{1}$ amortized cost, gives the total 
amortized cost of $\Oh{\lg n}$.
\end{proof}

\subsection{\DECREASEKEYplain}\label{ss:dc}

This operation is carried out by first removing the item from its current set, $S_j$, decreasing it, and re-inserting it in the appropriate set $S_k$; crucially $k\leq j$ as the key is only decreased. Two binary searches are needed on the previous value (to update the size of the set it was in) and the new set (to insert and also update the size). The pivots of both sets remain valid with no change needed. After the removal, if the set is underfull a single underflow-up or underflow-thru is performed, and then split-up and merge-downs are performed as needed until all invariants are restored; after the insertion if the set is overfull a single overflow-down or overflow-thru is performed, and then split-up and merge-down are performed as needed until all invariants are restored. 

\begin{lemma}
	The amortized cost of \DECREASEKEY\ is $\Oh{\lg \lg n}$.
\end{lemma}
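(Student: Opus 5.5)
The proof follows the same template as the \INSERT and \DELETEMIN lemmas. First we discharge all invariant-restoring sub-operations: by the lemmas for split-up, merge-down, overflow-down, overflow-thru, underflow-up and underflow-thru, each of these has amortized cost at most $0$. The exceptions are operations called on sets of constant index ($i<11$ or similar), where Fact~\ref{f6} bounds the potential by $\Oh{1}$. Only a constant number of these can be triggered, because each \DECREASEKEY causes at most one underflow and at most one overflow. Each of those starts a cascade that touches only constantly many small-index sets. So all restoring work together contributes $\Oh{1}$ amortized, and it remains to bound the nominal cost plus the potential change of the bare removal and reinsertion.

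The nominal cost is two binary searches over the $\Oh{\lg n}$ pivots plus constant-time list manipulation, i.e.\ $\Oh{\lg\lg n}$. For the potential we treat the three components separately, viewing the operation as moving one element from $S_j$ to $S_k$ with $k\le j$.

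\textbf{Up potential.} $\downsum{i}=\sum_{m<i}|S_m|$ is unchanged for $i\le k$ and for $i>j$. For $k<i\le j$ it increases by one, since the element now lies before $S_i$. Because $\potdown{i}=\maxzero{F_{i-3}-\downsum{i}}$ is nonincreasing in $\downsum{i}$, the up potential does not increase. The one subtle point is $S_j$ becoming empty, which only sets $\potdown{j}$ to $0$, and $S_k$ becoming nonempty. In the latter case, if $k$ was empty before, the reinsertion would land in the first nonempty set below the pivot range, so $S_k$ is already nonempty. Even if not, the step is handled as in \INSERT, where the paper likewise charges nothing.

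\textbf{Size potential.} $|S_j|$ drops by one and $|S_k|$ grows by one, so by the piecewise-linear definition each changes $\potsize{}$ by at most $1$, for a total increase of at most $2$.

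\textbf{Nonempty potential.} The number of nonempty sets can only drop, or rise by at most $1$ if $S_k$ was empty, so this term changes by at most $+1$.

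Summing gives an amortized cost of $\Oh{\lg\lg n}+0+2+1+\Oh{1}=\Oh{\lg\lg n}$. The main obstacle is the up-potential step, which is where the monotonicity $k\le j$ from decreasing keys is used essentially. An increase-key would lower some $\downsum{i}$ and could raise up potential by $\Theta(\lg n)$, just as in \DELETEMIN. I would also check that emptying $S_j$ needs no immediate split-up. If it violates the consecutive-empty invariant, the resulting split-up is again one of the zero-cost restoring operations, so it is already covered by the first step.
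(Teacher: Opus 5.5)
Your proof is correct and matches the paper's argument step for step. Both discharge the restoring operations as amortized nonpositive (with $\Oh{1}$ for the constant-index ones), charge $\Oh{\lg\lg n}$ for the two binary searches, use $k\le j$ to show that every $\downsum{i}$ can only grow so the up potential cannot increase, and bound the size-potential increase by $2$. Your deviations are harmless over-caution: you allow $+1$ for the nonempty potential where the paper has $0$, and your ``even if not'' fallback for an empty target set is unnecessary. That fallback would also not be free, since $\potdown{k}$ could then jump by up to $F_{k-3}$. As you note first, the pivot search always reinserts into an already nonempty set (or into the first set $S_3$, whose up potential is $0$).
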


\begin{proof}
We need not analyze any invariant-restoring suboperations as these have been shown to have amortized cost at most zero (except for those called on constant sized sets which can happen at most a constant number of times with a constant amortized cost).

The nominal cost of this operation is $O(\lg \lg n)$ due to the two binary search operations.

To compute the change in potential we look at each of the three components of the potential function. 

\begin{description}

\item[Change in $\potdown{i}$:] This is defined as $ \maxzero{ F_{i-3}-\downsum{i} }$ for the nonempty sets, and 0 for the empty sets. For $i<k$ and $i>j$ this will not change, but for $i$ between $k$ and $j$, that is for sets between the removal and re-insertion, $\downsum{i}$ will increase by 1, and thus $\potdown{i}$ will decrease by 1. This is where we crucially use the fact that keys can only decrease, as if we allowed an increase the potential here could grow by $\Theta(\lg n)$, but instead it may only shrink.
\item[Change in $\potsize{i}$:] $\leq 2$. The insertion and deletion could each cause a unit change in the size potential of the affected sets.
    \item[Change in $\potset{i}$:] 0, as the number of sets do not change.
\end{description} 

In summary, by definition, the amortized cost is the actual cost plus the change in potential which is at most: 
$
\Oh{\lg \lg n} + 0 + 2 + 0 = \Oh{\lg \lg n}$.
Including any calls to invariant-restoring operations, which have a total of $\Oh{1}$ amortized cost, gives the total 
amortized cost of $\Oh{\lg \lg n}$.
\end{proof}

}

\section{Heaps with Exponential Upper Bounded Sets}
\label{sec:exponential}

\newcommand{\phiPull}{\phi^{\mathrm{pull}}}
\newcommand{\phiPush}{\phi^{\mathrm{push}}}
\newcommand{\phiInsert}{\phi^{\mathrm{insert}}}
\newcommand{\phiPullAfter}{\after\phi^{\mathrm{pull}}}
\newcommand{\phiPushAfter}{\after\phi^{\mathrm{push}}}
\newcommand{\phiInsertAfter}{\after\phi^{\mathrm{insert}}}

In this section we consider another variant of a partition-based simple heap consisting of sets $S_1,\ldots,S_\ell$ and pivots $p_2\leq\cdots\leq p_\ell$.
Again, the time obtained for {\DELETEMIN} is amortized $\Oh{\lg n}$ and for {\INSERT} and {\DECREASEKEY} amortized $\Oh{\lg\lg n}$. 
In this variant we have exponential increasing upper bounds on the sizes of the sets, without any lower bounds, and allowing an arbitrary number of the sets to be empty. This is in contrast to the stronger structural invariants of the FH:TNG in \wref{sec:Fibonacci}.

\begin{description}
\item[{Invariant:}] $|S_i| < 3 \cdot 2^i$ for $1 \leq i \leq \ell$.
\end{description}

\noindent
The heap operations are implemented as follows:

\begin{itemize}
\item {\INSERT}($e$):\\ 
    Binary search over the pivots to find $i$ where
    $p_{i} \leq e < p_{i+1}$,
    and append $e$ to $S_i$. If
    $|S_i|=3\cdot 2^i$, perform a recursive \emph{push} of $S_i$ to
    $S_{i+1}$ (see below).

\item {\DELETEMIN}():\\
    If $S_1$ is empty, first recursively \emph{pull} elements
    into $S_1$ from $S_2$ (see below). Delete and return the minimum from $S_1$.
    If  $\ell>1+\lg n$, let $S_{\ell-1}$ be the concatenation of $S_{\ell-1}$ and $S_{\ell}$, 
    and discard $S_{\ell}$ and $p_{\ell}$ before returning, i.e., $\ell$ decreases by one (this ensures $\ell \leq 1+\lg n$).
    
\item {\DECREASEKEY}($\mathit{ptr}$, $\newkey$):\\ 
    Delete the element~$e$ pointed to by $\mathit{ptr}$ from its current 
    set $S_j$ (to decrement the size $|S_j|$ we find $j$ by a binary search over
    the pivots). Change the key of $e$ to $\newkey$,
    and perform a binary search over the pivots to find the set $S_i$ where to add $e$, and append $e$ to $S_i$.
    If $|S_i| = 3\cdot 2^i$, perform a recursive \emph{push} of $S_i$ to~$S_{i+1}$.
\end{itemize}

\paragraph{Push}

For {\INSERT} and {\DECREASEKEY}, the set $S_i$ receiving one more element might get size $|S_i| = 3 \cdot 2^i$. We \emph{push} all $3\cdot 2^i$
elements of $S_i$ to $S_{i+1}$, and set $S_i=\emptyset$. 
In general, a recursive push to $S_j$ of a set of elements $X$ 
always satisfies $2^{j-1} \leq |X| \leq 3\cdot 2^{j-1}$. 
If $|S_j|< 2^j$ before the push (i.e., $S_j$ is too small to be
pushed to $S_{j+1}$), $X$ is concatenated to $S_j$ in constant time, such that
$|S_j| < 5 \cdot 2^{j-1}$, and the push terminates. Otherwise, the set $X$ becomes $S_j$ and the old
$S_j$ is recursively pushed to~$S_{j+1}$. The pivots need to be updated
accordingly, and when $j=\ell+1$ a new set $S_{\ell+1}=X$ is
created, i.e., $\ell$ increases by one.

\paragraph{Pull}

The recursive \emph{pull} of elements into $S_1$, or more generally into $S_i$
when $S_i$ is empty, is performed as follows: If $S_{i+1}$ is empty, we first
recursively pull elements into $S_{i+1}$. If $|S_{i+1}|\leq 2^{i-1}$, we swap the
roles of $S_{i+1}$ and the empty~$S_i$, i.e., in constant time move all elements
from $S_{i+1}$ to $S_i$.
Otherwise, we select and delete the $2^{i-1}$ smallest elements from $S_{i+1}$ in time
$\Oh{2^i}$ and let $S_i$ be these (and update $p_{i+1}$; 
here it is crucial that elements are distinct to ensure $\max S_{i} < p_{i+1}=\min S_{i+1}$).

\subsection{Analysis}

To prove the amortized performance of this structure we apply the following 
potential function $\Phi = \phiInsert + \phiPush + \phiPull$, where
\[
    \phiInsert := \sum_{i=1}^{\ell} \phiInsert_{i}\quad\quad\quad
    \phiPush := \sum_{i=1}^{\ell} \phiPush_{i}\quad\quad\quad
    \phiPull := \sum_{i=1}^{\ell} \phiPull_{i}
\]
\[
    \phiInsert_{i} := \maxzero{|S_i| - 5 \cdot 2^{i-1}} \in [0, 2^{i-1}]
    \quad\quad\quad
    \phiPush_{i} := \left\lfloor |S_i| /2^{i} \right\rfloor \in [0, 3]
\]
\[
    \phiPull_{i} := \maxzero{2^{i-1} - \sum_{j=1}^i |S_j|} \in [0, 2^{i-1}]
\]

The basic idea is to let each set $S_i$ have a potential. A big set will have a push potential $\phiPush_i$ that releases one unit of potential when the set is moved to $S_{i+1}$. An insertion into $S_i$ can trigger a push of $S_i$. The saved up insertion potential $\phiInsert_i$ will be released by the push of $S_i$ to cover the increase in pull potential~$\phiPull_i$. Finally, pull potential is intuitively associated with each ``empty slot'' among the first $2^{i-1}$ ``slots'' in $S_i$, with a potential to cover the cost to be recursively filled from each of the subsequent sets $S_{i+1},\ldots,S_{\ell}$.
\LATIN{We refer the reader to the full version of the paper for a detailed amortized analysis.}
\ARXIV{%
Figure~\ref{fig:power-potentials} illustrates the relation between set sizes and the different potentials.

\begin{figure}[ht]
    \centering
    \begin{tikzpicture}[xscale=0.3, yscale=0.75]
        \foreach \c/\t [count=\i from 0] in {green/ideal size, orange/push potential, cyan/insertion potential} {
            \fill[\c] (28, -\i*0.75) rectangle ++(1, -0.5);
            \draw (29, -\i*0.75-0.25) node[right] {\small \t};
        }
        \foreach \i/\t in {0/S_1,1/S_2,2/S_3} {
            \node[left] at (0, -\i) {$\t$};
            \node at (2^\i, -\i+0.5) {\scriptsize $2^\i$};
            \node at (6*2^\i, -\i+0.5) {\scriptsize $6\!\cdot\!2^\i$};
            \ifthenelse{\equal{\i}{0}}{}{
                \node at (5*2^\i, -\i+0.5) {\scriptsize $5\!\cdot\!2^\i$};
                \node at (2*2^\i, -\i+0.5) {\scriptsize $2\!\cdot\!2^\i$};
            }
            \fill[green] (0, -\i+0.25) rectangle (2^\i, -\i-0.25);
            \fill[orange] (2^\i, -\i+0.25) rectangle (5*2^\i, -\i-0.25);
            \draw[dashed] (2*2^\i, -\i-0.25) -- ++(0, 0.5);
            \fill[cyan] (5*2^\i, -\i+0.25) rectangle (6*2^\i, -\i-0.25);
            \draw (0, -\i+0.25) rectangle (6*2^\i, -\i-0.25);
        }   
    \end{tikzpicture}
    \caption{Illustration of set sizes and potentials for the $S_i$ sets.}
    \label{fig:power-potentials}
\end{figure}
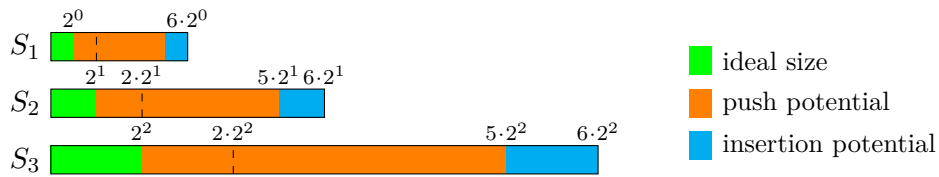

\begin{lemma}
\label{lem:power-ell}
    For a heap with $n$ elements $\ell \leq 1+\lg n$.
\end{lemma}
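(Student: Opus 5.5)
The plan is to prove the bound as an invariant: it holds for the empty/initial heap and is preserved by every operation. I will track the two quantities involved, $\ell$ and $n$, through each operation. Since $\lg n$ is monotone in $n$, the only dangerous events are those that increase $\ell$ and those that decrease $n$. I assume, as a convention, that the initial heap has $\ell=1$; then $\ell\le 1+\lg n$ holds trivially once $n\ge1$.

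\emph{Events increasing $\ell$.} The number of sets grows only inside a recursive push, namely when a push reaches $j=\ell+1$ and a new set $S_{\ell+1}=X$ is created. By the stated property of recursive pushes, a set $X$ pushed to $S_j$ satisfies $|X|\ge 2^{j-1}$, so here $|X|\ge 2^{\ell}$. All elements of $X$ are still in the heap, so afterwards $n\ge|X|\ge 2^{\ell}$. Hence the new number of sets satisfies $\ell+1\le 1+\lg n$. This argument needs no assumption on the state before the operation, which is convenient. Inserting the new element before the push only increases $n$, so the inequality on the left side is not harmed.

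I should check that at most one new set is created per push, since the recursion terminates once it has created $S_{\ell+1}$. I should also record the lower bound $|X|\ge 2^{j-1}$ along the recursion. This holds because a set $S_j$ is passed on only when $|S_j|\ge 2^j$, and the initial push of $S_i$ has $3\cdot 2^i\ge 2^{i}$ elements. Pulls never change $\ell$. Apart from its pushes, \DECREASEKEY leaves both $n$ and $\ell$ unchanged, so it is covered by the same argument.

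\emph{Events decreasing $n$.} Only \DELETEMIN decreases $n$, by exactly one, from $n+1$ to $n$. Before the operation, $\ell\le 1+\lg(n+1)$. The operation itself concatenates $S_{\ell-1}$ and $S_\ell$ exactly once if $\ell>1+\lg n$, so I must show that a single decrement of $\ell$ suffices. If $\ell>1+\lg n$, then $\ell-1\le \lg(n+1)\le 1+\lg n$, using $n+1\le 2n$ for $n\ge 1$. The boundary case $n=0$ (the heap becomes empty) needs a convention: either $\ell$ is reset to $1$, or the statement is read for $n\ge1$. I expect this edge case, together with carefully justifying the invariant $|X|\ge 2^{j-1}$ for pushes reaching $j=\ell+1$, to be the only delicate points. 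The rest is bookkeeping.
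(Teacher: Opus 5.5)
Your proof is correct and follows the same approach as the paper. A new set $S_{\ell+1}$ is created only by a push of at least $2^{\ell}$ elements, forcing $n \ge 2^{\ell}$, and the single concatenation in Delete-Min restores the bound when $n$ drops by one. Your explicit check $\ell-1 \le \lg(n+1) \le 1+\lg n$ for $n \ge 1$ justifies the step that the paper only asserts, namely that one concatenation always suffices.
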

\begin{proof}
    $S_{\ell}$ is created when at least $2^{\ell-1}$ elements are pushed from $S_{\ell-1}$ to
    $S_{\ell}$ during {\INSERT}, i.e., $n\geq 2^{\ell-1}$ implying $\ell \leq 1+\lg n$.
    During {\DELETEMIN} $n$ decreases by one.
    The concatenation of $S_{\ell-1}$ and $S_{\ell}$ when $\ell > 1+\lg n$, ensures $\ell$ decreases by one and  $\ell \leq 1+\lg n$ after {\DELETEMIN}.
\end{proof}

\begin{lemma}
\label{lem:insert}
    An {\INSERT} operation, excluding recursive pushing, takes worst-case
    $\Oh{\lg \ell}$ time and the change in potential is $\Delta \Phi \leq 2$.
\end{lemma}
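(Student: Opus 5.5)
The lemma has two parts, and I would treat them separately: first the running time, then the change in each of the three potential components caused by appending $e$ to $S_i$, excluding any push.

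\textbf{Running time.} The binary search over the $\ell$ pivots $p_2,\ldots,p_\ell$ takes $\Oh{\lg \ell}$ time. Appending $e$ to the doubly-linked list of $S_i$, incrementing $|S_i|$, and testing whether $|S_i|=3\cdot 2^i$ are all constant time. This gives worst-case $\Oh{\lg\ell}$, which is $\Oh{\lg\lg n}$ by \wref{lem:power-ell}.

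\textbf{Potential.} Only $|S_i|$ changes, and it grows by exactly one. I would bound the three components one at a time.

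\begin{itemize}
\item \emph{Insertion potential.} Only $\phiInsert_i=\maxzero{|S_i|-5\cdot 2^{i-1}}$ is affected. This is a $1$-Lipschitz, nondecreasing function of $|S_i|$, so it grows by at most $1$.
\item \emph{Push potential.} Only $\phiPush_i=\lfloor |S_i|/2^i\rfloor$ is affected. A floor of a quantity that increases by $1/2^i\le 1$ increases by at most $1$.
\item \emph{Pull potential.} For every $k$, the prefix sum $\sum_{j\le k}|S_j|$ either stays the same or grows by one. Each term $\phiPull_k$ is nonincreasing in its prefix sum, so $\phiPull$ does not increase.
\end{itemize}

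Summing the three bounds gives $\Delta\Phi\le 2$.

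One subtle point needs checking: the bounds $\phiInsert_i\in[0,2^{i-1}]$ and $\phiPush_i\in[0,3]$ may be violated momentarily when $|S_i|$ reaches $3\cdot 2^i$. The bound $\Delta\Phi\le 2$ still holds pointwise, because the argument above uses only monotonicity and Lipschitz-type estimates, not those ranges; the push analysis then takes over.

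I expect no real obstacle here. The only step needing care is the pull term, where one has to make sure the sign goes the right way: inserting an element can only decrease the pull potential.
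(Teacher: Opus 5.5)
Your proof is correct and follows the paper's argument essentially verbatim: binary search plus a constant-time append gives $O(\lg \ell)$ time, $\phi^{\mathrm{insert}}_i$ and $\phi^{\mathrm{push}}_i$ each grow by at most one, and $\phi^{\mathrm{pull}}$ cannot increase because prefix sums only grow. One small correction to your side remark, which your argument does not rely on anyway: the stated ranges are not violated when $|S_i| = 3\cdot 2^i$. In that state $\phi^{\mathrm{insert}}_i = 2^{i-1}$ and $\phi^{\mathrm{push}}_i = 3$, exactly the upper endpoints, since the ranges were chosen to cover this transient state.
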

\begin{proof}
    The binary search for $i$ and appending to $S_i$ takes
    worst-case $\Oh{\lg \ell}$ time. Increasing $|S_i|$ by one can at most increase
    $\phiInsert_{i}$ and $\phiPush_{i}$ by one, and
    decrease $\phiPull_{j}$ for each $j \geq i$. It follows that
    $\Delta \Phi \leq 2$.
\end{proof}

\begin{lemma}
\label{lem:decrease-key}
    \sloppypar
    A {\DECREASEKEY} operation, excluding recursive pushing, takes
    worst-case $\Oh{\lg \ell}$ time and the change in potential is $\Delta \Phi \leq 2$.
\end{lemma}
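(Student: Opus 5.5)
The plan mirrors the proof of \wref{lem:insert}: split \DECREASEKEY into a deletion of $e$ from $S_j$ and a re-insertion into $S_i$, and bound time and potential change separately for each part.

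\emph{Time.} Finding $j$ is one binary search over the $\ell-1$ pivots, and finding $i$ after changing the key is another. Removing the node from the doubly-linked list of $S_j$, updating the two set sizes, and appending to $S_i$ take constant time. Hence the worst-case cost, excluding recursive pushing, is $\Oh{\lg \ell}$.

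\emph{Potential.} First consider removing $e$ from $S_j$, so $|S_j|$ decreases by one.
\begin{itemize}
\item $\phiInsert_j$ and $\phiPush_j$ are nondecreasing in $|S_j|$, so they cannot increase.
\item Each prefix sum $\sum_{m=1}^{k}|S_m|$ with $k\ge j$ decreases by one, so each $\phiPull_k$ with $k\ge j$ may increase by one.
\end{itemize}
Next consider appending $e$ to $S_i$. Because the key only decreases, $p_i \le \newkey < $ old key $< p_{j+1}$, which gives $i\le j$.
\begin{itemize}
\item $\phiInsert_i$ and $\phiPush_i$ each increase by at most one, contributing at most $2$.
\item Each prefix sum with $k\ge i$ increases by one.
\end{itemize}

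\emph{Combining the two parts.} For $k\ge j\ge i$ the prefix sum is unchanged overall, since $e$ has only moved within the prefix. For $i\le k<j$ the prefix sum increases by one. For $k<i$ nothing changes. Since $\phiPull_k=\maxzero{2^{k-1}-\sum_{m\le k}|S_m|}$ is nonincreasing in the prefix sum, no $\phiPull_k$ increases.

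The only step needing care is exactly this one: evaluating the pull potential on the net change rather than on the intermediate state after deletion. Adding the bounds from the deletion alone would charge up to $\ell$ to $\phiPull$. Once the net effect is used, the total change is $\Delta\Phi\le 2$, as claimed.
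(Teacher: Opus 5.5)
Your proposal is correct and matches the paper's proof in substance. Two binary searches give $\Oh{\lg \ell}$ time, and the insertion and push potentials of the source set can only drop while those of the target set rise by at most one each. Since the target index is at most the source index, the only affected pull potentials $\phi^{\mathrm{pull}}_k$ (those with $k$ between the two indices) can only decrease. The paper argues about the net move directly rather than splitting it into a deletion followed by an insertion, but this is the same argument.
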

\begin{proof}
    Performing {\DECREASEKEY} involves two binary searches on the $\ell-1$ pivots,
    requiring worst-case $\Oh{\lg \ell}$ time, and worst-case $\Oh{1}$ time for moving
    an element from $S_i$ to $S_j$, where $1 \leq j\leq i\leq \ell$.
    The potentials $\phiPull_{m}$ can only change for $j \leq m <
    i$, in which case they decrease by one,
    $\phiInsert_{i}$ and $\phiPush_{i}$ can only
    decrease, and $\phiInsert_{j}$ and
    $\phiPush_{j}$ at most increase by one. It follows that
    $\Delta\Phi \leq 2$.
\end{proof}

\begin{lemma}
\label{lem:push}
    The recursive pushing of $S_i$  into $S_{i+1}$, when $|S_i|=3 \cdot 2^i$, takes
    worst-case $\Oh{m-i}$ time and the change in potential is $\Delta \Phi \leq -m+i+3$, for some~$m$ where $i
    < m \leq \ell+1$.
\end{lemma}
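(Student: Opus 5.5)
The plan is to follow the push chain level by level. It starts at $S_i$ and passes through $S_{i+1},\ldots,S_{m-1}$. It stops at the first level $m$ that either had $|S_m|<2^m$ before the push, so $X$ is concatenated there, or satisfies $m=\ell+1$, so a new set is created. Each level costs $\Oh{1}$ (one concatenation or swap of list heads and one pivot update), so the time is $\Oh{m-i}$. For the potential I would split $\Delta\Phi$ into the contributions of level $i$, of the intermediate levels $i<j<m$, and of the terminal level $m$. I would bound $\phiInsert$, $\phiPush$ and $\phiPull$ separately at each.

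\emph{Level $i$.} $S_i$ goes from $3\cdot 2^i$ elements to empty. So $\phiPush_i$ drops from $3$ to $0$ and $\phiInsert_i$ drops from $2^{i-1}$ to $0$. $\phiPull_i$ can rise by at most $2^{i-1}$, which is exactly paid by the released insertion potential. Net contribution: at most $-3$.

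\emph{Pull potentials at other levels.} The push only moves elements to higher indices. Hence for $k\ge m$ the prefix sum $\sum_{j\le k}|S_j|$ is unchanged, and so is $\phiPull_k$. For $i<k<m$ the new $S_k$ is the set $X$ pushed into it, of size at least $2^{k-1}$. Its prefix sum is therefore at least $2^{k-1}$, and $\phiPull_k$ becomes $0$, which cannot be an increase.

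\emph{Terminal level $m$.} After concatenation $|S_m|<5\cdot 2^{m-1}$, so $\phiInsert_m=0$. Also $\phiPush_m$ rises by at most $\lfloor 5\cdot2^{m-1}/2^m\rfloor=2$. If instead a new set $S_{\ell+1}=X$ is created, its push potential is at most $1$.

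\emph{Intermediate levels, $i<j<m$.} Here $S_j$ changes from an old set of size in $[2^j,3\cdot2^j)$ to $X$ of size in $[2^{j-1},3\cdot2^{j-1}]$. So $\phiInsert_j$ can only drop, to $0$, and $\phiPush_j$ goes from $\{1,2\}$ to $\{0,1\}$. To reach the claimed $-m+i+3$, I need a net decrease of $1$ per intermediate level. I would try to extract it by writing the change in push potential as a telescoping sum: the value at level $j$ changes from $\lfloor |S_j^{\mathrm{old}}|/2^j\rfloor$ to $\lfloor |S_{j-1}^{\mathrm{old}}|/2^{j}\rfloor$. I would pair this with the pull and insertion decreases and with the terminal bound. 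If the bookkeeping works, summing gives $-3$ from level $i$, about $-(m-i-1)$ from the intermediate levels, and at most $+2$ at level $m$, for a total within a constant of $-m+i+3$. Absorbing a constant shift into the definition of $m$ would then give the stated form.

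The main obstacle is exactly this per-level unit decrease. A naive count fails when the old $S_j$ has size in $[2^j,2^{j+1})$ and the incoming $X$ has size in $[2^j,3\cdot 2^{j-1}]$: then $\phiPush_j$ is $1$ both before and after. To handle it, I would first try a finer argument using the exact sizes along the chain. The first pushed set has size exactly $3\cdot 2^i$. Each old set pushed onward from level $j$ has size at least $2^j$, which lands in the lower half of the allowed range at level $j+1$. I would then track where the surplus potential of a large incoming set is spent. If that fails, I would charge the missing units to the $\phiPull$ decreases at levels $k$ with $i<k<m$, or argue that such "flat" levels force the chain to terminate within $\Oh{1}$ further steps.
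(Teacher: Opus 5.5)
Everything you actually verify is correct, but the step that carries the lemma is left open: the loss of one unit of potential per level of the chain. What you do verify is the $O(m-i)$ time, the loss of $3$ in $\phi^{\mathrm{push}}_i$, the cancellation of the released $\phi^{\mathrm{insert}}_i=2^{i-1}$ against the growth of $\phi^{\mathrm{pull}}_i$, the pull potentials at the other levels, and $\overline{\phi}^{\mathrm{push}}_m\le 2$. Without the per-level loss, your level-by-level accounting only gives $\Delta\Phi\le -1$, which cannot pay for $\Theta(m-i)$ work.

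You already have the right expression: the new value at level $j$ is $\lfloor |S^{\mathrm{old}}_{j-1}|/2^j\rfloor$. But you compare it with the old value $\lfloor |S^{\mathrm{old}}_j|/2^j\rfloor$ at the \emph{same level}. The missing idea is to follow each \emph{set} instead, comparing with $\lfloor |S^{\mathrm{old}}_{j-1}|/2^{j-1}\rfloor=\phi^{\mathrm{push}}_{j-1}$. For $i<j<m$ we have $|S^{\mathrm{old}}_{j-1}|\ge 2^{j-1}$: it equals $3\cdot 2^i$ when $j-1=i$, and otherwise this is exactly the condition for the chain to continue. Since $\lfloor x/2^j\rfloor\le\lfloor x/2^{j-1}\rfloor-1$ for every $x\ge 2^{j-1}$, you get $\overline{\phi}^{\mathrm{push}}_j\le\phi^{\mathrm{push}}_{j-1}-1$. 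Re-indexing, and using $\overline{\phi}^{\mathrm{push}}_i=0$,
\[
\Delta\phi^{\mathrm{push}}
=\overline{\phi}^{\mathrm{push}}_m+\sum_{j=i+1}^{m-1}\bigl(\overline{\phi}^{\mathrm{push}}_j-\phi^{\mathrm{push}}_{j-1}\bigr)-\phi^{\mathrm{push}}_{m-1}-\phi^{\mathrm{push}}_m
\le 2-(m-i-1).
\]
Adding $\Delta\phi^{\mathrm{insert}}\le -2^{i-1}$ and $\Delta\phi^{\mathrm{pull}}\le 2^{i-1}$ gives exactly $-m+i+3$; this is the paper's argument. In this bookkeeping, $\phi^{\mathrm{push}}_i=3$ is spent as the ``before'' value of the moved set $S_i$, so it must not also be booked as a separate $-3$. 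No shift of $m$ is needed, and $m$ is not free anyway: it is the level where the chain stops, and it also governs the time bound.

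Your fallback routes would not close the gap. For $i<k<m$, $\phi^{\mathrm{pull}}_k$ is $0$ both before and after, because the old $S_k$ already has at least $2^k$ elements; so there is nothing to charge there. Flat levels also do not force the chain to stop. Suppose the old sizes along the chain alternate between the lower part $[2^j,2^{j+1})$ and the upper part $[2^{j+1},3\cdot 2^j)$ of the range at level $j$. Then a flat level recurs at every other level of an arbitrarily long chain, with per-level changes $0,-2,0,-2,\ldots$. One unit per level is lost only on average over consecutive levels, and that is exactly what the per-set regrouping captures.
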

\begin{proof}
    Assuming the recursive pushing continues until pushing elements into $S_m$
    for $i < m \leq \ell+1$, i.e., $|S_j| \geq 2^j$ for $i < j < m$ and
    $|S_m| < 2^m$ before the push. Each of the $m-i$ pushes takes
    worst-case constant time to either rename sets or to concatenate two
    doubly-linked lists.

    In the following, recall a bar over a value denotes the value after an operation.
    We first consider the change to the potential $\phiInsert$.
    Before the push, we have $|S_i|=6 \cdot 2^{i-1}$, i.e.,
    $\phiInsert_{i}=2^{i-1}$. Since after the push $|\after{S}_i|=0$, we have
    $\phiInsertAfter_{i}=0$. For $i+1 \leq j \leq m$, between $2^{j-1}$ and $3 \cdot 2^{j-1}$ elements
    are pushed to $S_j$. It follows that after the
    recursive push $|S_j| \leq 5 \cdot 2^{j-1}$, for $i\leq j\leq m$, i.e.,
    $\phiInsertAfter_{i}=\cdots=\phiInsertAfter_{m}=0$.
    It follows that $\Delta \phiInsert \leq -2^{i-1}$.
    
    For $\phiPull$, note that $|S_j|$ only changes for $i \leq j
    \leq m$ and only these $\phiPullAfter_{j}$ can change. For $i <
    j < m$, we have $|S_j| \geq 2^{j-1}$ before and after the push, i.e.,
    $\phiPull_{j}=\phiPullAfter_{j}=0$, $|S_m|$
    increases, i.e., $\phiPullAfter_{m}$ can only decrease, and
    the maximal increase in $\phiPull_{i}$ is from 0 to $2^{i-1}$, i.e.,
    $\phiPull \leq 2^{i-1}$.

    For $\phiPush$, we have $|\after S_i|=0$, i.e.,  
    $\phiPushAfter_{i}=0$. For $i < j < m$, since we have $\after S_j
    = S_{j-1}$ and $|S_{j-1}| \geq 2^{j-1}$ before the recursive push,
    we have $\phiPush_{j-1} \geq 1$ and
    $\phiPushAfter_{j} \leq \phiPush_{j-1}-1$. Since
    $\phiPushAfter_{i}=0$ and $\phiPushAfter_{m} \leq
    2$, it follows that $\Delta \phiPush \leq 2 - (m-i-1)=-m+i+3$.    
\end{proof}

\begin{lemma}
\label{lem:delete-min}
    A {\DELETEMIN} operation, excluding recursive pulling, takes
    worst-case $\Oh{1}$ time and the change in potential is $\Delta \Phi \leq \ell$.
\end{lemma}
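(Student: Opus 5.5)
Here \DELETEMIN, excluding the recursive pull, consists of three parts: locating and removing the minimum of the (now nonempty) $S_1$; the possible concatenation of $S_{\ell-1}$ and $S_\ell$ when $\ell>1+\lg n$; and the bookkeeping of $n$ and $\ell$. I would bound the time and the change in each of the three potentials $\phiInsert$, $\phiPush$, $\phiPull$ separately for these parts.

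\textbf{Time.} By the invariant, $|S_1|<3\cdot 2^1=6$, so scanning $S_1$ for its minimum and unlinking it costs $\Oh{1}$. Concatenating $S_{\ell-1}$ and $S_\ell$ is a constant-time splice of two doubly-linked lists. Discarding $p_\ell$ is also constant time. Together this gives the worst-case $\Oh{1}$ bound.

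\textbf{Potential of the removal.} Removing one element from $S_1$ can only decrease $\phiInsert_1$ and $\phiPush_1$. Every prefix sum $\sum_{j\le i}|S_j|$ with $i\ge 1$ drops by exactly one, so each $\phiPull_i$ increases by at most one. Summed over the sets this is at most $\ell$.

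\textbf{Potential of the concatenation.} This is the main obstacle, since the new $S_{\ell-1}$ may exceed the invariant bound $3\cdot 2^{\ell-1}$. That would make $\phiInsert_{\ell-1}$ and $\phiPush_{\ell-1}$ leave their stated ranges, while the old $\phiPush_\ell$ and $\phiInsert_\ell$ disappear.

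\begin{itemize}
\item \emph{Pull potential.} For $i\le\ell-1$ the prefix sums are unchanged, so $\phiPull_i$ is unchanged. The term $\phiPull_\ell\ge 0$ is removed, which can only help.
\item \emph{Push and insert potential.} I would exploit that the concatenation only happens when $\ell>1+\lg n$, i.e.\ $n<2^{\ell-1}$. Hence $|S_{\ell-1}|+|S_\ell|\le n<2^{\ell-1}$. The merged set therefore has $\phiPush_{\ell-1}=\lfloor(|S_{\ell-1}|+|S_\ell|)/2^{\ell-1}\rfloor=0$ and $\phiInsert_{\ell-1}=0$, and the invariant $|S_{\ell-1}|<3\cdot2^{\ell-1}$ also holds.
\end{itemize}

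So both potentials only decrease, and the concatenation contributes $\Delta\Phi\le 0$.

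\textbf{Conclusion.} Summing the parts, the total change is $\Delta\Phi\le\ell$, where $\ell$ is measured before the operation. I would also check that the $n$ used in the condition is the post-deletion value, which is the one the bound $n<2^{\ell-1}$ relies on.
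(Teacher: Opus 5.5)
Your proof is correct and follows the paper's argument essentially step for step: the operation takes $O(1)$ time because $|S_1|<6$, the removal increases each $\phiPull_j$ by at most one, and the bound $|S_{\ell-1}|+|S_\ell|\le n<2^{\ell-1}$ (with the post-deletion $n$) forces the merged set's push and insert potentials to $0$, so the concatenation cannot increase $\Phi$. One small slip: the concatenation does not leave the prefix sum at index $\ell-1$ unchanged but raises it to $n$, so $\phiPull_{\ell-1}$ can only decrease (as the paper says) rather than stay the same; your conclusion is unaffected.
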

\begin{proof}
    Removing the minimum from $S_1$ takes
    worst-case constant time, since $|S_1|\leq 6$. Decreasing $|S_1|$ can only
    decrease $\phiInsert_{1}$ and $\phiPush_{1}$, or
    leave them unchanged. All other $\phiInsert_{j}$ and
    $\phiPush_{j}$ remain unchanged. Decreasing $|S_1|$ by one can
    at most increase $\phiPull_{j}$ by one for all $1 \leq j \leq
    \ell$. The total change in potential becomes $\Delta\Phi \leq \ell$.

    Since the deletion of an element causes $n$ to decrease by one, we might have
    $\ell> 1+\lg n$. In this case {\DELETEMIN} in  worst-case constant time concatenates 
    $S_{\ell-1}$ and $S_{\ell}$, and $\ell$ decreases by one, ensuring $\ell \leq 1+\lg n$.
    Since $\ell> 1+\lg n$ before the concatenation,
    we have $|\after S_{\ell-1}|\leq n < 2^{\ell-1}$, and
    $\phiPush_{\ell-1}=\phiInsert_{\ell-1}=0$ and
    $\phiPull_{\ell-1}$ can only have decreased. It follows that $\Phi$
    can only decrease by performing this additional check during {\DELETEMIN}.
\end{proof}

\begin{lemma}
\label{lem:pull}
    The recursive pulling of elements into $S_1$, when $|S_1|=0$, takes
    worst-case $\Oh{m+2^{i-1}}$ time and the change in potential is $\Delta \Phi \leq -m-2^{i-1}+1$, for some $i$ and $m$, where $1 \leq i < m \leq \ell$.
\end{lemma}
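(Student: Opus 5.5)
The plan is to unfold the recursion explicitly and account for cost and potential level by level. Let $m$ be the smallest index with $S_m\neq\emptyset$, so that $S_1,\ldots,S_{m-1}$ are empty before the pull. We have $m\le \ell$ because $n\ge 1$ when \DELETEMIN is called. The recursion descends through the empty sets $S_1,\ldots,S_{m-1}$ and then fills them back in the order $S_{m-1},S_{m-2},\ldots,S_1$. Each fill of $S_j$ from $S_{j+1}$ is one of two kinds:
\begin{itemize}
\item a \emph{swap}, when $|S_{j+1}|\le 2^{j-1}$;
\item a \emph{selection}, when $|S_{j+1}|>2^{j-1}$, which takes the $2^{j-1}$ smallest elements of $S_{j+1}$.
\end{itemize}

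The first key observation is that the kinds of fill are monotone in $j$. If a selection happens at level $j$, then afterwards $|S_j|=2^{j-1}>2^{j-2}$, so the fill at level $j-1$ is again a selection. Hence there is a threshold $i$ such that levels $j\ge i$ (up to $m-1$) are swaps and levels $j<i$ are selections. Concretely, I take $i-1$ to be the highest level at which a selection occurs, with $i=1$ if there are none; the exact off-by-one will be fixed so the bounds come out as stated. Swaps cost $\Oh{1}$ each, giving $\Oh{m}$ in total. A selection at level $j$ costs $\Oh{2^j}$, since $|S_{j+1}|$ is $\Oh{2^{j}}$ (it is either a previously untouched set bounded by the invariant, or a set just produced with exactly $2^{j}$ elements). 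The selection costs therefore form a geometric sum dominated by the top selection level, which gives the time bound $\Oh{m+2^{i-1}}$ after scaling the potential by a suitable constant.

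For the potential, I will treat the three parts separately.

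\textbf{Insertion and push potentials.} These never increase. Every set $S_j$ with $j<m$ ends with at most $2^{j-1}$ elements, so $\phiInsert_j=\phiPush_j=0$ afterwards. The only other set that changes is the source set ($S_m$, or the set at the top of the selection chain), and it only loses elements, so its $\phiInsert$ and $\phiPush$ can only decrease.

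\textbf{Pull potential.} This is where the gain comes from. Before the pull, $\phiPull_j=2^{j-1}$ for every $j<m$, since the prefix sum is $0$. For $j\ge m$, the prefix sums $\sum_{k\le j}|S_k|$ do not change, because elements only move between $S_1,\ldots,S_m$; so those terms are unchanged.
\begin{itemize}
\item At a selection level $j$, afterwards $|S_j|=2^{j-1}$, so $\phiPull_j$ drops from $2^{j-1}$ to $0$. Summed over the selection levels this is a drop of $2^{i-1}-1$.
\item At a swap level $j$, the prefix sum afterwards equals the number $s\ge 1$ of elements carried down the swap chain. So $\phiPull_j$ drops by $\min\{s,2^{j-1}\}\ge 1$, giving a total drop of at least about $m-i$ over the swap levels.
\end{itemize}

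Adding the two contributions should give $\Delta\Phi\le -(m-i)-(2^{i-1}-1)$. The boundary level between the swap chain and the selection chain (the source set for the first selection) must supply the extra units needed to reach the stated $-m-2^{i-1}+1$. I will check this boundary level, where $S_{i}$ or $S_{i+1}$ both loses and gains elements, by hand.

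I expect the main obstacle to be the bookkeeping at this boundary and the precise indexing of $i$. In particular, I need the decrease there to be large enough, possibly by showing that the top selection level also drops by a full $2^{i-1}$, so that the constants match $-m-2^{i-1}+1$ exactly. It is also possible that the definition of $i$ has to be shifted by one relative to my first guess to make both the time bound and the potential bound line up.
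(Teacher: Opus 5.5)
Your accounting is the same as the paper's: swap levels versus selection levels, $\phi^{\mathrm{insert}}$ and $\phi^{\mathrm{push}}$ never increase, and $\phi^{\mathrm{pull}}_j$ is unchanged for $j\ge m$, drops by the full $2^{j-1}$ at each selection level, and by at least one at each swap level. But you stop exactly where the argument has to be closed, and the remedy you sketch cannot work. With your indexing (top selection level $i-1$) the sum is $-(m-i)-(2^{i-1}-1)$, which falls short of the target by $i$. The boundary set $S_i$ is not a separate source of potential. Its $\phi^{\mathrm{pull}}_i$ is already a term of your swap-level sum; you could sharpen that term from $1$ to $|S_m|$, but this cannot help when there are no swap levels at all. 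The $\phi^{\mathrm{insert}}$ and $\phi^{\mathrm{push}}$ changes at the source can all be zero. Worse, your $i$ equals $m$ whenever the very first fill (level $m-1$) is already a selection, which violates $i<m$. Concretely, take $S_1,\dots,S_{m-1}$ empty and $|S_m|=2^{m-1}$. Then every level is a selection and $\Delta\Phi=-(2^{m-1}-1)$ exactly, which does not satisfy $-m-2^{m-1}+1$.

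The missing idea is the choice of $i$. The time bound is only claimed up to a constant factor, so take $i$ to be the top selection level itself; the paper takes the largest $i<m$ with $2^{i-1}<|S_m|$. The cost is still $O(m+2^i)=O(m+2^{i-1})$. The source of that top selection is either $S_m$ (if $i=m-1$, of size $<3\cdot 2^m$) or the end of the swap chain holding $|S_m|\le 2^i$ elements. Your ``untouched or exactly $2^j$'' justification omits this second case. Your accounting now gives $\Delta\Phi\le-(2^i-1)-(m-1-i)$, while the target asks only $2^{i-1}$ of the selection levels. That slack absorbs the deficit of $i+1$ if and only if $2^{i-1}\ge i+1$, i.e.\ for $i\ge 3$.

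For $i\le 2$ no bookkeeping rescues the constant $+1$. With $m=2$ and $|S_2|=2$ one gets $\Delta\Phi=-1>-2$. With $|S_m|=1$ (only swaps) one gets $\Delta\Phi=-(m-1)$, while any admissible $i\ge 1$ demands $\le -m$. The paper's last step $-2^i-m+i+2\le-2^{i-1}-m+1$ silently fails in these cases too. What can be proved is $\Delta\Phi\le-m-2^{i-1}+2$, and that is all the amortized \textsc{Delete-Min} bound needs. It is equivalent to $i\le 2^{i-1}$, so it holds for every $i\ge 1$, taking $i=1$ in the all-swap case.
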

\begin{proof}
    Assume when pulling into $S_1$ we have $|S_1|=\cdots=|S_{m-1}|=0$ and
    $|S_m|>0$ for some $1 < m \leq \ell$. Furthermore, let $i$ be the largest $i <
    m$ where $2^{i-1} < |S_m|$. I.e., $S_m$ will be iteratively swapped with
    $S_{m-1},\ldots,S_{i+1}$, and for $1 \leq j \leq i$ we fill $S_j$ by
    selecting the $2^{j-1}$ smallest elements from $S_{j+1}$ in $\Oh{2^j}$ time. It
    follows that the worst-case time is $\Oh{m+\sum_{j=1}^i 2^j} =
    \Oh{m+2^i}$.
    
    After the pull, $|S_j|\leq 2^{j-1}$ for $1
    \leq j < m$, i.e.,
    $\phiInsertAfter_{j}=0$ and $\phiPushAfter_{j}=0$.
    Since $|S_m|$ decreases, we have that $\Delta \phiInsert_{m} \leq 0$
    and $\Delta \phiPush_{m} \leq 0$. 
    In total $\phiInsert$ and $\phiPush$ can only decrease. 
    
    For $\phiPull$, we have that
    $\phiPullAfter_{m},\ldots,\phiPullAfter_{\ell}$ are
    unchanged, since we only move elements from $S_m$ to $S_0,\ldots,S_{m-1}$.
    Since at least one element is moved to $S_0$, we have
    $\phiPull_{j}$ decreases by at least one for all $1 \leq j <
    m$. For $1 \leq j \leq i$, exactly $2^j$ elements are moved to $S_j$, and subsequently
    subsets hereof to $S_{j-1}, \ldots, S_1$, i.e., we have
    $\Delta\phiPull_{j}=-2^{j-1}$.  It follows that
    $\Delta\phiPull \leq -\sum_{j=1}^{i} 2^{j-1} - (m - 1 - i)  =
    -2^{i}-m+i+2\leq -2^{i-1}-m+1$, which is also a bound on $\Delta \Phi$.
\end{proof}

By \wref{lem:power-ell} we have $\ell=\Oh{\lg n}$. Lemmas~\ref{lem:power-ell}, \ref{lem:insert} and
\ref{lem:push} imply {\INSERT} takes amortized $\Oh{\lg\lg n}$ time.
Lemmas~\ref{lem:power-ell}, \ref{lem:decrease-key} and \ref{lem:push} imply {\DECREASEKEY}
takes amortized $\Oh{\lg\lg n}$ time. Finally, Lemmas~\ref{lem:power-ell},
\ref{lem:delete-min} and \ref{lem:pull} imply {\DELETEMIN} takes amortized
$\Oh{\ell}=\Oh{\lg n}$ time.

\begin{corollary}
    The amortized time for {\INSERT} and {\DECREASEKEY} is $\Oh{\lg\lg n}$ and the amortized time for {\DELETEMIN} is $\Oh{\lg n}$.
\end{corollary}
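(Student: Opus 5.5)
The plan is to combine the per-operation lemmas with the potential method, using the standard accounting that amortized cost equals actual cost plus $\Delta\Phi$. Choose a constant $c$ so that one unit of potential pays for $c$ units of actual work. Concretely, use the scaled potential $c\cdot\Phi$, where $c$ dominates the hidden constants in the worst-case bounds of Lemmas~\ref{lem:insert}--\ref{lem:pull}.

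First, note that $\Phi\ge 0$ always holds, since every summand is a $\max\{0,\cdot\}$ or a floor of a nonnegative quantity. Also, $\Phi=0$ for the empty heap. Hence the sum of amortized costs upper bounds the sum of actual costs.

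\textbf{\INSERT.} The base work costs $\Oh{\lg\ell}$ with $\Delta\Phi\le 2$ by Lemma~\ref{lem:insert}. A possible recursive push costs $\Oh{m-i}$ with $\Delta\Phi\le -(m-i)+3$ by Lemma~\ref{lem:push}. With $c$ large enough, the push's amortized cost is $\Oh{1}$, since the $-(m-i)$ term cancels its actual cost. The total is therefore $\Oh{\lg\ell+1}$, which is $\Oh{\lg\lg n}$ by Lemma~\ref{lem:power-ell}.

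\textbf{\DECREASEKEY.} This is identical, using Lemma~\ref{lem:decrease-key} in place of Lemma~\ref{lem:insert}.

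\textbf{\DELETEMIN.} The base work costs $\Oh{1}$ with $\Delta\Phi\le\ell$ by Lemma~\ref{lem:delete-min}; this includes the possible concatenation of $S_{\ell-1}$ and $S_\ell$, which only lowers $\Phi$. A pull, if needed, costs $\Oh{m+2^{i-1}}$ with $\Delta\Phi\le -m-2^{i-1}+1$ by Lemma~\ref{lem:pull}, so its amortized cost is $\Oh{1}$ for suitable $c$. The total is $\Oh{\ell}=\Oh{\lg n}$.

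The main subtlety is the order of operations in \DELETEMIN. The pull happens before the deletion, and $\ell$ may shrink afterwards. Each lemma's bound is stated relative to the state at that moment, so the $\Delta\Phi$ contributions simply add up, and $\ell\le 1+\lg n$ holds throughout (up to $n$ changing by one).

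A second point to verify is that $n$ here denotes the current size. For $n$ near $0$, $\lg n$ should be read as $\max\{1,\lg n\}$.
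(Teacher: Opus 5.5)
Your proposal is correct and follows essentially the same argument as the paper. You bound $\ell$ by Lemma~\ref{lem:power-ell}, combine Lemma~\ref{lem:insert} (resp.\ Lemma~\ref{lem:decrease-key}) with Lemma~\ref{lem:push}, and combine Lemma~\ref{lem:delete-min} with Lemma~\ref{lem:pull}; the negative terms in the push and pull bounds pay for their actual work once $\Phi$ is scaled by a suitable constant, which you make explicit and the paper leaves implicit. One small correction: if the empty heap is represented with $\ell=1$ and $S_1=\emptyset$, then $\Phi=\phi^{\mathrm{pull}}_1=1$ rather than $0$, but a constant initial potential does not affect the amortized bounds.
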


}%

% --------------------------------------------------------------------
\section{Conclusion}
\label{sec:open_problems}
% --------------------------------------------------------------------

%

We presented three priority queue implementation, following a simple partition-based heap framework
that uses elementary data structures and algorithms.
In particular for the simplest design, the \CSHeapsLong, we give a fully self-contained and complete description and analysis.

Replacing the binary search on an array by a binary search tree, the implementation works in the pointer-machine model and requires only constant indegree on nodes, and can hence be made persistent efficiently.

Due to its simplicity and linear structure, our implementation appears amenable to adaptations for external memory, but supporting pointer-based decrease-key operations becomes more complicated (as noted by~\cite{NavarroParedesPobleteSanders2011}).
We reserve this for future work.

Using our rule as the concatenation rule in quickheaps promises to yield a highly practical and robust priority queue implementation, combining the best of previously known quickheaps variants.
A detailed empirical evaluation is planned for future work.

The astute reader may have noticed that {\INSERT} and {\DECREASEKEY} take amortized constant time, except for the binary searches over the $\Oh{\lg n}$ pivots. By using a non-pointer-machine model data structure such as a fusion tree~\cite{DBLP:journals/jcss/FredmanW93} to store the pivots, the running time can be reduced to $\Oh{\lg_w \lg n}$, which under the standard assumption that $w=\Omega(\log n)$ results in $\Oh{1}$ amortized running times for these operations. 

%

%\ifkoma{
\myacknowledgements
%}{}

%\clearpage
% !BIB program = bibtex
\bibliography{references,bib}

\ifproceedings{}{
\ARXIV{
    \clearpage
    \appendix
    \ifkoma{\addpart{Appendix}}{}
    \section{LP Heap Proof-of-Concept Implementation}

To illustrate the simplicity of LP Heaps, we provide a fully self-contained Python implementation below.
The code is intended to serve as detailed specification of LP Heaps, tuned for readability.
It is not intended to be efficient.

We first show the main code for LP Heaps, the class \texttt{LazyHeap}; it uses a doubly-linked list and a selection method which we provide below.

\begin{lstlisting}[language=Python,gobble=4]
	class LazyHeap:
	    def __init__(self):
	        self.counter = 0
	        self.pivots = []
%	        self.sets = [LinkedList()]
	
	    def insert(self, value):
	        node = Node((value, self.counter))
	        self.counter += 1
	        self._insert(node)
	        return node
	
	    def delete(self, node):
	        self._remove(node)
	        self._merge_sets()
	
	    def extract_min(self):
	        S = self.sets[0]
	        if len(S) == 0:
	            raise ValueError("Heap is empty")
	        smallest_node = min(S.iter_nodes(), 
	                            key=lambda node: node.value)
	        S.remove(smallest_node)
	        if len(S) >= 2:
	            median, first, second = self._partition_set(S)
	            self.pivots = [median] + self.pivots
	            self.sets = [first, second] + self.sets[1:]
	        self._forget_pivots()
	        return smallest_node.value[0]
	
	    def decrease_key(self, node, new_value):
	        assert new_value <= node.value[0]
	        self._remove(node)
	        node.value = (new_value, node.value[0])
	        self._insert(node)
	
	
	    def _insert(self, node):
	        S = self._find_set(node.value)
	        S.append(node)
	
	    def _remove(self, node):
	        S = self._find_set(node.value)
	        S.remove(node)
	
	
	    def _find_set(self, value):
	        if len(self.sets) == 1 or value < self.pivots[0]:
	            return self.sets[0]
		        low = 0
	        high = len(self.pivots)
	        while low + 1 < high:
	            mid = (low + high) // 2
	            if self.pivots[mid] <= value:
	                low = mid
	            else:
	                high = mid
	        assert value >= self.pivots[low]
	        return self.sets[low + 1]
	
	    def _partition_set(self, S):
	        median = select(S, -(-len(S) // 2))
	        smaller = []
	        larger = []
	        for node in S.iter_nodes():
	            if node.value < median:
	                smaller.append(node)
	            else:
	                larger.append(node)
	        first = LinkedList()
	        for node in smaller:
	            first.append(node)
	        second = LinkedList()
	        for node in larger:
	            second.append(node)
	        return median, first, second
	
	    def _forget_pivots(self):
	        new_pivots = []
	        new_sets = [self.sets[0]]
	        outside = 0
	        for pivot, S in zip(self.pivots, self.sets[1:]):
	            if len(new_sets[-1]) == 0 or 
	                    outside > len(new_sets[-1]) + len(S):
	                new_sets[-1].concat(S) // merge
	            else:
	                outside += len(new_sets[-1])
	                new_pivots.append(pivot)
	                new_sets.append(S)
	        if len(new_sets[-1]) == 0 and len(new_sets) >= 2:
	            new_pivots.pop()
	            new_sets.pop()
	        self.pivots = new_pivots
	        self.sets = new_sets
	
\end{lstlisting}

The code above uses a circularly closed doubly-linked list;
since the node objects are used as pointers/references to inserted elements,
we provide the simple implementation here:

\begin{lstlisting}[language=Python,gobble=4,deletekeywords={next}]
	class Node:
	    def __init__(self, value):
	        self.value = value
	        self.prev = None
	        self.next = None
	
	class LinkedList:
	    def __init__(self):
	        self.head = Node(None)
	        self.tail = Node(None)
	        self.head.next = self.tail
	        self.tail.prev = self.head
	        self.size = 0
	    
	    def __len__(self):
	        return self.size
	
	    def append(self, node):
	        self.tail.prev.next = node
	        node.prev = self.tail.prev
	        node.next = self.tail
	        self.tail.prev = node
	        self.size += 1
	    
	    def remove(self, node):
	        node.prev.next = node.next
	        node.next.prev = node.prev
	        node.prev = None
	        node.next = None
	        self.size -= 1
	    
	    def concat(self, other):
	        t = self.tail.prev
	        h = other.head.next
	        t.next = h
	        h.prev = t
	        self.tail = other.tail
	        self.size += other.size
	
	    def iter_nodes(self):
	        at = self.head.next
	        while at is not self.tail:
	            yield at
	            at = at.next
	
	    def __iter__(self):
	        for node in self.iter_nodes():
	            yield node.value
\end{lstlisting}

To be self-contained, we also give a simple implementation of the median-of-medians linear-time deterministic selection method~\cite{BlumFloydPrattRivestTarjan1973}.
Note that the method is not in-place and should be considered a ``pedagogical'' implementation.
(A practical implementation would certainly want to use a randomized quickselect variant instead.)

\begin{lstlisting}[language=Python,gobble=4]
	def select(iterable, k):
	    def inner(arr, k):
	        buckets = [sorted(arr[i:i+5]) for i in range(0, len(arr), 5)]
	        
	        if len(buckets) == 1:
	            return buckets[0][k]
	
	        medians = [buck[len(buck) // 2] for buck in buckets]
	        pivot = inner(medians, len(medians) // 2)
	
	        smaller = []
	        larger = []
	
	        for value in arr:
	            if value < pivot:
	                smaller.append(value)
	            else:
	                larger.append(value)
	        
	        if k < len(smaller):
	            return inner(smaller, k)
	        else:
	            return inner(larger, k - len(smaller))
	
	    return inner(list(iterable), k)
\end{lstlisting}

}
}

\end{document}